\definecolor{fluorescentyellow}{rgb}{0.8, 1.0, 0.0}
\DeclareMathOperator{\poly}{poly}
\DeclareMathOperator{\cost}{cost}
\DeclareMathOperator{\demand}{demand}
\DeclareMathOperator{\sparsity}{sparsity}
\DeclareMathOperator{\dist}{dist}
\DeclareMathOperator{\diameter}{diameter}
\newcommand{\RR}{\mathbb{R}}
\newcommand{\ZZ}{\mathbb{Z}}
\newcommand{\E}{\mathbb E}
\newcommand{\partition}{\pi}
\newcommand{\set}[1]{\{#1\}}
\newcommand{\eps}{\varepsilon}
\newcommand{\e}{\eps}
\newcommand{\nf}{\nicefrac}
\newcommand{\Y}{Z}
\newcommand{\sse}{\subseteq}
\newcommand{\amenable}{amenable\xspace}
\newcommand{\friendly}{friendly\xspace}
\newcommand{\calA}{\mathcal{A}}
\newcommand{\calC}{\mathcal{C}}
\newcommand{\calE}{\mathcal{E}}
\newcommand{\calP}{\mathcal{P}}
\newcommand{\calT}{\mathcal{T}}
\newcommand{\calV}{\mathcal{V}}
\newcommand{\bT}{\mathcal{T}}
\newcommand{\bbF}{\mathbb{F}}
\newcommand{\barp}{\kappa}
\newtheorem{theorem}{Theorem}[section]
\newtheorem{definition}[theorem]{Definition}
\newtheorem{proposition}[theorem]{Proposition}
\newtheorem{lemma}[theorem]{Lemma}
\newtheorem{fact}[theorem]{Fact}
\newtheorem{observation}[theorem]{Observation}
\newtheorem{claim}[theorem]{Claim}
\newtheorem{subclaim}[theorem]{Subclaim}
\newtheorem{corollary}[theorem]{Corollary}
\newtheorem{assumption}[theorem]{Assumption}
\newcommand{\ignore}[1]{}
\newcounter{note}[section]
\newcommand{\initOneLiners}{%
    \setlength{\itemsep}{0pt}
    \setlength{\parsep }{0pt}
    \setlength{\topsep }{0pt}
      \usecounter{myLISTctr}
}
\newenvironment{OneLiners}[1][\ensuremath{\bullet}]
{\begin{list}
    {#1}
    {\initOneLiners}}
  {\end{list}}
  \title{A Quasipolynomial $(2+\eps)$-Approximation for Planar Sparsest Cut}
  \author{Vincent Cohen-Addad\thanks{Google Research, Zurich,
      Paris.} \and 
    Anupam Gupta\thanks{Carnegie Mellon University, Pittsburgh PA
      15217.} \and
    Philip N. Klein\thanks{Brown University.  Research
      supported by NSF Grant CCF-1841954.} \and
    Jason Li$^\dagger$
  }
  \date{}
\newif\ifverbosity
\newenvironment{explanation}{}{}
\begin{document}

\maketitle

\begin{abstract}
  The (non-uniform) sparsest cut problem is the following %
  graph-partitioning problem: given a ``supply'' graph, and demands on
  pairs of vertices, delete some subset of supply edges to minimize
  the ratio of the supply edges cut to the total demand of the pairs
  separated by this deletion. %
  Despite much effort,
  there are only a handful of nontrivial classes of supply graphs for
  which constant-factor approximations are known.

  We consider the problem for planar graphs, and give a
  $(2+\eps)$-approximation algorithm that runs in quasipolynomial
  time.  Our approach defines a new structural decomposition of an
  optimal solution using a ``patching'' primitive.
  We combine this
  decomposition with a Sherali-Adams-style linear programming
  relaxation of the problem, which we then round. This should be 
  compared with the polynomial-time approximation algorithm of Rao~(1999), which
  uses the metric linear programming relaxation and $\ell_1$-embeddings, 
    and achieves an $O(\sqrt{\log n})$-approximation in polynomial time.
\end{abstract}

\section{Introduction}
\label{sec:introduction}

In the \textit{(non-uniform) sparsest cut} problem, we are given a
``supply'' graph $G$ with an edge-cost function
$\cost: E(G) \to \RR_{\geq 0}$ and a \emph{demand} function
$\demand: V \times V \to \RR_{\geq 0}$.  For a nonempty proper subset $U$ of the
vertices of $G$, the corresponding \emph{cut} is the edge subset
$\delta_{G}(U)=\{\set{x,y} \in E(G) \mid x\in U, y\not\in U\}$. The cost of
this cut is $\cost(U) := \sum_{e\in \delta_{G}(U)} \cost(e)$, and the
\emph{demand separated} by it is
$\demand(U) := \sum_{\set{u,v}: |\set{u,v} \cap U| =1} \demand(u,v)$. The
\emph{sparsity} of the cut given by $U$ is $\Phi(U) := \cost(U)/\demand(U)$.  The goal
is to find a set $U$ that achieves the \emph{minimum sparsity} of this
instance, defined as:
\begin{gather}
  \Phi^* %
  := \min_{U: U \neq \emptyset, V(G)} \Phi(U) = \min_{U: U \neq \emptyset, V(G)} \frac{\cost(U)}{\demand(U)} . \label{eq:def-sc} %
\end{gather}
(The
special case with unit demand between every pair of vertices is called
the \emph{uniform} sparsest cut, discussed in
\S\ref{sec:related-work}.) Finding sparse cuts is a natural clustering
and graph decomposition subroutine used by divide-and-conquer
algorithms for graph problems, and hence has been widely studied.

The problem is NP-hard~\cite{MS90}, and
so %
the %
focus has been on the design of approximation
algorithms. This line of work started with an $O(\log D \log C)$-approximation
given by Agrawal, Klein, Rao, and Ravi~\cite{KleinRAR95,KleinARR90},
where $D$ is the sum of demands and $C$ the sum of capacities.
After a several developments, the
best approximation factor currently known is 
$O(\sqrt{\log n} \log \log n)$ due to Arora, Lee, and
Naor~\cite{ALN05}. Moreover, the problem is inapproximable
to any constant factor, assuming the unique games
conjecture~\cite{CKKRS05,KV05}.

Given this significant roadblock for general graphs, a major research effort has sought
$O(1)$-approx\-imation algorithms for ``interesting'' classes of graphs. In
particular, the problem restricted to the case where $G$ is
\emph{planar} has received much attention over the years.
The
best approximation bound for this special case before the current work
remains the $O(\sqrt{\log n})$-approximation of Rao~\cite{Rao99},
whereas the best hardness result merely rules out a
$\frac{1}{0.878 + \eps} \approx (1.139 - \eps)$-approximation assuming
the unique games conjecture~\cite{GTW13-spcut}.  One source of the
difficulty is that the ``demand'' graph, i.e., the support of the
demand function, is not necessarily planar.\footnote{If the union of
  this graph with the input graph (the supply graph) is planar then
  the problem admits better approximation algorithms.}
  Indeed, the hardness results are
obtained by embedding general instances of max-cut in this demand graph.

The sparsest cut problem has been studied on even more specialized
classes of graphs in order to gain insights for the planar case: see,
e.g., \cite{OS81,GNRS99,CGNRS01,CJLV08,CSW10,LR10,CFW12}. Again, despite
successes on those specialized classes (see~\S\ref{sec:related-work}
for a discussion), getting a constant-factor approximation for
non-uniform sparsest-cut on arbitrary planar graphs has remained open so 
far. Our main result is such an algorithm, %
at the expense of quasi-polynomial runtime:
\begin{theorem}[Main Theorem]
  \label{thm:main}
  There is a randomized algorithm for the non-uniform sparsest cut
  problem on planar graphs which achieves a $(2+\eps)$-approximation
  in quasi-polynomial time $O(n^{\log^2 n/\poly(\eps)})$.
\end{theorem}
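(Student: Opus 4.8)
The plan is to split the $(2+\eps)$ guarantee into three pieces of work: a preprocessing step, a structural ``patching'' lemma about optimal planar cuts, and a Sherali--Adams relaxation that we solve in quasipolynomial time and round along a hierarchical decomposition. First I would normalize the instance: scale the edge costs to polynomially bounded integers, discard demand pairs whose contribution is negligible, and guess the value $\Phi^*$ up to a $(1+\eps)$-factor and a lower bound $\demand^*$ on the demand separated by the optimum (only $\poly(n)/\eps$ candidates remain for each), so that it suffices to exhibit one cut $U$ with $\cost(U)\le (2+\eps)\,\Phi^*\cdot\demand(U)$. In parallel, build a hierarchical decomposition $\calT$ of the planar supply graph using shortest-path separators: a laminar family of regions of recursion depth $O(\log n)$, each region $R$ bounded by $O(1)$ shortest paths of $G$, on each of which we mark $O(\log n/\poly(\eps))$ designated \emph{portals}; a region $R$ at depth $d$ then has its boundary $\partial R$ assembled from $O(d)$ shortest-path segments and hence carries at most $\tau:=O(\log^2 n/\poly(\eps))$ portals in total.

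The technical heart is a \emph{patching lemma}: there is a cut $\widehat U$ with $\cost(\widehat U)\le (1+\eps)\,\Phi^*\cdot\demand(\widehat U)$ such that, viewing $\delta(\widehat U)$ in the planar dual as a family of closed curves, every curve crosses the boundary of each region $R\in\calT$ only at portals of $R$, and the trace of $\widehat U$ on $\partial R$ is a union of $O(\log n/\poly(\eps))$ arcs. I would prove this starting from an optimal cut and repeatedly patching it against the separator paths, processing $\calT$ top-down: whenever the current cut crosses a separator path $P$ more often than the portal budget permits, reroute the cheaper arcs of the cut along $P$ (adding one or two parallel copies of a sub-path of $P$ so as to preserve the even-subgraph structure of the cut) and snap the surviving crossings to portals. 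Since each separator path is reused $O(\log n)$ times across $\calT$ and the cost of an individual patch is charged against the expensive arcs it removes, the net cost increase is at most $O(\eps)\cdot\Phi^*\cdot\demand(\widehat U)$ while the demand separated changes only by a lower-order amount. The obstacle here is that the objective is a \emph{ratio}: one must control the effect of the surgery on the numerator and on the denominator simultaneously, and argue that parity and planarity make ``cheap'' patches genuinely cheap relative to $\Phi^*\cdot\demand$.

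With the patching lemma in hand, I would set up the level-$\tau$ Sherali--Adams relaxation over $0/1$ side-indicators $x_v$, $v\in V(G)$ --- equivalently a level-$\tau$ pseudodistribution $\tilde\E$ over cuts --- asking for feasibility of the budget constraint $\tilde\E[\cost(\delta(U))]\le (1+\eps)\,\Phi^*\cdot\tilde\E[\demand(U)]$ and $\tilde\E[\demand(U)]\ge \demand^*$, augmented by the structural constraints coming from $\calT$ (within each region the cut ``acts'' only at portals, consistently with the patching lemma). Feasibility is witnessed by (the point mass on) $\widehat U$, and the relaxation has size $n^{O(\tau)}$, hence is solvable in time $O(n^{\log^2 n/\poly(\eps)})$. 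To round, process the regions of $\calT$ top-down, conditioning one region at a time: when we reach $R$, the cut has already been committed on $\partial R$ by the ancestors of $R$, and this, together with the $O(\log n/\poly(\eps))$ fresh portal decisions of $R$, involves at most $\tau$ variables, so the Sherali--Adams solution supplies a consistent conditional distribution from which we sample. Since each cut edge is decided at the (unique smallest) region containing both its endpoints, a routine calculation gives $\E[\cost(\delta(U))]=\tilde\E[\cost(\delta(U))]\le (1+\eps)\,\Phi^*\cdot\tilde\E[\demand(U)]$.

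It remains to lower bound the expected demand separated, and here one only obtains $\E[\demand(U)]\ge (\tfrac12-O(\eps))\cdot\tilde\E[\demand(U)]$: the conditioning process tracks the separation of a pair $\{u,v\}$ faithfully only down to their least common ancestor region in $\calT$, while the decisions made deeper in the two subtrees can still flip the side of $u$ or of $v$, and only a coupling argument along $\calT$ guarantees that separation survives with probability at least $\tfrac12-O(\eps)$. Combining the two displayed inequalities, $\tilde\E[\demand(U)]\le (2+O(\eps))\,\E[\demand(U)]$, so $\E\big[\cost(\delta(U))-(2+\eps)\,\Phi^*\cdot\demand(U)\big]\le 0$; since $\E[\demand(U)]\ge(\tfrac12-O(\eps))\demand^*>0$, some outcome of the rounding is a cut of sparsity at most $(2+\eps)\,\Phi^*$. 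Running the randomized rounding (and the outer guessing) $\poly(n)$ times and returning the sparsest cut found proves the theorem. I expect the two genuinely hard steps to be the ratio-preserving patching lemma and pinning down the constant $\tfrac12$ in the demand bound of the rounding; the rest is bookkeeping over the decomposition and the Sherali--Adams hierarchy.
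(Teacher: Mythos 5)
Your LP-and-rounding half closely parallels the paper (a Sherali--Adams-style program over the decomposition, top-down conditional sampling, exact preservation of expected cut cost, a factor $\tfrac12$ loss on separated demand via a decoupling/independence argument at the least common ancestor, and a repetition argument to convert the ratio of expectations into an actual cut). The genuine gap is in your structural step. You build the hierarchy from shortest-path separators of the supply graph and prove a ``patching lemma'' by rerouting the optimal cut along separator paths and snapping its crossings to portals, charging the surgery to ``the expensive arcs it removes.'' This charging has no basis: the lengths of the separator paths (and hence the portal spacing and the detours incurred by snapping) are completely unrelated to the cost of the optimal cut, which can be arbitrarily small compared to any boundary path in the graph. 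In TSP-style portal arguments the detour is charged against the tour's own length crossing the separator (or against a spanner whose total weight is $O(\mathrm{OPT})$); for sparsest cut no such isoperimetric relation holds, and one can force an optimal cut to cross a separator path many times through very cheap edges while every reroute along the path or to the nearest portal costs far more than $\eps\cdot\Phi^*\cdot\demand(\widehat U)$. This is precisely the obstacle the problem poses: restricting the optimum to few crossings through portals can blow up its cost, so the portal-based patching lemma you assume is not available. A second, related weakness is that you try to maintain a \emph{single} cut and claim the separated demand changes only by ``a lower-order amount''; parity-preserving reroutes can unseparate many demand pairs, and you give no argument controlling the denominator.

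The paper circumvents both issues differently. It works in the planar dual, where the optimum is a simple cycle, and builds the hierarchy from \emph{random low-diameter (padded) decompositions} of each cluster rather than from separator paths. Patching is invoked only when the cycle's cost inside a cluster exceeds $\Theta(Z)\cdot\Delta_\ell$ (with $\Delta_\ell$ the cluster diameter); it then adds shortest paths \emph{inside the cluster} (each of length at most $\Delta_\ell$) to special vertices spaced $\Theta(Z)\cdot\Delta_\ell$ apart along the cycle, so the added cost is an $O(1/Z)$ fraction of the cycle's own cost in that cluster --- the charging is against the cycle itself, not against separator lengths. Crucially, the cycle is split into a \emph{collection} of cycles: a parity argument shows every demand pair separated by the optimum is separated by some cycle in the collection, and an averaging lemma shows one cycle in the collection has sparsity at most $(1+\eps)$ times optimal, which sidesteps your single-cut denominator problem. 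Finally, because the padded decomposition controls crossings only in expectation, the paper takes $\Theta(\log n)$ independent partitions per cluster (a ``nondeterministic'' hierarchical clustering) and lets the LP also select one partition per cluster; your deterministic separator tree has no analogue of this, but that is moot until a valid structural lemma replaces the portal-based one.
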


\subsection{Related Work}
\label{sec:related-work}

The sparsest cut problem is NP-hard, even for the uniform
case~\cite{MS90}. The initial approximation factor of
$O(\log D \log C)$ \cite{KleinRAR95,KleinARR90}, where $D$ is the sum of demands and
$C$ the sum of capacities, was
improved to $O(\log^2 n)$ \cite{PT95}, %
to $O(\log n)$ \cite{LLR95,AR98}, $O(\log^{3/4} n)$ \cite{CGR05}, and
finally to the current best
$O(\sqrt{\log n} \log \log n)$ \cite{ALN05}.

Many special classes of graphs admit constant-factor approximations
for non-uniform sparsest cut. These include outerplanar~\cite{OS81},
series-parallel~\cite{GNRS99,CSW10,LR10}, $k$-outerplanar
graphs \cite{CGNRS01}, graphs obtained by $2$-sums of
$K_4$ \cite{CJLV08}, graphs with constant pathwidth and  related
families \cite{LS09,LeeS13}, and bounded treewidth graphs \cite{CKR10,
  GTW13-spcut}.  %
Most of these approximations are with respect to the ``metric''
relaxation LP, via $\ell_1$-embeddings~\cite{LLR95}. The bounded
treewidth results are exceptions: they use a stronger LP, and it
remains open whether the metric relaxation has a constant integrality
gap even for graphs of treewidth $3$, which are also planar. Neither
our result, nor the results of~\cite{CKR10, GTW13-spcut}, shed light
on this question.  A ``meta-result'' of Lee and
Sidiropoulos~\cite{LS09} proves that if planar graphs embed into
$\ell_1$ with constant distortion, and if constant-distortion
$\ell_1$-embeddability is closed under clique-sums, then all
minor-closed families have constant-factor approximations with respect
to the standard LP.

\emph{Uniform Version.}  For the \emph{uniform} version of sparsest
cut, $O(1)$-approximations exist for classes of graphs that exclude
non-trivial minors, via low-diameter decompositions~\cite{KPR93}.
Park and Philips~\cite{Park1993FindingMC} solve the uniform problem on
planar graphs in $O(n^3)$ time; see Abboud et
al.~\cite{AbboudSparsest} for a speedup.  Abboud et al.~\cite{AbboudSparsest}
give an
$O(1)$-approximation in near-linear time, improving upon the
near-linear time $O(\log n)$-approximation of Rao~\cite{Rao92}.
Finally, Patel~\cite{Patel13} showed that the uniform sparsest-cut
problem can be exactly solved in time $n^{O(g)}$ for graphs that embed
into a surface of genus at most $g$.  The approximation factor for the
uniform problem on general graphs has been improved from
$O(\log n)$~\cite{LR10} by rounding LP relaxations, to
$O(\sqrt{\log n})$~\cite{ARV04} by rounding SDP relaxations.

\emph{Integrality Gaps for the Basic LP:} There are many results on
lower bounds and limitations to known techniques; see, e.g., works by
Khot and Vishnoi~\cite{KV05}, Chawla et al.~\cite{CKKRS05}, Lee and
Naor~\cite{LeeNaor06}, and others. Naor and
Young~\cite{NaorY17,NaorY18} show lower bounds on integrality gap of
the semidefinite program for sparsest-cut that almost match the upper
bounds of~\cite{ALN05}.

\emph{Hardness of Approximation:} an $\alpha$-approximation to
non-uniform sparsest cut on series-parallel (i.e., treewidth-$2$)
graphs, which are also planar, gives an $(1/\alpha)$-approximation to
the \textsc{MaxCut} problem~\cite{GTW13-spcut}. The results
of~\cite{Hastad01,KKMO} now imply that sparsest cut is hard to
approximate better than $\frac{17}{16}$ unless P=NP, and to better
than $1/0.878$ assuming the Unique Games conjecture. It is known that
for every $\e>0$, there are graph families with treewidth $f(\e)$ on
which sparsest-cut is hard to approximate better than a factor of
$(2-\eps)$, assuming the Unique games conjecture---but these hard
instances are not known to be planar~\cite{GTW13-spcut}.

\subsection{Techniques}
\label{sec:techniques}

We outline the main conceptual steps of our algorithm, and some
intuition for why these are needed:

\begin{enumerate}
\item \emph{Duality.} One advantage of planar instances is the duality between cuts
  and cycles. Indeed, there is an optimal solution that corresponds to
  a simple cycle $C_0$ in the dual graph $G^*$. So it suffices to find
  some cycle in the dual with low total edge-cost, that separates a lot
  of demand (which is now between pairs of faces), see \Cref{lem:simple-cut}.

\item \emph{Low-Complexity Clusterings.} Suppose we efficiently find a hierarchical partition of
  the dual into subgraphs called \emph{clusters}, such that whenever some cluster $K$ splits into
  subclusters $K_1, K_2, \ldots K_t$, the cycle $C_0$ crosses between
  these subclusters at most $O(\log n)$ times. (I.e., $C_0$ has ``low
  complexity'' with respect to the partition.) Then, we can
  find a $2$-approximation to the sparsest ``low-complexity''
  solution using a linear program, as described in \Cref{item:lp} below.
  
\item \emph{Finding these Low-Complexity Clusterings.} \label{item:low-comp} How do we find such a good hierarchical decomposition? We
  repeatedly find low-diameter decompositions with decreasing radii.
  If the cost of the edges of $C_0$ that lies within each cluster
  $K$ is at most $O(\eps^{-1} \log n)$ times the diameter of $K$, then
  performing a low-diameter decomposition of $K$ causes only
  $O(\eps^{-1} \log n)$ edges to be cut in expectation. I.e., the expected number of
  times $C_0$ crosses between subclusters of $K$ is small, as
  desired. (Observe we get a small number of crossings only in
  expectation: we'll come back to this in \Cref{item:nondet}).

\item \emph{Patching.} However, the cost of $C_0$ within some cluster $K$ may exceed
  $\Theta(\eps^{-1} \log n)$ times the cluster diameter. In this case
  we \emph{patch} the cycle, adding some collection of shortest paths
  within the cluster $K$, that locally break this cycle into several
  smaller cycles. We elaborate on this operation in the paragraphs
  following this outline.  If we imagine maintaining a collection of
  cycles, starting with the single cycle $C_0$, the patching replaces
  one cycle in this collection by many. Moreover the cost of this
  collection increases by a factor of at most $(1 + \eps/\log n)$ in each
  level of the recursion. 

\item \emph{Controlling the Cost.} Since the ratio of the largest
  to the smallest edge cost can be assumed to be polynomially bounded (see Section~\ref{sec:preliminaries})
  and the diameter decreases by a factor of $2$ at each level of the decomposition,
  there are $O(\log n)$ levels of recursion. This means the total
  increase in the cost of the entire collection of cycles is at most
  $(1+\eps/\log n)^{O(\log n)} = (1+ O(\eps))$.  So
  the sparsest simple cycle from this collection (and its
  corresponding simple cut in the primal) has 
  sparsity at most $(1+ O(\eps))$ times the optimal sparsity. %

\item \emph{``Non-Deterministic'' Hierarchical
    Decompositions.}\label{item:nondet} Recall, in
  \Cref{item:low-comp} we ensured the low-complexity property only in
  expectation. We need it to hold for all the clusters of the
  decomposition, and so with high probability for a single cluster. To
  achieve this we choose
  $\Theta(\log n)$ independent low-diameter decompositions for each
  cluster, and apply the procedure recursively to each part of each
  partition. This is reminiscent of an idea
  introduced by Bartal et al.~\cite{BartalGK16}. It ensures that one of these partitions has low
  complexity, with high probability. We call this a \emph{non-deterministic hierarchical
    decomposition}, and show that it has total size
  $n^{O(\eps^{-1}\log n)}$.

\item \emph{The Linear Program.}\label{item:lp} All the above steps
  were part of the structure lemma. They show the existence (whp.) of
  a near-optimal low-complexity solution with respect to the
  non-deterministic hierarchical decomposition. It now remains to
  select one of the decompositions at each level, and to find this
  cycle that has low complexity with respect to this restricted
  decomposition tree. To do this, we write a linear program, and round
  it. The high level ideas are similar to those used for the
  sparsest-cut problem on bounded-treewidth graphs, and we elaborate
  on these in the paragraph below.
\end{enumerate}

Before we proceed, a caveat: the actual algorithm differs from the above outline
in small but important details; e.g., we coarsen the low-diameter
decompositions to ensure that each partition has few parts, which
means the diameter of our clusters does not necessarily drop
geometrically. Since these details would complicate the above
description, we introduce them only when needed.

We now give more details about two of the key pieces: the patching
lemma, and the linear program.

\subsubsection{A \emph{Patching Lemma} for Planar Graphs}
An important ingredient of our approach is a \emph{patching lemma} for
non-uniform sparsest cut in planar graphs. As with most patching lemmas,
our patching lemma (and the associated
patching procedure) are only used for the analysis of our algorithm;
Their goal is to help exhibit a near-optimum solution that is ``well-structured''.
There are some similarities
to the patching lemmas of Arora~\cite{Arora97} and Bartal et
al.~\cite{BartalGK16} for the Traveling Salesman Problem (TSP) in
Euclidean space and doubling metrics respectively, but there are some
important differences. Given a cluster $K$ of diameter $D$ and a given cycle $C$
(thought of as the optimum solution), the goal of our patching
procedure is to break $C$ into a collection of cycles such that
(1)~for each demand pair separated by $C$, there is a cycle in the
collection separating that pair; (2)~for each cycle $C'$ in the
collection, the total cost of the edges of $C'$ inside cluster $c$ is
$O(\eps^{-1} \log n)$; and (3) the total cost of the edges of the
cycles of the collection inside cluster $c$ is at most
$(1+\eps/\log n)$ times the total cost of the edges of cycle $C$
inside $K$. %
We explain our patching procedure
in \Cref{fig:techniques}; please read the captions to follow along.

\begin{figure}[h]
  \begin{subfigure}{.46\textwidth}
    \centering
    \includegraphics[width=2in,page=1]{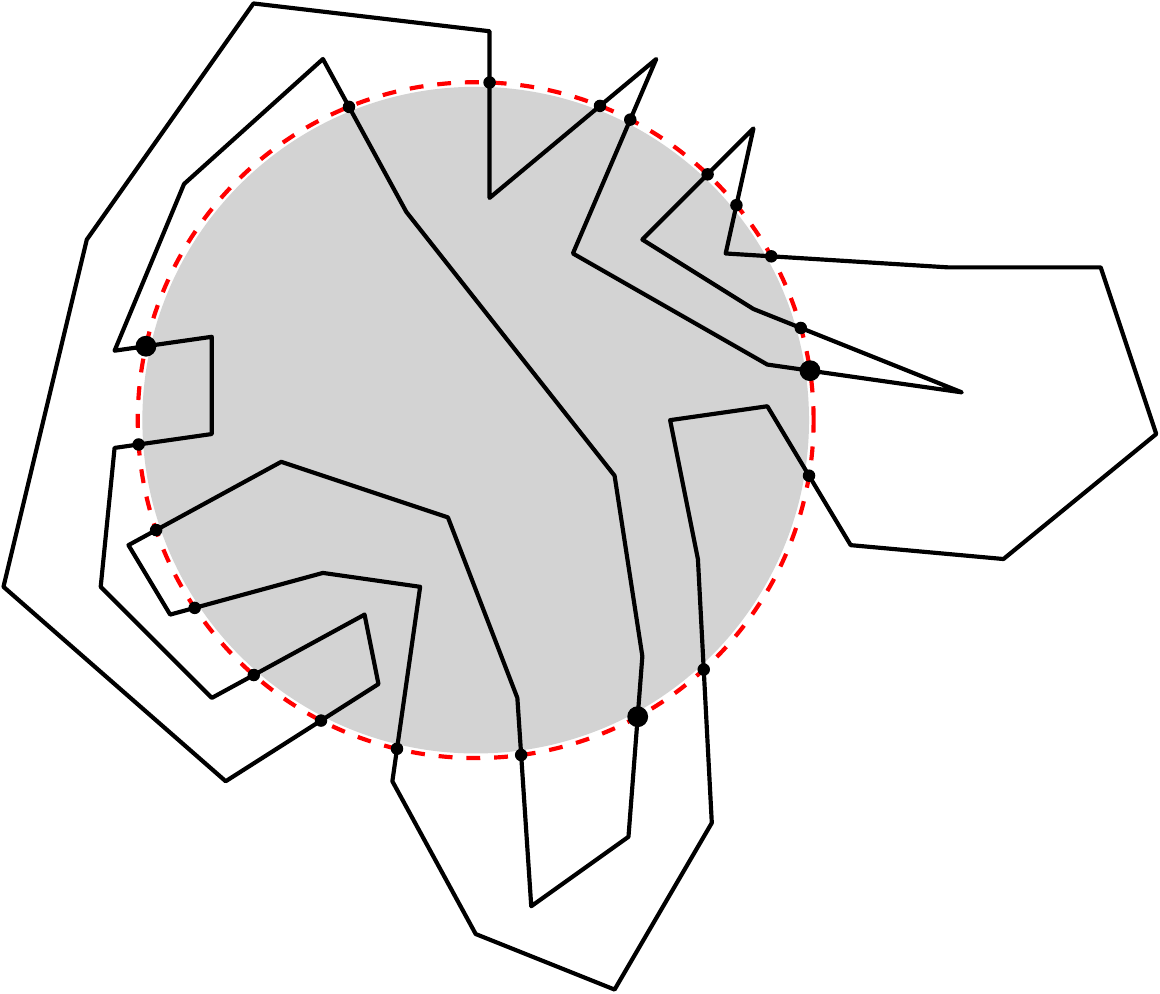}
    \caption{The original cycle $C$ in black, and the cluster $K$ in
      dashed red. The three black dots are vertices on the optimal
      solution such that the edge-cost
      of the optimal solution $C$ within $K$
      between any two consecutive dots is $\Theta(\frac{\log n}{\eps} D)$,
      where $D$ is the diameter of $K$.}
  \end{subfigure}%
  \hfill
  \begin{subfigure}{.46\textwidth}
  \centering
    \includegraphics[width=2in,page=2]{figures/algo.pdf}
    \caption{Since the cost of $C$ within $K$ between consecutive
      black dots is $\Theta(\eps^{-1}\log n)$ times the diameter of
      $K$, adding the shortest paths (shown in blue) from some root
      node $r$ to the dots on $C$ only increases the cost by a factor
      of $(1+\nicefrac{\eps}{\log n})$.}
\end{subfigure}\\
  \bigskip
  
\centering
\begin{subfigure}{.32\textwidth}
  \centering~~~~~
  \includegraphics[width=1.5in,page=4]{figures/algo.pdf}
\end{subfigure}
\begin{subfigure}{.32\textwidth}
  \centering~~~~~
  \includegraphics[width=1.5in,page=5]{figures/algo.pdf}
\end{subfigure}
\begin{subfigure}{.32\textwidth}
  \centering~~~~~
  \includegraphics[width=1.5in,page=3]{figures/algo.pdf}
\end{subfigure}
\caption{The addition of these paths splits the original tour into
  three tours (possibly self-intersecting), shown in purple, orange,
  and red. One of these three tours achieves sparsity at most that of
  the original tour (up to a $(1+\nicefrac{\eps}{\log n})$ factor).
  Importantly, the cost within $K$ of this good tour is
  $O(\frac{\log n}{\eps} D)$, by construction, which is %
  proportional to the diameter of the
  cluster. Repeating this argument recursively, and using that we have
  $O(\log n)$ levels, the total error is bounded by a multiplicative
  factor of $(1+\eps)$.  }
\label{fig:techniques}
\end{figure}

Obtaining a patching lemma for a planar problem seems surprising to
us. While low-diameter decompositions have been widely used to obtain
approximation schemes for problems in Euclidean spaces of constant
dimension
(e.g., for the traveling salesman problem, or facility location), it
was unclear how to use low-diameter decompositions effectively to obtain
approximation schemes for these problems on planar graphs. One hurdle
in applying this technique to planar graphs has been that that the isoperimetric
inequality does not hold, and so the cost of the edges of the cluster boundary cannot be
related to the diameter of the cluster. Without isoperimetry, there
are examples where forcing the optimum solution to
make a small number of crossings between child clusters (through
\emph{portals}, for example) incurs a huge increase in cost. %
One could get a coarser control on the problem structure,
e.g., by bounding the diameter, which gave constant-factor approximations for
related problems such as multicut or $0$-extension,
but only an $O(\sqrt{\log n})$-approximation for
sparsest cut. %
The approach of enriching the optimum
circumvents this issue for the non-uniform sparsest cut
problem.

\subsubsection{A Linear Program over these Clusterings}
While the natural candidate for finding the best solution over a
non-deterministic hierarchical decomposition would be a dynamic
program, we currently do not know how to get a good approximation
for sparsest cut using dynamic programming, even for %
simpler graph classes such as treewidth-$2$ graphs.
However, we can use linear programs as in~\cite{CKR10,GTW13-spcut}:
we
add linear constraints capturing that the LP has to ``choose'' one of
the $\Theta(\log n)$ potential sub-partitionings at each level of the
decomposition. Our linear program has variables that capture all the
``low-complexity'' partial solutions within each cluster, and
constraints that ensure the consistency of the solution and of the
partitioning over all $O(\log n)$ levels of the entire decomposition.
The details appear in \S\ref{sec:sa}.

\section{Notation and Preliminaries}
\label{sec:preliminaries}

Let $[n]$ denote $\set{1,\ldots, n}$. Given an instance of non-uniform
sparsest-cut, 
the following lemma allows us to restrict edge costs and demands on
the pairs to be integers in a bounded range.
It is proved in \S\ref{app:prelims}.

\begin{lemma}
  \label{lem:aspectratio}
  An $\alpha$-approximation algorithm $\calA$ with runtime $T_n$ for
  sparsest cut instances on $n$-vertex planar graphs with the edge
  costs in $[n^2]$ and demands for pairs in %
  $[n^3]$ implies an
  $(1+o(1))\alpha$-approximation algorithm with running time
  $O(n^3 \cdot T_n)$ for planar sparsest cut instances with arbitrary
  non-negative edge costs and demands.
\end{lemma}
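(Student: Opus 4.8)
\emph{Proof plan.} This is a standard ``aspect-ratio reduction'': we contract or zero out the parts of the instance that cannot affect an optimal cut, rescale, and round to integers, while \emph{guessing} two parameters of the (unknown) optimal cut $U^*$ so that rounding costs only a $(1+o(1))$ factor. First, the degenerate cases. If $G$ is disconnected, then either some demand pair straddles two components---so the cut isolating one of them has cost $0$ and positive demand, hence $\Phi^*=0$, and we return it---or all demand stays within components, in which case $\Phi^*=\min_i \Phi^*(G_i)$ by the mediant inequality and we recurse on each (smaller) component. So assume $G$ is connected. Next, $\Phi^*=0$ if and only if some positive-demand pair lies in two different connected components of the subgraph of positive-cost edges; this is checkable in polynomial time, and if it holds we return the corresponding cut of cost $0$. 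Henceforth $\Phi^*>0$; zero-cost edges may still be present, but the rounding below will absorb them.

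\textbf{Guessing and the reduced instance.} Fix an optimal cut $U^*$ and put $c^* := \max_{e\in\delta_G(U^*)}\cost(e)$ and $d^* := \max\{\demand(u,v) : U^*\text{ separates }\{u,v\}\}$, both positive. Since a cut of an $n$-vertex planar graph has at most $3n$ edges and there are fewer than $n^2$ pairs,
\[
   c^* \;\le\; \cost(U^*) \;\le\; 3n\,c^*, \qquad\qquad d^* \;\le\; \demand(U^*) \;\le\; n^2\,d^*.
\]
We cannot compute $c^*$ and $d^*$, so we enumerate all at most $3n$ candidate values for $c^*$ (the distinct edge costs) and all fewer than $n^2$ candidate values for $d^*$ (the distinct positive pair-demands). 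For each guess $(c^*,d^*)$ we form an $O(n)$-vertex planar instance $I'$ by: (i)~contracting every edge of cost $>c^*$, since none can lie in $\delta_G(U^*)$ (keeping a multigraph, or subdividing parallel edges to stay simple); (ii)~setting $\demand(u,v):=0$ whenever $\demand(u,v)>d^*$, since $U^*$ separates no such pair; (iii)~multiplying every cost by $n^2/c^*$ and every demand by $n^3/d^*$; (iv)~rounding each cost up, and each demand down, to the nearest integer. After (iii) all costs lie in $(0,n^2]$ and all demands in $[0,n^3]$, so after (iv) they are integers in $[n^2]$ and in $\{0,1,\ldots,n^3\}$ respectively. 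We run $\calA$ on $I'$, pull the returned set back to a cut of $G$ (it respects all contractions), and output the sparsest cut found over all $O(n^3)$ guesses. This is $O(n^3)$ calls to $\calA$ plus polynomial overhead, i.e.\ time $O(n^3\cdot T_n)$.

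\textbf{Why the right guess suffices.} Consider the iteration with the true $(c^*,d^*)$. Contracting edges lying strictly inside $U^*$ or its complement does not change $\cost(U^*)$, and zeroing pairs that $U^*$ does not separate does not change $\demand(U^*)$, so after (iii) the scaled cut has cost $\ge n^2$ (it retains the scaled image of the cost-$c^*$ edge) and demand $\ge n^3$; then (iv) perturbs these by a factor $1\pm o(1)$ on $U^*$, since the ceilings add at most $|\delta(U^*)|=O(n)=o(n^2)$ to the cost and the floors subtract at most $O(n^2)=o(n^3)$ from the demand. Writing $\lambda := (n^2/c^*)/(n^3/d^*)$, this gives $\Phi_{I'}(U^*)\le(1+o(1))\lambda\Phi^*$, hence $\Phi^*_{I'}\le(1+o(1))\lambda\Phi^*$. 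In the other direction, for \emph{any} cut $U$ of $I'$ with preimage $\bar U$ in $G$, rounding costs up and demands down (and zeroing demands) only increases $\Phi$, so $\Phi_{I'}(U)\ge\lambda\,\Phi_G(\bar U)$. Thus the cut $\bar{U'}$ pulled back from $\calA(I')$ satisfies $\Phi_G(\bar{U'})\le\lambda^{-1}\Phi_{I'}(U')\le\lambda^{-1}\alpha\Phi^*_{I'}\le(1+o(1))\alpha\Phi^*$, and since every guess yields a legitimate cut of $G$, the best cut over all guesses is a $(1+o(1))\alpha$-approximation.

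\textbf{Main obstacle.} The delicate point is the interaction between integrality and the target magnitudes. To round to integers while losing only a $(1+o(1))$ factor, both $\cost(U^*)$ and $\demand(U^*)$ must be super-constant after scaling, which forces the scaling factors---and hence the cost and demand ranges---up to $\Theta(n^2)$ and $\Theta(n^3)$; since a~priori bounds on $\cost(U^*)$ and $\demand(U^*)$ have only a polynomially-narrow window, one must \emph{guess} $c^*$ and $d^*$ to pin these sums down. A secondary subtlety is that the rounding operations are direction-sensitive: over-costly edges must be contracted (not merely capped) and costs only rounded up, while over-large demands must be zeroed and demands only rounded down---the opposite choices would let $\calA$ return a cut that looks sparse in $I'$ but is far from sparse in $G$.
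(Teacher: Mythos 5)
Your proposal is correct and follows essentially the same route as the paper's proof: guess the largest edge cost in the optimal cut and the largest demand it separates, contract costlier edges and delete larger demands, rescale to the target ranges $[n^2]$ and $[n^3]$ rounding costs up and demands down, and enumerate the $O(n^3)$ guesses, noting that rounding in these directions only increases every cut's sparsity while the optimal cut loses only a $(1+o(1))$ factor. The only trivial point to tighten is that zero-cost edges must be rounded up to a strictly positive value (as the paper does by rounding to the closest multiple \emph{greater} than the cost) so that all final costs lie in $\{1,\ldots,n^2\}$; the $O(n)$-edge error bound absorbs this unchanged.
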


For a connected graph $G$ and subset $U \sse V(G)$ of vertices, let
$\delta_G(U)$ denote the set of edges $e$ of $G$ such that $U$
contains exactly one endpoint of $e$.  A set of edges of this form is
called a \emph{cut}.  We drop the subscript when the
graph $G$ is unambiguous.
Let $G[U]$ denote the subgraph induced by $U$. A nonempty
proper cut $\delta_G(U)$ is called a \emph{bond} or a \emph{simple
  cut} if both induced subgraphs $G[U]$ and $G[V(G) \setminus U]$ are
connected. The following simple result from, e.g., \cite{OS81} is
proved
in~\S\ref{app:prelims}.

\begin{proposition}
  \label{lem:simple-cut}
  The optimal sparsity
  in~\eqref{eq:def-sc} can be achieved by a set $U$ such that
  $\delta_G(U)$ is a simple cut.
\end{proposition}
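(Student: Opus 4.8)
The plan is to show that among all minimum-sparsity cuts, at least one is a bond. Let $U$ achieve the optimal sparsity $\Phi^*$, and suppose $\delta_G(U)$ is not simple, so that (without loss of generality) $G[U]$ is disconnected into connected components $U_1, U_2, \ldots, U_k$ with $k \geq 2$. The key observation is that $\cost(U) = \sum_{i=1}^k \cost(U_i)$: every edge of $\delta_G(U)$ leaves exactly one component $U_i$, and since the $U_i$ are distinct components of $G[U]$ there are no edges between them, so the cuts $\delta_G(U_i)$ are disjoint and partition $\delta_G(U)$. On the demand side, however, demand is only superadditive in the wrong direction: a pair with one endpoint in $U_i$ and the other in $U_j$ ($i \neq j$) is separated by neither $\delta_G(U_i)$ nor $\delta_G(U_j)$, so we only get $\demand(U) \geq \demand(U_i)$ for each $i$, hence $\sum_i \demand(U_i) \geq \demand(U)$ need not hold. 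So I instead argue via an averaging/mediant inequality directly against $U$ itself.

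The cleaner route: since $\demand(U) \leq \sum_i \demand(U_i)$ is false in general, I will instead use that $\cost(U) = \sum_i \cost(U_i)$ together with $\demand(U_i) \leq \demand(U)$ for every $i$. Pick the index $i$ minimizing the ratio $\cost(U_i)/\demand(U_i)$ (discarding any $U_i$ with $\demand(U_i) = 0$, which can be merged or dropped since such a piece contributes cost but no demand and removing it only helps). By the mediant inequality, $\min_i \frac{\cost(U_i)}{\demand(U_i)} \leq \frac{\sum_i \cost(U_i)}{\sum_i \demand(U_i)} = \frac{\cost(U)}{\sum_i \demand(U_i)} \leq \frac{\cost(U)}{\demand(U)} = \Phi^*$, where the last inequality uses $\sum_i \demand(U_i) \geq \demand(U)$ — which does hold, because every pair separated by $\delta_G(U)$ has its endpoints in $U$ and in $V \setminus U$, so the $U$-side endpoint lies in some unique $U_i$, and that pair is then separated by $\delta_G(U_i)$ (the other endpoint is outside $U \supseteq U_i$); thus each such pair is counted at least once on the right. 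Hence some component $U_i$ already achieves sparsity $\leq \Phi^*$, so it is also optimal, and it has strictly fewer "defects."

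To finish, I would set up a potential argument: among all optimal cuts, choose $U$ minimizing $|V(G[U])| + |V(G[V \setminus U])|$ counted with multiplicity of components, or more simply minimizing the number of connected components of $G[U]$ plus the number of connected components of $G[V \setminus U]$. By the argument above, if $G[U]$ is disconnected we can replace $U$ by one of its components $U_i$, which is still optimal; this strictly decreases the component count of the smaller side (and, since $U_i \subsetneq U$, does not increase the relevant potential — one should check $G[V \setminus U_i]$ is not worse, which follows because $V \setminus U_i \supseteq V \setminus U$ and we are tracking the sum over both sides carefully, or one simply alternates: symmetrically, if $G[V \setminus U]$ is disconnected, replace by the complement of one of its components). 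Iterating, we reach an optimal cut that is a bond.

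The main obstacle is getting the bookkeeping of the potential right so that the iteration provably terminates: naively replacing $U$ by a component $U_i$ fixes one side but the other side $V \setminus U_i$ could conceivably have more components than $V \setminus U$. The clean fix is to interleave the two operations and argue on a monovariant such as $\min(|U|, |V \setminus U|)$, or to observe that once we pick a component of $G[U]$ and then a "co-component" of the complement, etc., the vertex set strictly shrinks on alternating sides and cannot cycle; since $V(G)$ is finite the process halts at a cut where both sides induce connected subgraphs. I expect this is exactly the argument in \cite{OS81}, and the write-up should just make the monovariant explicit.
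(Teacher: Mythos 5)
Your core step is exactly the paper's: decompose the disconnected side $G[U]$ into components $U_1,\dots,U_k$, use $\sum_i \cost(U_i)=\cost(U)$ and $\sum_i \demand(U_i)\ge\demand(U)$, and conclude via the mediant inequality that some $U_i$ has sparsity at most $\Phi(U)$; the paper then just applies the same argument to the complement. One internal slip: the detour in your opening paragraph is false and contradicts what you correctly prove later --- a demand pair with endpoints in $U_i$ and $U_j$, $i\ne j$, is separated by \emph{both} $\delta_G(U_i)$ and $\delta_G(U_j)$, not by neither, and this is precisely why the superadditivity $\sum_i\demand(U_i)\ge\demand(U)$ holds; that passage should simply be deleted. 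Your treatment of components with $\demand(U_i)=0$ is fine, since $\demand(U)>0$ forces some component to carry positive demand.

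The genuine gap is the termination step, which you flag but do not close, and whose resolution is a one-liner you never state: the components $U_j$ with $j\ne i$ are distinct components of $G[U]$, so they have no edges to $U_i$, and all their boundary edges go into $V\setminus U$. Hence in $G[V\setminus U_i]=G[(V\setminus U)\cup\bigcup_{j\ne i}U_j]$ each discarded $U_j$ attaches (by connectivity of $G$) to some component of $G[V\setminus U]$, so the number of components on the complement side cannot increase. This is what makes your proposed potential (components of $G[U]$ plus components of $G[V\setminus U]$) strictly decrease --- your stated justification, that $V\setminus U_i\supseteq V\setminus U$, does not by itself imply this --- and it also shows no iteration is really needed: one pass on $U$ followed by one pass on the complement already produces a bond, since reattaching the discarded pieces cannot disconnect the (now connected) chosen side; this is what the paper's terse ``apply the same argument to its complement'' relies on. Your alternative monovariant $\min(|U|,|V\setminus U|)$ does not work as stated, since replacing the larger side by one of its components can increase that minimum.
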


\paragraph{Planar duality}
We work with a connected \emph{planar embedded graph}, a graph $G=(V,E)$
together with an embedding of $G$ in the plane. 
There is a corresponding
planar embedded graph $G^*$, the \emph{planar dual} of $G$.  The vertices of
the dual are the faces of $G$, and the faces of the dual are the
vertices of $G$.  The edges in the dual correspond one-to-one
with the edges in $G$, and we identify each edge in the dual $G^*$ with
the corresponding edge in $G$, and hence having the same cost. The demand function can be
considered as mapping pairs of faces in $G^*$ to non-negative integers.

\begin{proposition}[Simple Cycle (e.g., \cite{planarity}, Ch.~5)]
  \label{lem:simple-cycle} Let $G$ be
  a connected planar embedded
graph, and let $G^*$ be its dual. A set of edges forms a simple cut
$\delta_G(U)$ in $G$ if and only if the edges form a simple cycle $C$
in $G^*$.
\end{proposition}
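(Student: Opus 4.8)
The plan is to prove both implications by reducing to the classical identity between the cut space of $G$ and the cycle space of $G^*$, and then to identify the word ``simple'' on each side with a \emph{minimality} condition. Throughout I identify $E(G)$ with $E(G^*)$, so a set of edges can be read either in $G$ or in $G^*$.

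First I would recall the standard cut--cycle correspondence (the content of \cite[Ch.~5]{planarity}): under the above identification, the cuts of $G$, i.e.\ the sets $\delta_G(U)$ with $U\sse V(G)$, are exactly the Eulerian (even-degree) subgraphs of $G^*$; equivalently, the cut space of $G$ and the cycle space of $G^*$ coincide as subspaces of $\mathbb{F}_2^{E}$. One clean route is the tree--cotree correspondence between spanning trees of $G$ and spanning trees of $G^*$ on complementary edge sets, which matches fundamental cuts of $G$ with fundamental cycles of $G^*$; taking $\mathbb{F}_2$-spans gives the space equality. In particular, every nonempty cut of $G$ is an edge-disjoint union of simple cycles of $G^*$, and every nonempty Eulerian subgraph of $G^*$ equals some cut $\delta_G(U)$ of $G$.

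For the forward direction, suppose $\delta_G(U)$ is a simple cut, so $G[U]$ and $G[V(G)\setminus U]$ are connected. By the correspondence, $\delta_G(U)$ is an Eulerian subgraph of $G^*$, and it remains to check that it is a single \emph{simple} cycle. The key point is that $\delta_G(U)$ is a \emph{minimal} nonempty cut: if $\emptyset\neq\delta_G(W)\sse\delta_G(U)$ with $\emptyset\neq W\subsetneq V(G)$, partition $V(G)$ into $U\cap W,\ U\setminus W,\ W\setminus U,\ V(G)\setminus(U\cup W)$; the containment forbids edges of $\delta_G(W)$ inside $U$ and inside $V(G)\setminus U$, and connectedness of $G[U]$ and $G[V(G)\setminus U]$ then forces $W\in\{U,\ V(G)\setminus U\}$, i.e.\ $\delta_G(W)=\delta_G(U)$. (This is the textbook equivalence ``$\delta_G(U)$ is a bond $\iff$ both sides are connected.'') Since a minimal nonempty Eulerian subgraph of any graph is a simple cycle --- it contains a simple cycle, and anything strictly larger fails minimality --- we conclude $\delta_G(U)$ is a simple cycle of $G^*$.

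For the converse, let $C$ be a simple cycle of $G^*$. By the correspondence, $E(C)=\delta_G(U)$ for some $U$, and $\emptyset\neq U\subsetneq V(G)$ because $E(C)\neq\emptyset$ and $G$ is connected. If $G[U]$ were disconnected, write $U=U_1\sqcup U_2$ with no $G$-edge between $U_1$ and $U_2$; then $\delta_G(U)=\delta_G(U_1)\sqcup\delta_G(U_2)$, with each $\delta_G(U_i)$ a nonempty Eulerian subgraph of $G^*$, so $E(C)$ is an edge-disjoint union of two nonempty Eulerian subgraphs --- contradicting that $C$ is a single simple cycle. The same argument rules out $G[V(G)\setminus U]$ being disconnected, so $\delta_G(U)$ is a simple cut. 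The hard part here is essentially just the cut--cycle correspondence, which is classical; the remaining care goes into the two equivalences (``both sides connected $\iff$ minimal cut'' and ``minimal Eulerian subgraph $\iff$ simple cycle'') and the degenerate case $|\delta_G(U)|=1$, a bridge of $G$ dual to a loop of $G^*$, which is harmless once one allows a loop as a length-one cycle (or assumes $G$ is $2$-edge-connected).
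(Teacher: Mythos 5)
Your proposal is correct, but note that the paper does not actually prove this proposition: it is invoked as a classical fact with a citation to \cite{planarity}, Ch.~5, so there is no in-paper argument to compare against. What you supply is essentially the standard textbook derivation: take the identity between the cut space of $G$ and the cycle space of $G^*$ (which is the real content of the cited chapter, and which you are entitled to quote or to sketch via tree--cotree duality), and then match ``simple'' on both sides with minimality. Your two auxiliary equivalences are sound: (i) for connected $G$, a nonempty proper cut $\delta_G(U)$ with both $G[U]$ and $G[V(G)\setminus U]$ connected is inclusion-minimal among nonempty cuts (your four-block partition argument is the standard one), and (ii) a nonempty even edge set contains a simple cycle, so a minimal one is a simple cycle. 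One small point worth making explicit in (ii): your minimality is established among \emph{cuts}, so to conclude you need that the simple cycle contained in $\delta_G(U)$ is itself a cut of $G$ --- this is exactly the space equality you state in your second paragraph, but the logical dependence should be spelled out. Similarly, in the converse direction the punchline ``an edge-disjoint union of two nonempty even subgraphs cannot be a simple cycle'' deserves its one-line reason (a proper nonempty subset of a cycle's edges has a vertex of odd degree). Your caveat about bridges of $G$ dualizing to loops of $G^*$ (and, symmetrically, parallel edges of $G^*$ giving $2$-cycles) is apt, since $G^*$ is a multigraph here; with the convention that loops and doubled edges count as simple cycles, which is the convention the paper implicitly uses, the statement holds without extra hypotheses.
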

Fix a face $f_\infty$ of $G^*$, i.e., a vertex of $G$. Think of it as
the infinite face of $G^*$.  Let $C$ be a simple cycle in $G^*$.  By
Proposition~\ref{lem:simple-cycle}, the edges of $C$ are the edges of
some simple cut $\delta_G(U)$ where $U$ is a set of vertices of $G$
such that $f_\infty$ is not in $U$.  
We define the set of faces \emph{enclosed} by $C$ to be $U$, and we
denote this set of faces by $U(C)$.

In view of \Cref{lem:simple-cut,lem:simple-cycle},
seeking a sparsest cut in a planar embedded graph is equivalent to
seeking a simple cycle $C$ in $G^*$ with the objective 
\begin{equation} \label{eq:cycle-objective}
\min \frac{\text{cost of edges of } C}{\sum\set{\demand(\set{f_1,f_2}) \mid f_1
    \text{ enclosed by } C, f_2 \text{ not enclosed by }C}}
\end{equation}
Indeed, the value of the objective~(\ref{eq:cycle-objective}) is the
sparsity of $U(C)$ as defined in the introduction.

\begin{restatable}{lemma}{PreserveDemand}
  \label{lem:preserve-demand-separation}
  Suppose $C$ is a simple cycle in $G^*$. Let $C_1, \ldots, C_k$ be
  cycles such that each edge occurs an odd number of times in $C_1,
  \ldots, C_k$ iff it appears in $C$.  Then, for
  each pair $x,y$ of vertices of $G$, if $x$ and $y$ are separated by
  $C$ then they are separated by some $C_i$.
\end{restatable}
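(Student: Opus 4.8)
The plan is to recast the statement over $\bbF_2$, so that it becomes a short consequence of the linearity of ``crossing parity,'' together with the planar-duality fact that the cycle space of $G^*$ is the cut space of $G$. First I would pin down what ``separated by a (possibly self-intersecting) cycle'' should mean. For a closed walk $W$ in $G^*$, let $\chi(W)\sse E$ be the set of edges traversed by $W$ an odd number of times. Counting traversed edge-endpoints at each vertex of $G^*$ shows that $\chi(W)$ has even degree everywhere, i.e.\ $\chi(W)$ lies in the cycle space of $G^*$; since $G$ is connected, planar duality identifies this space with the cut space of $G$, which is exactly $\set{\delta_G(U) : U \sse V(G)}$, so $\chi(W) = \delta_G(U_W)$ for some $U_W$. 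I then say that $W$ \emph{separates} a pair $x,y$ of vertices of $G$ if exactly one of $x,y$ lies in $U_W$; by the standard fact that a walk crosses a cut an odd number of times iff its endpoints lie on opposite sides, this is equivalent to $|E(P)\cap\chi(W)|$ being odd for some — hence every — $x$-$y$ path $P$ in $G$ (path-independence holds since two $x$-$y$ paths differ by an element of the cycle space of $G$, which is orthogonal to the cut space). For the simple cycle $C$ we have $\chi(C)=E(C)=\delta_G(U(C))$, so this matches the enclosure-based notion of separation implicit in~\eqref{eq:cycle-objective}.

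Next I would translate the hypothesis: the statement that an edge $e$ lies in $C$ iff $e$ is traversed an odd number of times \emph{in total} by $C_1,\dots,C_k$ says precisely that $E(C) = \chi(C_1)+\dots+\chi(C_k)$ as vectors in $\bbF_2^E$ (here $E(C)=\chi(C)$ because $C$ is simple). Now suppose $x,y$ are separated by $C$ and fix any $x$-$y$ path $P$ in $G$; then $|E(P)\cap E(C)|$ is odd. Since for each $e\in E(P)$ we have $e\in E(C)$ iff an odd number of the sets $\chi(C_i)$ contain $e$, summing these parities over $e\in E(P)$ gives $|E(P)\cap E(C)| \equiv \sum_{i=1}^{k}|E(P)\cap\chi(C_i)| \pmod 2$; as the left-hand side is odd, some summand $|E(P)\cap\chi(C_i)|$ is odd, i.e.\ $C_i$ separates $x$ and $y$, as desired.

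I do not expect a genuine obstacle here — the lemma is essentially $\bbF_2$ bookkeeping. The only substantive input, and the only place planarity is used, is the duality step guaranteeing that $\chi(C_i)$ is a cut of $G$: this is what makes ``$C_i$ separates $x,y$'' well-defined for the self-intersecting cycles produced by the patching procedure and consistent with the meaning of separation for simple cycles. The remaining care is purely in handling repeated edges correctly when reducing the ``odd number of times'' hypothesis modulo $2$.
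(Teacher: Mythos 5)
Your proposal is correct and follows essentially the same route as the paper: both arguments fix an $x$-$y$ path $P$ in $G$ and run a mod-$2$ count of crossings, using the hypothesis to transfer the odd crossing parity from $C$ to some $C_i$ (you merely make explicit, via the cycle-space/cut-space duality, the definition of "separation" for the possibly self-intersecting $C_i$, which the paper leaves implicit).
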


\begin{restatable}{lemma}{ChooseCycle}
  \label{lem:good-cycle}
  Suppose $C_0$ is a simple
  cycle in $G^*$ such that $U(C_0)$ has sparsity $s$.  Let $C_1, \ldots, C_k$
  be simple cycles such that every demand separated by $C_0$ is
  separated by at least one of $C_1, \ldots, C_k$.  Suppose the total cost of edges in $C_1, \ldots,
  C_k$ (counting multiplicity) is at most $1+\eps$ times the
  cost of edges in $C_0$.  Then there is some cycle
  $C_i$ such that $U(C_i)$ has sparsity at most $(1+\eps) s$.
\end{restatable}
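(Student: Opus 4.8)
The plan is to argue by a weighted averaging (pigeonhole) argument on sparsities, exploiting the fact that cost is subadditive over the collection while demand separation is covered by the collection. Let $D_0 := \demand(U(C_0))$ be the demand separated by $C_0$, so that $s = \cost(C_0)/D_0$. For each $i$, let $\demand_i$ be the demand separated by $C_i$, and note $\Phi(U(C_i)) = \cost(C_i)/\demand_i$ (using Proposition~\ref{lem:simple-cycle} to pass between the simple cycle $C_i$ and the simple cut it defines, and $\demand_i > 0$ since each $C_i$ is assumed to separate some demand — if not, discard it). First I would assign to each demand pair $\{f_1,f_2\}$ separated by $C_0$ one ``responsible'' index $\sigma(f_1,f_2) \in \{1,\ldots,k\}$ among those $C_i$ that separate it; such an index exists by hypothesis. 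Then $D_0 = \sum_i D_0^{(i)}$, where $D_0^{(i)}$ is the total demand of pairs assigned to $i$, and clearly $D_0^{(i)} \le \demand_i$ for every $i$.

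Next, suppose for contradiction that $\Phi(U(C_i)) > (1+\eps) s$ for every $i$, i.e.\ $\cost(C_i) > (1+\eps)\, s\, \demand_i \ge (1+\eps)\, s\, D_0^{(i)}$ for all $i$. Summing over $i$ and using $D_0 = \sum_i D_0^{(i)}$ gives
\[
  \sum_{i=1}^k \cost(C_i) \;>\; (1+\eps)\, s \sum_{i=1}^k D_0^{(i)} \;=\; (1+\eps)\, s\, D_0 \;=\; (1+\eps)\,\cost(C_0).
\]
But the hypothesis says $\sum_i \cost(C_i) \le (1+\eps)\,\cost(C_0)$, a contradiction. Hence some $C_i$ satisfies $\Phi(U(C_i)) \le (1+\eps) s$, which is the claim.

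I expect the argument to be this clean, so the only genuine point of care — and the step I would flag as the ``obstacle,'' though a minor one — is the bookkeeping that turns ``every demand pair separated by $C_0$ is separated by some $C_i$'' into the inequality $\sum_i D_0^{(i)} = D_0$: one must make the assignment $\sigma$ a genuine function (each separated pair charged to exactly one $C_i$) so that the demands partition rather than merely cover, which is what prevents double-counting on the right-hand side. A secondary detail is ensuring each retained $C_i$ has $\demand_i > 0$ so that $\Phi(U(C_i))$ is well defined; cycles separating no demand can simply be dropped from the collection at the outset without affecting either hypothesis (the cost bound only gets easier, and the covering property is unaffected since such cycles contribute nothing to the cover). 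Nothing else in the proof uses planarity beyond Proposition~\ref{lem:simple-cycle}, which we invoke only to make sense of $U(C_i)$ and its sparsity.
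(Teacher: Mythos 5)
Your proof is correct and follows essentially the same averaging-by-contradiction argument as the paper; the only difference is that you charge each separated demand pair to a single cycle to get a partition, whereas the paper simply uses the covering inequality $\sum_i \demand(U(C_i)) \ge \demand(U(C_0))$ directly, which makes your assignment function $\sigma$ unnecessary but harmless.
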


\paragraph{Low-Diameter Decompositions and $D$-Bounded Partitions.}
A \emph{low-diameter decomposition scheme} takes a graph and
randomly breaks it
into components of ``bounded'' (strong) diameter, so that
the probability of any edge being cut is small. Concretely,
let $H$ be a graph
with edge costs $\cost(e) \geq 0$. Let $\dist_H(x,y)$ be the
shortest-path distances according to these costs. The \emph{(strong)
  diameter} of a subset $U \sse V(H)$ is the maximum distance between
any two nodes in $U$, measured according to 
$\dist_{H[U]}(\cdot, \cdot)$, the distances within this induced subgraph $H[U]$.  The
\emph{strong diameter} of a partition
$\mathcal P = \{V_1, V_2, \ldots, V_s\}$ of the vertex set $V(H)$ is
the maximum strong diameter of any of its parts. A partition is
\emph{$D$-bounded} if it has strong diameter at most $D$. The next
claim follows from
\cite[Theorem~3]{AbrahamGMW10},\cite[Theorem~4]{AGGNT14-sicomp}:
\begin{theorem}
  \label{thm:padded-dec}
  There exists a constant $\beta > 0$ such that given any undirected
  weighted planar graph $H$ and parameter $D > 0$, there exists a
  distribution $\Pi$ over $D$-bounded vertex partitions such that
  \[ \Pr[ \text{$u,v$ fall in different components} ] \leq \beta
    \;\frac{\cost(\set{u,v})}{D} \]
  for any edge $uv \in E(H)$.
  Moreover, this distribution is sampleable in polynomial time.
\end{theorem}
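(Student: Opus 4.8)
The plan is to obtain \Cref{thm:padded-dec} by specializing the general theory of strong-diameter padded decompositions for minor-closed graph families. Since every planar graph excludes $K_5$ and $K_{3,3}$ as minors, the strong-diameter decomposition theorems of~\cite{AbrahamGMW10,AGGNT14-sicomp} apply, and crucially all of their parameters depend only on the size of the excluded minor, which is an absolute constant here; this is exactly what forces $\beta = O(1)$. Concretely, one invokes the decomposition of~\cite{AGGNT14-sicomp}: for any $K_r$-minor-free graph $H$ and any scale $D' > 0$ it produces a random vertex partition whose parts each have strong diameter at most $c_r D'$, and in which every edge $uv$ is separated with probability at most $c_r' \cdot \cost(\{u,v\})/D'$, where $c_r, c_r'$ depend only on $r$. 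Taking $r = 5$ and running this at scale $D' = D/c_5$ yields a $D$-bounded partition in which each edge $uv$ is cut with probability at most $\beta \, \cost(\{u,v\})/D$ for the universal constant $\beta := c_5 c_5'$, which is the claimed statement. All the primitives involved --- breadth-first searches, sampling a radius from a fixed one-dimensional distribution, and sampling a random permutation of the vertices --- run in polynomial time, so the distribution $\Pi$ is sampleable in polynomial time, giving the ``moreover'' clause.

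For orientation I recall the mechanism underlying those theorems, since it is the feature we rely on morally. The base case is the Klein--Plotkin--Rao iterated BFS ``chopping'': repeatedly pick a center, grow a ball whose radius is drawn uniformly (or geometrically) from a range of length $\Theta(D)$, delete the edges leaving the ball, and recurse on the interior and the exterior, for a number of rounds that is a constant depending only on the excluded minor --- hence $O(1)$ for planar graphs~\cite{KPR93}. A union bound over these $O(1)$ rounds, together with the standard estimate that a single random-radius ball cut separates a fixed edge $uv$ with probability $O(\cost(\{u,v\})/D)$ --- since by the triangle inequality the two endpoints' distances to the center differ by at most $\cost(\{u,v\})$ --- bounds the overall cut probability by $O(\cost(\{u,v\})/D)$. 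This already yields the \emph{weak}-diameter version. The upgrade to \emph{strong} diameter --- the technical content of~\cite{AbrahamGMW10,AGGNT14-sicomp} --- grows balls not around single centers but around a carefully chosen \emph{skeleton}: a bounded union of shortest paths whose deletion leaves a graph excluding a strictly smaller minor. Carving random-radius neighborhoods of the skeleton in a random order guarantees that every surviving cluster lies within $O(D)$ of its skeleton segment \emph{measured inside the induced subgraph}, hence has strong diameter $O(D)$, while the residual graph is handled by $O(1)$ further levels of the same recursion.

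The main obstacle, and the reason this is not merely ``apply \cite{KPR93}'', is precisely the strong-diameter requirement: naive ball-carving yields clusters that may be internally disconnected or have induced diameter far larger than $D$, and it is exactly planarity (more generally, the excluded-minor structure) that is exploited both to find skeletons of bounded size and to bound the recursion depth by a constant. Since our algorithm uses \Cref{thm:padded-dec} only as a black box, we do not reprove it and instead cite~\cite{AbrahamGMW10,AGGNT14-sicomp}, whose stated guarantees specialize to planar graphs with $\beta$ an absolute constant.
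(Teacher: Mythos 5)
Your proposal matches the paper exactly: the paper does not prove this theorem but simply cites \cite[Theorem~3]{AbrahamGMW10} and \cite[Theorem~4]{AGGNT14-sicomp}, i.e., it invokes the strong-diameter padded decompositions for minor-free (hence planar) graphs as a black box, just as you do (your rescaling of the scale parameter and the polynomial-time sampleability observation are routine and consistent with those results). The additional exposition of the KPR-style mechanism and the skeleton-based strong-diameter upgrade is accurate background but not part of the paper's argument.
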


\newcommand{\NDHC}{NDHC\xspace}

\section{Nondeterministic Clustering}
\label{sec:beginning-algorithm}

Recall that the input to our algorithm is:
\begin{OneLiners}
\item a connected simple planar embedded graph $G = (V,E)$ on $n$ vertices and $O(n)$ edges, 
\item assignments of edge costs $\cost: E \to [n^2]$, and
  demands $D: {V \choose 2} \to [n^3]$ to vertex pairs, and
\item a parameter $\eps \in [0,1]$.
\end{OneLiners}
We want the algorithm to output a cut with sparsity $2(1+O(\eps))$ times the optimal
sparsity. By \Cref{lem:simple-cycle} we can assume that the optimal
cut corresponds to a simple cycle in the planar
dual $G^*$. In what follows, we focus on the dual graph $G^*$; the
algorithm will select such a
simple cycle.  We interpret $\cost(\cdot)$ as an assignment of costs
to the edges of $G^*$.

The algorithm consists of two phases. Ideally, the first phase would
construct a hierarchical clustering of $G^*$ such that there exists a
near-optimal solution $C^*$ that crosses each cluster at most $Z$
times, where $Z$ is polylogarithmic. Given the existence of this
``low-complexity'' solution, the second phase would then use this
clustering to compute some near-optimal solution. We proceed slightly
differently: Our method's first phase instead constructs a
\emph{nondeterministic hierarchical clustering (\NDHC)} of $G^*$, which
represents a large family of hierarchical clusterings.  Fortunately,
our second phase can be adapted to work with this family.

Specifically, our algorithm applies a randomized partitioning scheme to
each cluster in each level of hierarchy to form subclusters. However,
this partitioning scheme just ensures that the ``low-complexity''
property for each partition \emph{in expectation}. This is not good
enough because there are many clusters being partitioned, so some of
the partititions might not be good. To handle this shortcoming, the
procedure produces $t \approx n^{O(\log n)}$ partitions of each cluster
into subclusters, so that one of these partitions is good with high probability.
Since we do not know the target solution $C^*$, we cannot choose among
the $t$ partitions. Instead we produce a representation of all the choices
using an \NDHC. A similar idea of repeating the clustering was
previously used by Bartal, Gottlieb, and Krauthgamer~\cite{BartalGK16}
and in subsequent papers in the context of the TSP on doubling metrics.

\subsection{Definitions} \label{sec:clustering}

\begin{definition}[Nondeterministic Hierarchical Clustering]
  \label{def:NDHA}
  For $t \in \ZZ_+$, a $t$-\emph{nondeterministic hierarchical
    clustering ($t$-\NDHC)} for $G^*$ is a rooted tree $\bT$ with alternating
  levels of \emph{cluster nodes} and \emph{partition nodes}:
  \begin{itemize}
  \item A \emph{cluster node} $c$ corresponds to a set $K(c)$ of
    vertices of $G^*$, called a \emph{cluster}.  Each nonleaf cluster
    node has at most $t$ partition nodes as its children.
  \item A \emph{partition node} $p$ with parent $c$ corresponds to
    a partition $\partition(p)$ of the vertex set $K(c)$.  The node
    $p$ has a child $c'$ for each part $P\in \partition(p)$, where
    $c'$ is a cluster node with $K(c')= P$.
\item
  The root of $\bT$ is a partition node which corresponds to the
  trivial partition where all vertices of the graph are in a single part;
  hence it has a single cluster node as its child.
  \item Each leaf is a cluster node with a singleton cluster.
  \end{itemize}

\end{definition}

\begin{figure}
    \centering
    \includegraphics[width=2.5in]{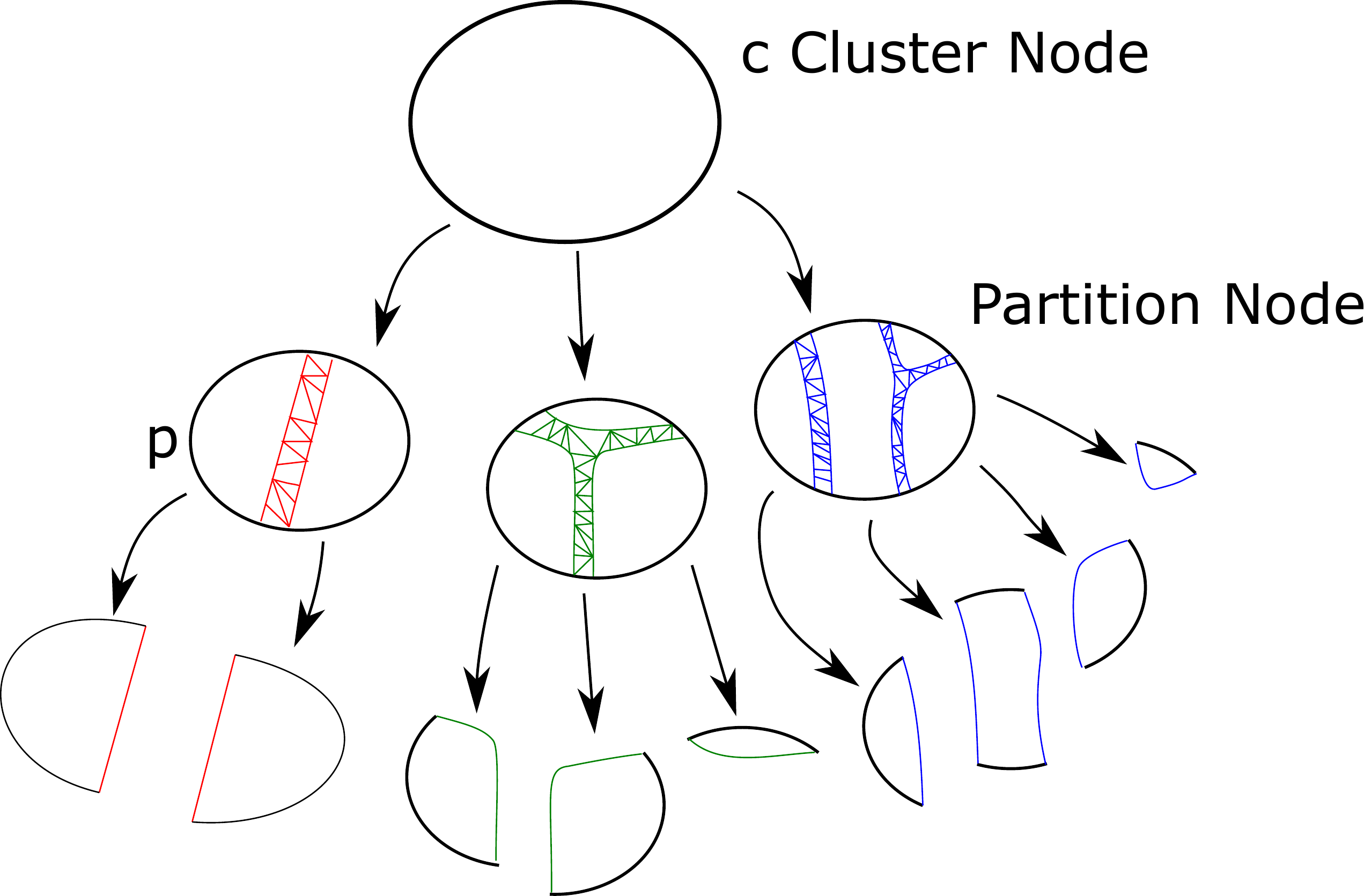}
    \caption{A fragment of tree $\calT$ showing cluster and partition nodes.}
  \label{fig:tree}
\end{figure}

Contrast this structure with a hierarchical partition used in
the literature, which is usually represented by a tree where each
node represents both a cluster (a subset $K(c)$ of vertices of $G^*$),
and a partition of this cluster $K(c)$ into clusters
represented by its children. (In that sense, each node in such a tree
is both a cluster node and a partition node.) We consider several
independent partitions of the same cluster, so we tease these two
roles apart. The usual definition of  hierarchical partition corresponds
to the case where $t=1$. We refer to a $1$-\NDHC as an \emph{ordinary} hierarchical
clustering.

\begin{definition}[Normal and Shattering Partition Nodes]
  \label{def:trivial-and-normal}
  A partition node is called \emph{shattering} if its children cluster
  nodes are all
  leaves, and is otherwise called \emph{normal}.
\end{definition}

\begin{definition}[Part Arity]
  \label{def:part-arity}
  The \emph{part arity} of $\bT$ is the maximum number of children of
  any normal partition node $p$ (which is the same as the maximum
  number of parts in any of the partitions corresponding to these nodes).  We do not limit the number of
  children of a shattering partition node.
\end{definition}

\begin{definition}[Forcing and Relevance]
  \label{def:forcing}
  Given a $t$-NDHC $\mathcal T$
  with $t>1$, define a \emph{forcing} of $\mathcal T$ to be a
  partial function $\varphi$ from cluster nodes to partition nodes
  such that, for each nonleaf cluster node $c$,
  (1) if $\varphi(c)$ is defined then $\varphi(c)$ is a child of $c$
  and (2) if every partition
  node ancestor of $c$ is in the image of $\varphi$ then $\varphi(c)$
  is defined.

 We denote by $\mathcal T\vert_{\varphi}$ the ordinary hierarchical
 decomposition obtaining by retaining only partition nodes $p$ that
 are in the image of $\varphi$ (and also the partition node that is
 the root of $\mathcal T$). 

\end{definition}

\begin{definition}[Internal and Crossing Edges]
  An edge of $G^*$ is \emph{internal} to a set $K$ of vertices if both
  endpoints belong to $K$.
  For a partition $\partition$ of a subset  of the vertices of
  $G^*$, we say an edge \emph{crosses} $\partition$ if the two
  endpoints of the edge lie in two different parts of $\partition$.
  (This requires that the edge be internal to the subset.)
  We use $\delta_{G^*}(\partition)$ to denote the set of edges crossing
$\partition$.  
\end{definition}

Finally, we define the notion of \emph{amenability}, which captures
the property of a candidate cycle $C$ having ``low complexity'' with
respect to a partition node. \emph{Friendliness} is the same notion,
but for a near-optimal cycle and an entire \NDHC.

\begin{definition}[Amenability]
  \label{def:amenable}
  For a nonnegative integer $Z$, a cycle $C$ of $G^*$ is
  \emph{$Z$-\amenable} for a partition node $p$ if at most $Z$ of its edges
  cross $\partition(p)$.
  The cycle $C$ is \emph{$Z$-\amenable} for an (ordinary) hierarchical
  clustering $\mathcal T$ if it is $Z$-\amenable for every
  partition node $p$ in $\mathcal T$.
\end{definition}

\begin{definition}[Friendly $\bT$]
  \label{def:friendly}
  For a nonnegative integer $Z$, and $\eps > 0$,
  a $t$-nondeterministic hierarchical clustering $\mathcal T$ of
  $G^*$ is
  \emph{$(Z,\eps)$-\friendly} if there exists a cycle $\widehat C$ in $G^*$ such that:
  \begin{OneLiners}
  \item[(a)] $\widehat C$ has sparsity at most $1+\eps$ times the optimal sparsity for the entire input graph;
  \item[(b)] There exists a forcing $\varphi$ of $\calT$ such that $\widehat C$
    is $Z$-\amenable with respect to ${\mathcal T\vert}_\varphi$.
  \end{OneLiners}
\end{definition}

\subsection{Constructing a Hierarchical Clustering}
\label{sec:procedure}

We now give a procedure $\calA$ that takes integers $t$ and $\Y$,
and constructs a $t$-nondeterministic hierarchical
clustering $\calT$ of part arity~$2\Y$ and depth $O(\log
n)$ and such that each leaf is
  a cluster node $c$ such that $|K(c)|=1$.
In the next section, we show that there is a value of $\Y$ that is
$O(\log n)$ and
a value of $t$ that is $n^{O(\text{polylog } n)}$ for which $\calT$ is
$(Z, \eps)$-friendly with probability at least
$1-\nf1n$.

The \emph{level} $\ell$ of a cluster node $c$ (or partition node $p$)
in $\calT$ is the number of
cluster nodes (or partition nodes) on the path from the root 
to $c$ in $\bT$, not including $c$ (or $p$).  Thus the root and its
child have level~0, its
grandchildren and great-grandchildren have level~1, and so on.
Define
\begin{gather}
  \Delta_{\ell} :=\diameter(G^*)/2^{\ell}.
\end{gather}
The procedure $\calA$ is specified in \Cref{algo:algoA}.
(We later choose the constant $a$ used
in it.) At a high level, it builds $\calT$ top-down.  The root is a
partition node with a single part containing all the vertices and with
a single child, a cluster node whose cluster consists of all the vertices.
The procedure iteratively adds levels to $\calT$.  For each
cluster node $c$ at the current level $\ell$, it randomly selects $O(\log n)$
independent $\Delta_{\ell+1}$-bounded partitions. For each of these
partitions, it considers all possible choices of $2\Y$ parts, and merges the
remaining parts with adjacent chosen parts. (The merging
procedure is given in~\Cref{algo:merge}; note that the merges may cause the
diameter of the parts to increase.) The children of cluster node $c$
are the partition nodes corresponding to these partially-merged
partitions; the children of a partition node $p$ correspond to the
parts of the partition $p$. This process stops when
$\ell = \lceil \log \diameter(G^*)\rceil$, whereupon each non-singleton
cluster is shattered into singleton nodes. Note the two unusual parts
to this construction: the use of multiple partitions for each cluster
(which makes it a \emph{nondeterministic} partition), and the merging of
parts (which bounds the part arity by~$2\Y$, at the cost of increasing the
cluster diameter).
  \begin{algorithm}[h]
    \caption{Procedure $\calA$ to construct the $t$-\NDHC with part
      arity $\Y$} \label{algo:algoA}
    \textbf{init:} $\bT \gets$ root partition node $p$ with $\pi(p) =
    \{V(G^*)\}$, and child $c$ with $K(c) = V(G^*)$\;
    \For{$\ell=0, 1, \ldots, \lceil\log \diameter(G^*)\rceil$}{\label{algo:forell-real}
      \ForEach{cluster node $c$ at level $\ell$ with $|K(c)|>1$}{
        \For{$i=1, \ldots, a\log n$}{\label{algo:foreye-real}
          $\partition_i \gets$ independent random $\Delta_{\ell+1}$-bounded partition of $K(c)$ \;
          \ForEach{subset $\barp\neq \emptyset$ of at most $2\Y$ parts of $\partition_i$}{\label{algo:forkappa-real}
            $\partition_{i,\barp} \gets$
            \textsc{MergeParts}$(\partition_i, \kappa)$ \;
            create a child partition node $p_{i,\barp}$ of $c$ with $\partition(p_{i,\barp}) =
            \partition_{i,\barp}$\;
            \ForEach{part $K$ of $\partition(p_{i,\barp})$}{\label{algo:forK-real}
              create a child cluster node $c'$ of $p_{i,\barp}$ with $K(c')=K$\;
            }
          }
        }
      }
    }
  \ForEach{leaf cluster node $c$ with $|K(c)|> 1$}{
    let $c$ have a single child partition node $p$ which shatters
    $K(c)$: $\partition(p) := \{\{v\} \mid v \in K(c)\}$\;
    let $p$ have a child cluster node $c_v$ for each node $v \in
    K(c)$, having $K(c_v):=\set{v}$ \;
  }
  \end{algorithm}

  \begin{algorithm}[h]
    \caption{\textsc{MergeParts}$(\pi,\barp)$}\label{algo:merge}
    $\pi' \gets \pi$ \;
    \While{$\exists$ edge $uv$ such that the part of $\pi'$ containing
      $u$ intersects $\kappa$ and the part containing $v$ does not
      intersect $\kappa$}{
      merge the two parts}
      \Return $\pi'$
  \end{algorithm}

\begin{lemma}[Properties I]
  The $t$-nondeterministic hierarchical clustering $\calT$ of the dual
  graph $G^*$ produced by \Cref{algo:algoA} has $t =
  n^{O(\Y)}$, depth $O(\log n)$, part arity at
  most $2\Y$, and  total size $n^{O(\Y \log n)}$.
\end{lemma}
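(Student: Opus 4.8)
My plan is to read off all four assertions from the description of \Cref{algo:algoA}; three of them are pure bookkeeping, and only the part-arity bound needs a structural observation. First I would record that $\diameter(G^*)=\poly(n)$: the dual $G^*$ has $O(n)$ vertices (since $G$ is planar with $O(n)$ edges), and by the standing assumption all edge costs lie in $[n^2]$, so a shortest path in $G^*$ uses $O(n)$ edges of cost at most $n^2$ each, giving $\diameter(G^*)=\poly(n)$ and hence $L:=\lceil\log\diameter(G^*)\rceil=O(\log n)$. For the \emph{depth} bound I would then argue that each pass $\ell=0,\dots,L$ of the main loop attaches one new level of partition nodes and one new level of cluster nodes below the current cluster-node frontier, that the closing shattering step adds at most one further level of each kind, and that branches reaching a singleton cluster early only terminate sooner; so $\calT$ has $O(\log n)$ alternating levels, i.e.\ depth $O(\log n)$.

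For the value of $t$: the only cluster nodes with more than one partition-node child are those handled inside the main loop, and such a node $c$ receives exactly one child $p_{i,\barp}$ per pair $(i,\barp)$ with $i\in[a\log n]$ and $\barp$ a nonempty set of at most $2\Y$ parts of $\partition_i$. Since $\partition_i$ has at most $|K(c)|\le|V(G^*)|=O(n)$ parts, the number of eligible $\barp$ is at most $\sum_{j=1}^{2\Y}\binom{O(n)}{j}\le n^{O(\Y)}$, and multiplying by the $a\log n$ choices of $i$ still gives $n^{O(\Y)}$; so every cluster node has at most $t=n^{O(\Y)}$ children.

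The \emph{part-arity} bound is the one step that needs an idea, and the hard part will be a connectivity invariant. I would first show, by induction from the root downward, that $G^*[K(c)]$ is connected for every cluster node $c$: the root cluster $V(G^*)$ is connected, and the inductive step reduces to showing that each part of $\partition(p_{i,\barp})=\textsc{MergeParts}(\partition_i,\barp)$ induces a connected subgraph whenever $K(c)$ does --- which holds because each part of the $\Delta_{\ell+1}$-bounded partition $\partition_i$ has finite \emph{strong} diameter (\Cref{thm:padded-dec}) and is therefore connected, and \textsc{MergeParts} only merges two parts joined by an edge of $G^*$, an operation that preserves connectedness. With this in hand I would track $\textsc{MergeParts}(\partition_i,\barp)$: each merge step deletes exactly one part and leaves unchanged the number of current parts meeting $\barp$ (the absorbed part is disjoint from $\barp$, the enlarged one still meets it), so at most $2\Y$ parts ever meet $\barp$; and at termination no edge of $G^*$ runs between a part meeting $\barp$ and one disjoint from it, which together with connectedness of $G^*[K(c)]$ and $\barp\neq\emptyset$ forces \emph{every} surviving part to meet $\barp$. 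Hence $\partition(p_{i,\barp})$ has at most $2\Y$ parts, so every partition node made by the main loop has at most $2\Y$ children; the remaining partition nodes are those of the shattering step, which are \emph{shattering} and therefore excluded from the part-arity count by \Cref{def:part-arity}. So the part arity is at most $2\Y$.

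For the \emph{size} bound I would combine the above: each cluster node has at most $t=n^{O(\Y)}$ partition-node children, each normal partition node at most $2\Y$ cluster-node children, and each shattering partition node (only at the bottom) at most $|V(G^*)|=O(n)$; thus the number of cluster nodes multiplies by at most $n^{O(\Y)}$ from one level to the next. Starting from the single root cluster and iterating over the $O(\log n)$ levels gives at most $\big(n^{O(\Y)}\big)^{O(\log n)}=n^{O(\Y\log n)}$ cluster nodes, and the number of partition nodes is at most $t$ times this, so $|\calT|=n^{O(\Y\log n)}$. Aside from the connectivity invariant driving the part-arity bound, every step is routine counting over the structure of \Cref{algo:algoA}.
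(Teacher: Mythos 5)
Your proposal is correct and follows essentially the same route as the paper's proof: read the bounds on $t$, depth, and size directly off the loop structure of \Cref{algo:algoA}, using \Cref{lem:aspectratio} for $\diameter(G^*)=\poly(n)$. The only difference is that you spell out why \textsc{MergeParts} returns at most $2\Y$ parts (via the connectivity invariant for clusters and the observation that merging never increases the number of parts meeting $\barp$ while termination forces every part to meet it), a fact the paper simply asserts; this is a valid and welcome elaboration, not a different approach.
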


\begin{proof}
  The loops in \Cref{algo:foreye-real,algo:forkappa-real} iterate over
  $a \log n$ and $n^{2\Y}$ values respectively; the value of $t$ is at
  most their product. The part arity follows from the fact that each
  partition $\pi_{i,\barp}$ has at most $2\Y$ parts. The depth of
  $\calT$ is at most $O(\log \diameter(G^*)) = O(\log n)$, thanks to
  \Cref{lem:aspectratio}.  Hence the total size of $\calT$ is at most
  $(a \log n \cdot n^{2\Y} \cdot \Y)^{\text{depth}} = n^{O(\Y \log
    n)}$.
\end{proof}

In \S\ref{sec:structure-theorem}, we prove that there is a choice of
$\Y$ with $\Y = O(\log n/\eps)$ for which $\calT$ is
$(Z,\eps)$-friendly with high probability.  In \S\ref{sec:sa} we show how to write and round a linear
program to find an approximate sparsest cut, given a
$(Z,\eps)$-friendly $t$-\NDHC $\calT$.

\def\HiLi{\leavevmode\rlap{\hbox to \hsize{\color{yellow!50}\leaders\hrule height .8\baselineskip depth .5ex\hfill}}}

\section{The Structure Theorem}
\label{sec:structure-theorem}

Before proceeding with the description of the algorithm, we state and prove the structure theorem.

\begin{theorem}
  \label{thm:structure}
  There is a choice of $\Y$ that is $O(\log n/\eps)$
  for which the NDHC produced by procedure $\calA$
  is a  $(\Y,\eps)$-friendly
  with probability at least $1-\nf1n$.
\end{theorem}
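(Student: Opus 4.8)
The plan is to start from an optimal simple dual cycle $C_0$ (guaranteed by \Cref{lem:simple-cut,lem:simple-cycle}) and track, along the construction of $\calT$ in \Cref{algo:algoA}, a collection of cycles that at every level still separates all demand pairs separated by $C_0$. We process level by level: for a cluster node $c$ at level $\ell$ with diameter bound $\Delta_\ell$ (roughly), we distinguish two cases for the current cycle $C$ restricted to $K(c)$. If the total cost of $C$ inside $K(c)$ is at most $\Theta(\eps^{-1}\log n)\cdot\Delta_\ell$, we do nothing — $C$ is already "cheap" inside $K(c)$. Otherwise we invoke the \emph{patching} primitive (illustrated in \Cref{fig:techniques}): pick break-points along $C$ inside $K(c)$ spaced so that the $C$-cost between consecutive break-points is $\Theta(\eps^{-1}\log n)\cdot\Delta_\ell$, and add the shortest paths (inside $K(c)$) from a fixed root to these break-points. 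Since there are at most $(\text{cost of }C\text{ in }K(c))/(\Theta(\eps^{-1}\log n)\Delta_\ell)$ break-points, and each added path has length $\le \Delta_\ell$, the total added cost is at most an $O(1/(\eps^{-1}\log n)) = O(\eps/\log n)$ fraction of the $C$-cost inside $K(c)$; summing over disjoint clusters at a level, the cost blows up by a factor $(1+O(\eps/\log n))$ per level. Each added path is traversed twice, so parity is preserved on $C$'s edges, and \Cref{lem:preserve-demand-separation} tells us the new cycle collection still separates every demand pair that $C$ did. After patching, every cycle fragment inside each child cluster has $C$-cost $O(\eps^{-1}\log n)\cdot\Delta_\ell$.

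Next I would handle the \emph{number of crossings}. For a cluster node $c$ at level $\ell$ that is cheap (cost $O(\eps^{-1}\log n)\Delta_\ell$ inside), \Cref{thm:padded-dec} says a single random $\Delta_{\ell+1}$-bounded partition cuts each internal edge with probability $O(\cost(e)/\Delta_{\ell+1}) = O(\cost(e)/\Delta_\ell)$, so the expected number of edges of $C$ crossing that partition is $O(\eps^{-1}\log n)$. By Markov, a single random partition is $Z$-amenable for $C$ with constant probability, where $Z = \Theta(\eps^{-1}\log n)$. Because \Cref{algo:algoA} draws $a\log n$ independent partitions per cluster, at least one of them is $Z$-amenable for $C$ with probability $1 - n^{-\Omega(a)}$; choosing $a$ a large enough constant and union-bounding over the $n^{O(\Y\log n)}$ cluster nodes gives that, with probability $\ge 1 - 1/n$, for \emph{every} cluster node simultaneously some child partition node is $Z$-amenable. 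This defines the forcing $\varphi$: at each cluster node, pick a good partition node. One subtlety: \textsc{MergeParts} only merges parts, so it can only \emph{decrease} the number of crossing edges relative to $\pi_i$; hence amenability of $\pi_i$ transfers to $\pi_{i,\barp}$ for every $\barp$, and in particular we may take $\barp$ to be the (at most $2\Y$) parts that $C$ actually visits — this is exactly why part arity $2\Y$ with $\Y = \Theta(\eps^{-1}\log n)$ suffices, since a $Z$-amenable cycle with $Z = \Theta(\eps^{-1}\log n)$ enters at most $Z+1$ distinct parts.

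Finally I would assemble the bound. There are $O(\log n)$ levels (by \Cref{lem:aspectratio} the diameter is $n^{O(1)}$), the patching step multiplies the total cost by $(1+O(\eps/\log n))$ per level, so the final collection of cycles has total cost at most $(1+O(\eps/\log n))^{O(\log n)} = 1 + O(\eps)$ times $\cost(C_0)$, while still separating all demand pairs of $C_0$ (by composing \Cref{lem:preserve-demand-separation} across levels). Each cycle in the final collection has, inside every leaf-level cluster, bounded cost, and more importantly is $Z$-amenable for $\calT|_\varphi$ by the crossing argument above. By \Cref{lem:good-cycle}, one of these cycles — after decomposing it into simple cycles, which does not increase total cost and preserves separated demands — yields a \emph{simple} cycle $\widehat C$ with sparsity at most $(1+O(\eps))$ times $\Phi^*$, and it is $Z$-amenable for $\calT|_\varphi$; rescaling $\eps$ gives exactly \Cref{def:friendly}. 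The main obstacle I anticipate is making the two-case (cheap vs.\ expensive) recursion airtight: patching changes the cycle inside $K(c)$, so when recursing into a child cluster one must re-run the case analysis on the \emph{patched} cycle, and one must verify that patching at level $\ell$ does not spoil amenability or cost bounds established for ancestor clusters (it does not, since added paths lie inside $K(c)$, hence inside every ancestor cluster, and contribute $0$ to crossings of ancestor partitions because their endpoints are co-located) — plus the bookkeeping that the merges in \textsc{MergeParts} interact correctly with the diameter bound used in the next level's application of \Cref{thm:padded-dec}.
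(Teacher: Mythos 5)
Your proposal follows essentially the same route as the paper (a virtual ``patch-and-track'' argument: patch the optimal cycle inside expensive clusters, bound the cost increase by $(1+O(\eps/\log n))$ per level, get $O(\eps^{-1}\log n)$ expected crossings from \Cref{thm:padded-dec}, boost with the $a\log n$ independent partitions, define the forcing from the good partitions, and finish with \Cref{lem:good-cycle}). However, there is a genuine gap in your high-probability step. You union-bound over \emph{all} $n^{O(\Y\log n)}$ cluster nodes of $\calT$ while keeping $a$ a constant: each bad event has probability $n^{-\Omega(a)}$, so against quasipolynomially many events this gives nothing, certainly not $1-\nf1n$. The paper avoids this by observing that the amenability event is only needed for pairs $(c,C)$ where the cycle $C$ is \emph{valid} for the cluster node $c$: at each level the clusters valid for a given cycle are vertex-disjoint (so at most $n$ of them), and the cycle collection has polynomial size because every cycle has cost at least $1$ while the total cost stays $O(n^3)$ (\Cref{cor:quadratic}); this reduces the union bound to $O(n^3\log n)$-many events (\Cref{lem:failure-probability}), which constant $a$ handles. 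Your version also leaves ambiguous \emph{which} patched cycle a cluster node off the eventually-forced branch should be amenable for, which is exactly the bookkeeping the validity/$\psi$-table formalism resolves.

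A secondary point: you assert that each shortest path added during patching has cost at most $\Delta_\ell$, but this is not automatic, since \textsc{MergeParts} can blow up the diameter of the cluster $K(c)$ well beyond $\Delta_\ell$. The paper's proof of this step hinges on the specific choice of $\barp$ in \Cref{algo:choice-of-kappa} (the parts of the pre-merge $\Delta_\ell$-bounded partition that meet $C$): it forces all vertices of $C$ inside $K(c)$ to lie in a single pre-merge part, whose strong diameter is at most $\Delta_\ell$. You flag this interaction as an ``obstacle'' but do not supply the argument, and without it the cost bound on the patched cycles (and hence the per-level $(1+O(\eps/\log n))$ factor and the crossing bound at the next level) is unsupported.
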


To prove \Cref{thm:structure}, we describe and analyze a \emph{virtual
  procedure} $\calV$ given in \Cref{algo:algoV} which performs the
steps from the \emph{actual} procedure $\calA$, plus some extra
\emph{virtual} steps in the background for the purpose of the
analysis. It takes as input not only the graph $G^*$ but also a cycle $C_0$ with
optimal sparsity (which the algorithm $\calA$ does not have).  The procedure
might fail but we show that failure occurs with probability at most $1/n$.
Assuming no failures, it produces not only $\calT$ but also a polynomial-size set
$\calC$ of cycles such that
\begin{OneLiners}
\item[(a)] for each cycle $C\in \calC$, there is a forcing $\varphi$ of $\bT$ such that $C$ is
  $\Y$-amenable with respect to ${\mathcal T\vert}_\varphi$.
\item[(b)] The total cost of cycles in $\calC$ is at most $1+\eps$
  times the cost of $C_0$.
\item[(c)] Every two faces of $G^*$ separated by $C_0$ are separated
  by some cycle in $\calC$.
\end{OneLiners}
Lemma~\ref{lem:good-cycle} then implies that there is a cycle
$\widehat C$ whose sparsity is at most $1+\eps$ times optimal and
that is $\Y$-amenable with respect to ${\mathcal
  T\vert}_\varphi$ for some forcing $\varphi$ of $\bT$.
\begin{explanation}
Again, we emphasize that the virtual procedure $\calV$ is merely a
thought experiment for the analysis.
The lines added to $\calA$ are
highlighted for convenience.
We now outline $\calV$, and state the lemmas
that prove \Cref{thm:structure}.  We then prove these lemmas in
\S\ref{sec:proof-struct}.
\end{explanation}

\subsection{The Virtual Procedure \texorpdfstring{$\calV$}{calV}}
\label{sec:virtual}

  \begin{algorithm}[h]
  \caption{Virtual Procedure $\calV$} \label{algo:algoV}
  \textbf{init:} $\bT \gets$ root partition node $p$ with $\pi(p) =
  \{V(G^*)\}$, and child $c$ with $K(c) = V(G^*)$\;
  \HiLi \textbf{init:} $\calC \gets \{C_0\}$,
  $\psi[\cdot,\cdot] \gets$ an empty table \;
  \For{$\ell=0, 1, \ldots, \lceil \log \diameter(G^*)\rceil$}{\label{algo:forell-virtual}
    \ForEach{cluster node $c$ at level $\ell$ with $|K(c)|>1$}{
      \HiLi
       \lForEach{cycle $C \in \calC$ valid for cluster
         $c$}{ $\calC \gets (\calC \setminus \{C\})\, \cup $ 
         \textsc{Patch}$(C,c,\Delta_\ell)$}
      \For{$i=1, \ldots, a\log n$}{\label{algo:foreye-virtual}
        $\partition_i \gets$ random $\Delta_{\ell+1}$-bounded partition of $K(c)$ \;
        \ForEach{subset $\barp\neq \emptyset$ of at most $2\Y$ parts of $\partition_i$}{\label{algo:forkappa-virtual}
          $\partition_{i,\barp} \gets$
          \textsc{MergeParts}$(\partition_i, \kappa)$ \;
          create a child partition node $p_{i,\barp}$ of $c$ with $\partition(p_{i,\barp}) =
          \partition_{i,\barp}$ \label{algo:MergeParts-step} \;
          \ForEach{part $K$ of $\partition(p_{i,\barp})$}{\label{algo:forK-virtual}
            create a child cluster node $c'$ of $p_{i,\barp}$ with $K(c')=K$\;
          }
        }
      }
      \HiLi \ForEach{cycle $C\in \calC$ that is valid for cluster
        node $c$}{\label{algo:forcycle-virtual}
        \HiLi \lIf{$|\{e \in C \mid e \text{ crosses } \pi_i \}| >
          \Y$ for all $i \in [a \log n]$}{\textbf{fail}} \label{algo:crossing-check}
        \HiLi $i \gets$ any index in $[a \log n]$ s.t.\  $|\{e \in C
        \mid e \text{ crosses } \pi_i \}| \leq
        \Y$ \label{algo:choice-of-psi}\;
        \HiLi $\barp \gets \{ P \in \pi_i \mid P$  contains an endpoint
        of an edge of $ C$
        internal to  $K(c)\}$ \label{algo:choice-of-kappa}\;
        \HiLi \lIf{$\barp = \emptyset$}{$\barp \gets \set{P}$ where
          $P$ is chosen arbitrarily from $\pi_i$}
        \HiLi $\psi[c,C] \gets p_{i,\barp}$ \label{algo:table-update} \;
      }
    }
  }
  \ForEach{leaf cluster node $c$ with $|K(c)|> 1$}{
    let $c$ have a single child partition node $p$ which shatters
    $K(c)$: $\partition(p) := \{\{v\} \mid v \in K(c)\}$\label{algo:shatter-partition-node} \;
    \HiLi $\psi[c,C] \gets p$ for each cycle $C$ in $\calC$ \label{algo:shatter-table-update} \;
    let $p$ have a child cluster node $c_v$ for each node $v \in
    K(c)$, i.e., having $K(c_v):=\set{v}$ \;
  }
\end{algorithm}

The virtual procedure $\calV$, in addition to building the tree
$\calT$, maintains a set $\calC$ of cycles.  Initially, $\calC$
consists of a single cycle, the optimal solution $C_0$. When
processing a cluster, $\calV$ first calls a procedure \textsc{Patch}
on each cycle ``valid'' for the cluster. (We define the notion of
validity soon.) %
This patching possibly replaces the cycle $C$ with several cycles
which jointly include all the edges of $C$ and possibly some
additional edges internal to the cluster in such a way that
two faces of $G^*$ separated by $C$
are also separated by at least one of the replacement
cycles.  The
cycles $C$ considered by $\calV$ thus form a rooted tree (a
``tree of cycles''), where the children of a cycle $C$ are the cycles
that replace it in $\calC$. The patching process ensures that the
total cost of cycles in $\calC$ remains small.

\begin{restatable}{lemma}{Separation} \label{lem:separation}
Throughout the execution of the
  virtual procedure, every two faces of $G^*$ separated by $C_0$ are
  separated by some cycle in $\calC$.
\end{restatable}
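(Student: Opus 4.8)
The plan is to prove this by induction on the number of steps of the virtual procedure $\calV$, maintaining the invariant that every pair of faces separated by $C_0$ is separated by at least one cycle currently in $\calC$. The only step of $\calV$ that modifies $\calC$ is the \textsc{Patch} call inside the loop over cluster nodes: it removes a cycle $C$ from $\calC$ and replaces it by the collection $\textsc{Patch}(C,c,\Delta_\ell)$. (The initialization sets $\calC = \{C_0\}$, for which the invariant is trivially true, and the shattering steps at the leaves only update the table $\psi$, not $\calC$.) So it suffices to show that a single \textsc{Patch} replacement preserves the invariant.

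First I would record the key property of \textsc{Patch} that I will rely on: when \textsc{Patch}$(C,c,\Delta_\ell)$ replaces $C$ by cycles $C'_1,\dots,C'_m$, each edge of $C$ appears in an odd number of the $C'_j$ (it may also add new edges internal to $K(c)$, each appearing an even number of times overall, or more precisely the symmetric-difference / parity relation of \Cref{lem:preserve-demand-separation} holds: an edge lies in an odd number of $C'_j$ iff it lies in $C$). This is exactly the hypothesis of \Cref{lem:preserve-demand-separation}, so that lemma gives: for every pair $x,y$ of faces of $G^*$, if $C$ separates $x$ and $y$ then some $C'_j$ separates $x$ and $y$. (I should double-check in the eventual write-up that the stated guarantees of \textsc{Patch}, once that procedure is defined, indeed match the parity hypothesis of \Cref{lem:preserve-demand-separation}; the outline in \S\ref{sec:techniques} and \Cref{fig:techniques} describes \textsc{Patch} as adding shortest paths that split a cycle into subcycles, which is consistent with this.)

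Now the induction step is immediate: suppose before a \textsc{Patch} call the invariant holds, and let $x,y$ be a pair separated by $C_0$. By the inductive hypothesis some cycle $C \in \calC$ separates $x$ and $y$. If $C$ is not the cycle being patched, it remains in $\calC$ and we are done. If $C$ is the patched cycle, then by the previous paragraph one of its replacement cycles $C'_j \in \textsc{Patch}(C,c,\Delta_\ell) \subseteq \calC$ separates $x$ and $y$. Either way the invariant is preserved. Since the invariant holds at initialization and is preserved by every step, it holds throughout, which is the claim.

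The only real obstacle is the dependence on the (not-yet-stated) guarantees of \textsc{Patch}: the argument is a one-line induction \emph{given} that \textsc{Patch} satisfies the parity condition of \Cref{lem:preserve-demand-separation}, so the actual content is deferred to the analysis of \textsc{Patch} in \S\ref{sec:proof-struct}. In the write-up I would therefore either (i) cite the relevant property of \textsc{Patch} as a black box here and prove it when \textsc{Patch} is analyzed, or (ii) state \textsc{Patch}'s output guarantee as a small standalone lemma and invoke it. No nontrivial calculation is needed beyond that.
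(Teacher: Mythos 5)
Your proposal matches the paper's proof: the paper likewise observes that \textsc{Patch} replaces a cycle $C$ by cycles in which each edge appears an odd number of times iff it belongs to $C$, invokes \Cref{lem:preserve-demand-separation} to conclude that any pair separated by $C$ is separated by some replacement cycle, and finishes by induction over the patching steps. Your flagged dependence on \textsc{Patch}'s parity guarantee is exactly the property the paper asserts of the patching procedure, so the argument is the same.
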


\begin{restatable}{lemma}{SumOfCosts}
  \label{lem:sum-of-lengths}
  When virtual procedure $\calV$ terminates, the sum of costs of cycles in
  $\calC$ is at most $1+\eps$ times the cost of $C_0$.
\end{restatable}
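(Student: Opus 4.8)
The plan is to track how the total cost of the cycle collection $\calC$ grows over the $O(\log n)$ levels of the recursion, and to argue that each level contributes only a $(1+\eps/\log n)$-factor, so that the product telescopes to $1 + O(\eps)$ (and one rescales $\eps$ at the end). The only event in $\calV$ that changes $\calC$ is the call to \textsc{Patch}$(C, c, \Delta_\ell)$ on cycles valid for a cluster node $c$ at level $\ell$; so the first step is to isolate, from the (yet-to-be-stated) properties of \textsc{Patch}, the guarantee that for a single cluster $c$ of diameter parameter $\Delta_\ell$, the replacement cycles have total cost at most the cost of $C$ plus an additive term proportional to $(\eps/\log n)$ times (the cost of $C$ restricted to edges internal to $K(c)$), or perhaps proportional to $(\eps/\log n)$ times $\Delta_\ell$ times the number of patching paths added. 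I would state this as the per-cluster patching bound and then do the bookkeeping.

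The key point that makes the bookkeeping work is that the clusters processed at a fixed level $\ell$ have \emph{disjoint} vertex sets (they are the parts of the partitions chosen one level up, before merging — and even after merging within a single partition node the parts remain disjoint; across partition nodes one must be slightly careful, but a cycle $C \in \calC$ is only patched at the unique cluster node for which it is ``valid'' at that level). Hence for each cycle $C$ currently in $\calC$, the edges of $C$ internal to the cluster where $C$ is patched at level $\ell$ are a subset of $E(C)$, and summing the additive patching error over all cluster nodes at level $\ell$ gives at most $(\eps/\log n)$ times $\sum_{C \in \calC}\cost(C)$, i.e., at most $(\eps/\log n)$ times the current total cost. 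So if $\Gamma_\ell$ denotes the total cost of $\calC$ after processing level $\ell$, we get $\Gamma_\ell \le (1 + \eps/\log n)\,\Gamma_{\ell-1}$, with $\Gamma_{-1} = \cost(C_0)$. Since there are $O(\log n)$ levels (by \Cref{lem:aspectratio} the diameter, and hence depth, is $O(\log n)$), we conclude $\Gamma_{\text{final}} \le (1+\eps/\log n)^{O(\log n)}\cost(C_0) = e^{O(\eps)}\cost(C_0) \le (1+\eps')\cost(C_0)$ after reparametrizing $\eps' \gets c\eps$ for a suitable constant; one then runs the whole argument with $\eps/c$ in place of $\eps$ to get exactly $1+\eps$.

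I should also handle two boundary items. First, the shattering step for leaf cluster nodes (the final \texttt{ForEach} in \Cref{algo:algoV}) does not call \textsc{Patch} — it only updates the table $\psi$ — so it does not change $\calC$ and contributes no cost. Second, I must confirm that the number of levels at which any fixed cycle can be patched is $O(\log n)$, not that the number of cycles stays polynomial (that is the content of a separate claim); here the relevant fact is just that the recursion depth of $\calT$ is $O(\log n)$, which is already established in ``Properties I.'' Summing a geometric-type inequality over $O(\log n)$ levels is routine once the per-level multiplicative bound is in hand.

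The main obstacle is the first step: extracting the correct quantitative per-cluster guarantee from \textsc{Patch} and making sure the ``charging'' is to edges of $C$ that genuinely lie inside the cluster being processed, so that the per-level sum is bounded by the \emph{current} total cost rather than by something larger. In particular one must check that when a cycle is patched at cluster $c$, the added shortest-path edges are internal to $K(c)$ (so they will be available to be re-counted, with a fresh $(\eps/\log n)$ factor, only at descendant clusters, never at a sibling), and that ``valid for $c$'' ensures each cycle in $\calC$ is patched at most once per level. Everything after that is the telescoping computation sketched above.
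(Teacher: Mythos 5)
Your proposal is correct and follows essentially the same route as the paper: bound the additive cost of patching at each cluster by $O(1/\Y)=O(\eps/\log n)$ times the cost of the cycle's edges internal to that cluster (the paper's \Cref{cor:total-cost-patching-one-cycle}), use disjointness of the clusters a given cycle is patched at within one level to get a per-level factor of $1+O(\eps/\log n)$ on the current total cost, and multiply over the $O(\log n)$ levels, choosing $\Y=O(\eps^{-1}\log n)$. One small correction: a cycle need not be valid for a \emph{unique} cluster node at a level---it can be valid for several---but those clusters are pairwise disjoint (this is exactly what the paper relies on), so your charging argument goes through unchanged.
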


The cost of the starting simple cycle $C_0$ is at most $O(n^3)$, since
all edges have cost at most $n^2$ (by~\Cref{lem:aspectratio}). Hence
the final cost remains $O(n^3)$, for constant $\eps$. However, since
each edge has cost at least $1$, each final cycle has cost at least
$1$,  which gives the following corollary.

\begin{corollary}
  \label{cor:quadratic}
  For any constant $\eps > 0$, the collection $\calC$ at any
  point during the virtual algorithm's run contains at most $O(n^3)$
  cycles.
\end{corollary}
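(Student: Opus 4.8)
The plan is to control the \emph{total} cost of all cycles in $\calC$ throughout the run, and then use that each individual cycle costs at least one. The first step is to note that $\calC$ is only ever modified by the initialization $\calC\gets\{C_0\}$ and by the replacements $\calC\gets(\calC\setminus\{C\})\cup\textsc{Patch}(C,c,\Delta_\ell)$, and that such a replacement cannot decrease $\sum_{C\in\calC}\cost(C)$: the patched cycles jointly contain every edge of $C$ (each with multiplicity at least one) together with possibly extra edges internal to the cluster, so their total cost is at least $\cost(C)$. Hence $\sum_{C\in\calC}\cost(C)$ is nondecreasing over the execution of $\calV$, and in particular its value at any intermediate point is at most its value when $\calV$ terminates.

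Next I would bound the terminal total cost via \Cref{lem:sum-of-lengths}, which gives $\sum_{C\in\calC}\cost(C)\le(1+\eps)\cost(C_0)$ at termination, and then bound $\cost(C_0)$ directly: $G$, and therefore $G^*$, has $O(n)$ edges, and every edge cost lies in $[n^2]$ by \Cref{lem:aspectratio}, so $\cost(C_0)\le O(n)\cdot n^2=O(n^3)$ since $C_0$ is a simple cycle and uses each edge at most once. Combining these, for any constant $\eps$ the total cost of the cycles in $\calC$ at every point of the execution is $O(n^3)$.

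Finally, \Cref{lem:aspectratio} also guarantees that every edge has cost at least $1$, and every cycle uses at least one edge, so each cycle in $\calC$ has cost at least $1$. Therefore $|\calC|\le\sum_{C\in\calC}\cost(C)=O(n^3)$, which is the claim. The only point that requires any care is that \Cref{lem:sum-of-lengths} is phrased for the terminated procedure, so one must explicitly invoke the monotonicity of the total cost under \textsc{Patch} to transfer the bound to all intermediate configurations; everything else is immediate.
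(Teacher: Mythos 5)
Your proof is correct and follows essentially the same route as the paper: bound $\cost(C_0)=O(n^3)$ via the $O(n)$ edges of cost at most $n^2$, invoke \Cref{lem:sum-of-lengths} for the $(1+\eps)$ blowup, and use the unit lower bound on edge costs to convert total cost into a bound on the number of cycles. Your explicit monotonicity argument for intermediate configurations (total cost is nondecreasing under \textsc{Patch}) is a sound spelling-out of a step the paper leaves implicit, not a different approach.
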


\Cref{lem:separation} allows us to apply \Cref{lem:good-cycle} to the
cycles comprising $\calC$.  By 
\Cref{lem:sum-of-lengths}, at least one of these cycles has
near-optimal sparsity, proving the first condition from
\Cref{def:friendly}. To prove the second condition, we need to show
that such a cycle also has a forcing with good amenability. Indeed,
the reason for patching the cycles was to ensure that the cost of
edges of this cycle internal to cluster $c$ is not much more than the
$\Delta_\ell$ value, so that at least one of the random partitions
crosses it only a few times (with high probability).

For a nonleaf cluster node $c$ of $\calT$ and a cycle $C$, recall from
\Cref{algo:algoV} that $\psi[c,C]$ is assigned one of the children of
$c$.  (The child is necessarily a partition node because the parent is
a cluster node.)  This is the way that the virtual algorithm records
which of the random partitions is crossed only a few times by a given
cycle $C$.  In case the cycle $C$ or a cycle derived from $C$ (i.e. a
descendant in the tree of cycles) turns out to be nearly optimal, we
want there to exist a forcing that induces an ordinary hierarchical
decomposition from $\bT$ such that $C$ or its descendant crosses the
partition of every partition node remaining in the ordinary
hierarchical decomposition.  The following definition captures the
idea that, for a given cycle $\bar C$, a given node $x$ remains in the
ordinary hierarchical decomposition corresponding to $\bar C$: it
states that, for every cluster node $c$ that is a proper ancestor of
$x$, for an appropriate ancestor $C$ of $\bar C$, $\psi[c,C]$ ``points
to'' the child of $c$ that is an ancestor of $x$.

\begin{definition}[Validity]
  \label{def:validity}
  A cycle $\bar C$ is valid for a node $x$ in $\calT$ if for each
  cluster node $c$ that is a proper ancestor of $x$ in
    $\calT$, there is an ancestor cycle $C$ of $\bar C$ in the tree
  of cycles such that $\psi[c,C]$ is an ancestor of $x$.
\end{definition}
Note that the initial cycle $C_0$ is trivially valid for root node
$p_0$ and its child.

\begin{restatable}{lemma}{ForcingDef}
  \label{lem:forcing-def}
  Suppose the algorithm $\calV$ completes without failure. Then for each $C \in
  \calC$, there is a forcing $\varphi_C$ such that $C$ is
  $\Y$-amenable with respect to $\calT|_{\varphi_C}$.
\end{restatable}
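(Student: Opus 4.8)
The plan is to exhibit, for each cycle $C \in \calC$, an explicit forcing $\varphi_C$ of the tree $\bT$ and then verify that $C$ is $\Y$-amenable with respect to $\calT|_{\varphi_C}$. The natural candidate for $\varphi_C$ is read directly off the table $\psi$: for a nonleaf cluster node $c$, set $\varphi_C(c) := \psi[c, C']$ where $C'$ is the (unique) ancestor of $C$ in the tree of cycles that was the one ``valid for $c$'' at the time $\calV$ processed $c$. One has to argue this is well-defined — i.e.\ that there is exactly one such ancestor cycle $C'$ — and that it satisfies the two defining conditions of a forcing from \Cref{def:forcing}: that $\varphi_C(c)$ is a child of $c$ whenever defined (immediate, since $\psi[c,C']$ is assigned one of $c$'s children in lines \ref{algo:table-update} and \ref{algo:shatter-table-update}), and that $\varphi_C(c)$ is defined whenever every partition-node ancestor of $c$ lies in the image of $\varphi_C$.

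The key structural fact I would establish first is a \emph{lineage lemma}: if $\varphi_C$ is defined on all proper cluster-node ancestors of $c$ in the way above, then $C$ is valid for $c$, hence \textsc{Patch} was called on some ancestor $C'$ of $C$ at $c$ and $\psi[c, C']$ was set — so $\varphi_C(c)$ is defined, and moreover $\psi[c,C']$ is an ancestor of the next node we care about. This is really an induction down the path from the root to $c$: at the root the initial cycle $C_0$ is valid (noted after \Cref{def:validity}), and the inductive step says that if $C$ is valid for a cluster node $c$ on the path, then $\varphi_C(c) = \psi[c,C']$ is, by \Cref{def:validity}'s defining property, an ancestor of whatever child we descend to, so validity propagates to the next cluster node down. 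Pairing this with the observation that retaining exactly the partition nodes in the image of $\varphi_C$ yields a genuine ordinary hierarchical decomposition $\calT|_{\varphi_C}$ (one partition node chosen per surviving cluster), we get that every partition node surviving in $\calT|_{\varphi_C}$ is of the form $\psi[c, C']$ for the valid ancestor $C'$ of $C$ at that $c$.

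It then remains to check $\Y$-amenability: for each such surviving partition node $p = \psi[c,C']$, at most $\Y$ edges of $C$ cross $\partition(p)$. Here I would use the guarantee built into line \ref{algo:choice-of-psi} of $\calV$: the index $i$ chosen there satisfies $|\{e \in C' : e \text{ crosses } \pi_i\}| \le \Y$, and $\partition(p) = \pi_{i,\barp}$ is a coarsening of $\pi_i$ via \textsc{MergeParts}, so it has no more crossing edges than $\pi_i$ — giving at most $\Y$ edges of $C'$ crossing $\partition(p)$. The one remaining gap, and the step I expect to be the main obstacle, is relating crossings of $C$ to crossings of its ancestor $C'$: \textsc{Patch} only \emph{adds} edges internal to the cluster and the added edges of $C'$ lie inside $K(c)$, so I need that no edge of $C$ internal to $K(c)$ that crosses $\partition(p)$ fails to be an edge of $C'$ — i.e.\ that patching at \emph{later} (deeper) clusters cannot introduce new crossing edges relevant to $p$. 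This should follow because $p$'s partition only partitions $K(c)$, deeper patches happen inside sub-clusters $K(c'') \subsetneq K(c)$, and the crossing-edge count is monotone under the replacement $C' \rightsquigarrow C$ when restricted to edges internal to $K(c)$ — but making this monotonicity precise, tracking exactly which edges are ``new'' at each patch and confirming they are counted consistently against the $\Y$ budget recorded in $\psi$, is the delicate bookkeeping that the proof will have to nail down.
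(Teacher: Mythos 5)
Your construction of $\varphi_C$ from the table $\psi$ and the validity relation is exactly the paper's, and your amenability argument for partition nodes assigned in \Cref{algo:table-update} matches the paper's Case~1: the index $i$ chosen in \Cref{algo:choice-of-psi} gives at most $\Y$ crossing edges of the ancestor cycle with $\pi_i$, and $\pi_{i,\barp}$ only coarsens $\pi_i$. The step you flag as unresolved ``delicate bookkeeping'' is, in substance, already available: it is \Cref{lem:Patch-property} and \Cref{cor:Patch-property}. Every edge of the final cycle $C$ that is not in its ancestor $C'$ was added by a later patch at some cluster in the lineage; such a cluster is either disjoint from $K(c)$ (so its internal edges are not internal to $K(c)$ and by definition cannot cross $\pi(p)$) or a descendant of $p=\psi[c,C']$ (so it lies inside a single part of $\pi(p)$), hence no new crossing is created. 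You should close your argument by invoking that corollary; as written, you leave this step open, so the write-up is incomplete even though your sketch of why it works is the right one.

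The genuine omission is the shattering partition nodes. Your amenability bound relies entirely on the guarantee of \Cref{algo:choice-of-psi}, but the partition nodes assigned in \Cref{algo:shatter-table-update} --- the singleton partitions attached in \Cref{algo:shatter-partition-node} to leaf cluster nodes with $|K(c)|>1$ --- are not of the form $\pi_{i,\barp}$ and are not coarsenings of any sampled $\pi_i$; they are the finest possible partitions, so \emph{every} edge of the cycle internal to $K(c)$ crosses them, and no $\le \Y$ bound follows from \Cref{algo:choice-of-psi}. These nodes do appear in $\calT|_{\varphi_C}$ (a nonleaf cluster node whose only child is its shattering node forces that choice), so amenability must be argued for them separately. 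The paper's Case~2 does this: at the last level the sampled partition is $\Delta$-bounded with $\Delta<1$, so by integrality of edge costs its parts are singletons; the choice of $\barp$ in \Cref{algo:choice-of-kappa} together with \textsc{MergeParts} (which never merges two $\barp$-parts) ensures that each resulting part, in particular $K(c)$, contains at most one vertex incident to an edge of the cycle internal to the parent cluster, hence no edge of the cycle is internal to $K(c)$ and the cycle is $0$-amenable there. Without some such argument your proof does not bound the crossings at these nodes.
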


\begin{restatable}{lemma}{FailureProb}
\label{lem:failure-probability} The probability of any
  failure occurring during $\calV$'s execution is at most
  $\nf1n$.
\end{restatable}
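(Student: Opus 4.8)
The plan is to identify exactly where randomness enters $\calV$ and to union-bound over all the ``bad events'' that could trigger a \textbf{fail} on Line~\ref{algo:crossing-check}. The only way $\calV$ fails is if, for some cluster node $c$ and some cycle $C \in \calC$ valid for $c$ \emph{after} the call to \textsc{Patch} on $c$, every one of the $a\log n$ independent random partitions $\pi_1, \ldots, \pi_{a\log n}$ of $K(c)$ has more than $\Y$ edges of $C$ crossing it. So I would first argue a single-partition claim: for a fixed $c$ and a fixed post-patching cycle $C$ that is valid for $c$, and a single $\Delta_{\ell+1}$-bounded random partition $\pi$ drawn from the distribution of \Cref{thm:padded-dec}, the expected number of edges of $C$ internal to $K(c)$ that cross $\pi$ is at most $\beta \cdot \cost(C \cap K(c))/\Delta_{\ell+1}$. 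The whole point of the \textsc{Patch} step (whose guarantee I would invoke as a black box here — properties (1)--(3) stated in the techniques section, to be proven in \S\ref{sec:proof-struct}) is that after patching at level $\ell$, every cycle in $\calC$ valid for $c$ has cost within $K(c)$ at most $O(\eps^{-1}\log n)\cdot \Delta_\ell = O(\eps^{-1}\log n)\cdot\Delta_{\ell+1}$. Hence the expected number of crossing edges is $O(\eps^{-1}\log n)$, and by choosing $\Y = \Theta(\eps^{-1}\log n)$ with a large enough constant, Markov's inequality gives that a single random partition $\pi_i$ has more than $\Y$ crossing edges with probability at most, say, $1/2$.

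Next I would exploit independence of the $a\log n$ partitions of $K(c)$: the probability that \emph{all} of $\pi_1,\dots,\pi_{a\log n}$ fail the $\le \Y$ test is at most $2^{-a\log n} = n^{-a}$. This is the probability that cluster $c$ causes a failure ``on account of'' a fixed valid cycle $C$. Now I union-bound over the pairs $(c, C)$ that could be responsible. The number of cluster nodes in $\calT$ is at most its total size $n^{O(\Y\log n)}$, and by \Cref{cor:quadratic} the collection $\calC$ ever contains at most $O(n^3)$ cycles; so the number of relevant $(c,C)$ pairs is $n^{O(\Y\log n)}$, which is quasipolynomial, \emph{not} polynomial. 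This means I cannot afford a mere $n^{-a}$ failure probability per pair — so the cleaner route is to union-bound over cluster nodes only, not over cycles. That is, for a fixed cluster node $c$, I want: with probability $\ge 1 - n^{-a'}$, \emph{every} cycle valid for $c$ (of which there are $O(n^3)$) passes the test against \emph{some} $\pi_i$. For this I first union-bound over the $O(n^3)$ cycles inside the fixed cluster — giving failure probability $O(n^3)\cdot 2^{-a\log n} = O(n^{3-a})$ for that single cluster — and then union-bound over the $\le n$ levels times the number of cluster nodes per level. The subtlety is that the number of cluster nodes can be quasipolynomially large, so I should instead observe that at each level $\ell$ the clusters $K(c)$ (for cluster nodes $c$ at that level within a fixed ``branch'' / or across all branches) are handled with fresh independent randomness, and what actually matters is the number of \emph{distinct} $(c,C)$ events; carefully, the right accounting is over all cluster nodes $c$ in $\bT$, of which there are $n^{O(\Y \log n)}$, so I need the per-event probability to beat this, i.e. I need $2^{-\Omega(a \log n \cdot (\text{something}))}$ — which forces $a = \Theta(\Y \log n / \log n) = \Theta(\Y)$, i.e. $a$ should be taken as a large enough constant times $\Y$ (equivalently $\Theta(\eps^{-1}\log n)$), so that $n^{-a \cdot c_0} \cdot n^{O(\Y \log n)} \ll 1/n$.

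Wait — that would make $t = n^{O(a)} = n^{O(\Y)}$ still, consistent with the Properties~I lemma, so it's fine. Concretely: choose $a = C_1 \Y$ for a suitable constant $C_1$; then a fixed $(c,C)$ event has probability at most $2^{-a \log n} = n^{-a} = n^{-C_1 \Y}$, and the number of $(c,C)$ pairs is at most (number of cluster nodes) $\times\ O(n^3) \le n^{C_2 \Y \log n}$; but $n^{-C_1\Y}\cdot n^{C_2\Y\log n}$ is $\gg 1$, so a naive union bound over \emph{all} cluster nodes still fails. The fix, which I expect to be the main obstacle and the place requiring real care, is to not union-bound over all cluster nodes of the full nondeterministic tree — instead observe that whether $\calV$ fails depends only on the randomness, and one can bound the failure probability by summing, over the $O(\log n)$ levels, over the clusters that \emph{actually receive a cycle} (of which, along any realized branch, there are polynomially many by \Cref{cor:quadratic}), the per-cluster failure probability. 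In other words, condition level by level: given the tree built up to level $\ell$ and the current $\calC$ (of size $O(n^3)$), the probability that \emph{some} cluster at level $\ell$ that holds a valid cycle fails is at most $O(n^3) \cdot (\text{number of level-}\ell\text{ clusters holding a cycle}) \cdot n^{-a}$; since the cycles in $\calC$ only occupy polynomially many clusters at each level, this is $\mathrm{poly}(n)\cdot n^{-a}$, and choosing $a$ a large enough absolute constant makes it $\le n^{-3}$ per level, hence $\le O(\log n)\cdot n^{-3} \le \nf1n$ overall. I would state this conditioning argument carefully, using \Cref{cor:quadratic} to bound the number of cycles (and thus the number of occupied clusters) at every level by a polynomial, and using \Cref{thm:padded-dec} together with the \textsc{Patch} guarantee for the single-partition Markov bound. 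The final bookkeeping — making the constants $a$ and the constant hidden in $\Y = O(\eps^{-1}\log n)$ consistent with the $2^{-a\log n}$ decay beating the polynomial count at each level — is routine once the conditioning structure is in place.
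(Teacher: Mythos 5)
Your proposal is correct and follows essentially the same route as the paper: bound the expected number of crossings of a single random $\Delta_{\ell+1}$-bounded partition using the \textsc{Patch} guarantee (\Cref{fact:cycle-cost}) together with \Cref{thm:padded-dec}, apply Markov to get constant failure probability per partition, use independence of the $a\log n$ partitions to get $n^{-O(a)}$ per (cycle, cluster) event, and crucially union-bound only over clusters that actually hold a valid cycle rather than over all (quasipolynomially many) cluster nodes of $\bT$, with \Cref{cor:quadratic} bounding the number of cycles and $O(\log n)$ levels. The one point the paper makes explicit that you gesture at only loosely is \emph{why} each cycle occupies polynomially many clusters per level: for a fixed cycle the clusters at a given level for which it is valid are pairwise disjoint, hence at most $n$ per cycle, which is exactly what makes the per-level event count polynomial so that a constant $a$ suffices.
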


This proves the second condition of \Cref{def:friendly} for every
cycle in the collection $\calC$, and hence for the near-optimal cycle
inferred in~\Cref{lem:sum-of-lengths}. Hence we have proved
\Cref{thm:structure}, modulo the proofs of the lemmas above. We now
describe the subprocedure {\sc Patch} and then give the proofs.

\subsection{The Subprocedure {\sc Patch}}
\label{sec:patching}

A call to subprocedure {\sc Patch} takes three arguments: 
(i)~a cycle $C$,
(ii)~a
cluster node
$c$,
 and (iii)~a cost
 $\Delta_\ell$. It outputs a collection of cycles.

Note that $c$ corresponds to a subset $K(c)$ of vertices of $G^*$, and
that some edges of the cycle $C$ might not be internal to $K(c)$.

\subsubsection{Steps of the Subprocedure}

The subprocedure is as follows. If
$\cost(C\cap G^*[K(c)])$ is at most
$(\Y/3) \cdot \Delta_\ell$, then it returns the set consisting solely
of $C$.  

Otherwise, as illustrated in
\Cref{fig:cluster-and-cycle,fig:cluster-and-cycle-two},
it initializes a counter to 0, selects an arbitrary starting vertex of
$C$ that is in $K(c)$, %
which it designates a \emph{special} vertex, then traverses $C$ in an
arbitrary direction from the starting vertex. Each time the procedure
traverses an  edge from $u$ to
$v$ that is internal to $K(c)$
it increments the counter by $\cost(uv)$.  If the resulting
value of the counter exceeds $(\Y/3) \cdot \Delta_\ell$, the procedure
designates the edge $uv$ as a special edge, designates the
vertex $v$ as a special vertex, and resets the counter to 0. It then
continues this traversal of $C$ from $v$ onwards, creating special edges
and nodes, etc., until it returns to the starting vertex.

Let $r$ be the starting vertex.  The procedure now selects shortest
paths in the graph induced by $K(c)$ from $r$ to each of the other
special vertices. Let $\calP$ be the multiset of such shortest paths,
where each selected path is included with multiplicity two.

Now the union of $C$ and the paths of $\calP$ is decomposed into
cycles.  For $t=1,2, \ldots $, a cycle is formed consisting of the
path from the center to the $t^{th}$ special vertex, along $C$ to the
$t+1^{st}$ special vertex, and then back along the path to the center.
A final cycle is formed consisting of the path from $r$ to
the last special vertex and along $C$ to $r$. The set consisting of
these cycles is returned by the procedure.

 We refer to the above steps that replace $C$ by
this collection as \emph{patching} the cycle $C$.

\begin{figure} \centering
\begin{subfigure}[t]{.3\textwidth} \centering
  \includegraphics[width=1.5in,page=1]{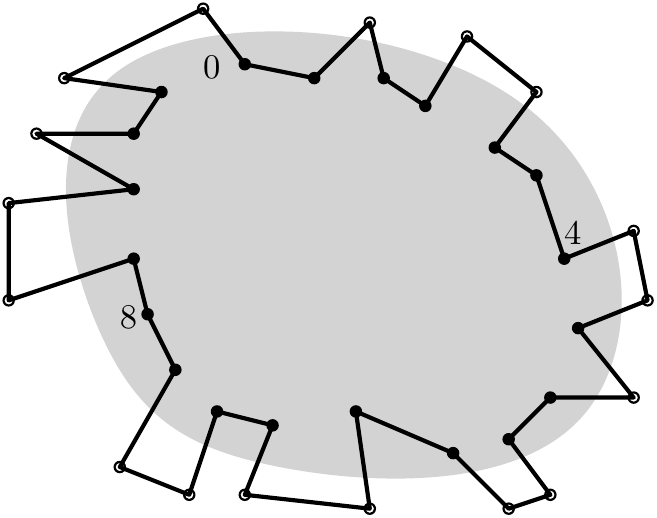}
\end{subfigure}
\quad
\begin{subfigure}[t]{.3\textwidth} \centering
  ~~~\includegraphics[width=1.5in,page=2]{figures/newpatch.pdf}
\end{subfigure}
\quad
\begin{subfigure}[t]{.3\textwidth} \centering
  ~~~\includegraphics[width=1.5in,page=3]{figures/newpatch.pdf}
\end{subfigure}
\caption{(a) A cluster and a cycle whose cost inside the cluster
  is much larger than the scale.
  The numbers indicate the counter values on the cost of the cycle at that vertex. (b)
  Two paths added to each special vertex, the directions are only for
  emphasis. (c) A decomposition into three cycles.}
\label{fig:cluster-and-cycle}
\end{figure}

\subsubsection{Properties}

\begin{lemma}
  \label{lem:Patch-property} In a call to {\sc
    Patch}$(C, c, \Delta_{\ell})$, each cycle formed by patching
  $C$ consists of edges in $C$ and edges internal to
  $K(c)$.
\end{lemma}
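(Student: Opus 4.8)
The plan is to trace through the construction of the cycles in {\sc Patch}$(C, c, \Delta_\ell)$ and verify that every edge used is either an edge of $C$ or an edge internal to $K(c)$. First I would dispose of the trivial case: if $\cost(C \cap G^*[K(c)]) \le (\Y/3)\Delta_\ell$, the procedure returns $\{C\}$, and $C$ consists only of edges of $C$, so there is nothing to prove. So assume we are in the nontrivial case, where the procedure designates special vertices and special edges by traversing $C$, and then forms paths in $\calP$ and finally the output cycles.

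The key observation is about where the special vertices live and what the shortest paths in $\calP$ look like. The starting vertex $r$ is chosen to lie in $K(c)$. Every other special vertex $v$ is, by construction, the head of some edge $uv$ that is internal to $K(c)$; hence $v \in K(c)$ as well. Therefore all special vertices belong to $K(c)$. The paths in $\calP$ are shortest paths in the induced subgraph $G^*[K(c)]$ between $r$ and the other special vertices, so each such path uses only edges internal to $K(c)$ (indeed only vertices of $K(c)$). Each output cycle is formed by concatenating: a path in $\calP$ from $r$ to the $t$-th special vertex, a sub-path of $C$ from the $t$-th to the $(t{+}1)$-th special vertex, and a path in $\calP$ back from the $(t{+}1)$-th special vertex to $r$ (with the final cycle being the analogous concatenation closing up to $r$). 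The $\calP$-portions contribute only edges internal to $K(c)$, and the $C$-portions contribute only edges of $C$. Hence every edge of every output cycle lies in $C \cup E(G^*[K(c)])$, which is exactly the claim.

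I do not expect a genuine obstacle here; the statement is essentially a bookkeeping consequence of the construction. The one point that requires a moment's care is confirming that the special vertices genuinely lie in $K(c)$ — this hinges on the fact that the counter is incremented \emph{only} on edges internal to $K(c)$, so an edge $uv$ can be designated special only when $uv$ is internal to $K(c)$, forcing its endpoint $v \in K(c)$; combined with $r \in K(c)$ by choice, this covers all special vertices and legitimizes the use of $G^*[K(c)]$ for the shortest paths. A secondary small check is that the sub-paths of $C$ used between consecutive special vertices are traversed exactly as they appear in $C$, so they contribute edges of $C$ and nothing more; this is immediate from the description of the traversal. With these two points in hand, the concatenation argument above completes the proof.
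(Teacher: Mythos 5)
Your proof is correct and follows essentially the same route as the paper's (one-line) argument: every edge of an output cycle not belonging to $C$ comes from a shortest path computed in the induced subgraph $G^*[K(c)]$, hence is internal to $K(c)$. Your extra bookkeeping (checking that all special vertices lie in $K(c)$, justifying that the shortest paths can be taken in $G^*[K(c)]$) is a careful spelling-out of the same observation, not a different approach.
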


\begin{proof} Each edge in such a cycle
  that is not in the original cycle $C$ is in one of the
  shortest paths in the graph induced by $K(y)$.
\end{proof}

\begin{corollary} \label{cor:Patch-property}
For any partition node $p$ that is not a descendant
  of $c$, if $C$ is $\Y$-amenable for $p$ then
  so is every cycle obtained by patching it.
\end{corollary}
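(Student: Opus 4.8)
The plan is to derive the corollary from the one structural fact already in hand, \Cref{lem:Patch-property}: each cycle $C'$ produced by \textsc{Patch}$(C,c,\Delta_\ell)$ consists only of edges of $C$ together with edges of $G^*$ internal to $K(c)$. (If \textsc{Patch} returns $\{C\}$ without splitting, there is nothing to prove.) Given this, the edges of $C'$ crossing $\pi(p)$ lie in $\{e\in C : e\text{ crosses }\pi(p)\}\cup\{e\text{ internal to }K(c) : e\text{ crosses }\pi(p)\}$; since $C$ is $\Y$-amenable for $p$, the first set has at most $\Y$ edges, so it suffices to show the second set is empty, i.e.\ that no edge internal to $K(c)$ crosses $\pi(p)$.

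This is the only place where the hypothesis ``$p$ is not a descendant of $c$'' gets used. Let $c'$ be the cluster-node parent of $p$, so $\pi(p)$ is a partition of $K(c')$ and an edge crosses $\pi(p)$ only if both its endpoints lie in $K(c')$ and fall into different parts of $\pi(p)$. The key structural step is the dichotomy: $K(c)$ is either disjoint from $K(c')$, or contained entirely inside a single part of $\pi(p)$. Granting this, an edge with both endpoints in $K(c)$ is either not internal to $K(c')$ (first case) or has both endpoints in a common part of $\pi(p)$ (second case); in neither case does it cross $\pi(p)$. To prove the dichotomy I would case on the lowest common ancestor of $c$ and $c'$ in $\bT$, exploiting the alternation between cluster and partition nodes and the fact that the clusters encountered along a single root-to-node path are nested; ``$p$ not a descendant of $c$'' is exactly what rules out the configuration in which $K(c)$ meets $K(c')$ and yet gets split by $\pi(p)$.

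Putting the two steps together, every cycle $C'$ in \textsc{Patch}$(C,c,\Delta_\ell)$ has at most $\Y$ edges crossing $\pi(p)$, so it is $\Y$-amenable for $p$. The reduction to the ``no internal edge crosses $\pi(p)$'' statement via \Cref{lem:Patch-property} is immediate; the part that needs care — and the only real obstacle — is the structural dichotomy, i.e.\ correctly bookkeeping how ``$p$ not a descendant of $c$'' in the tree $\bT$ forces $K(c)$ to sit either off the domain $K(c')$ of $\pi(p)$ or inside one single part of $\pi(p)$.
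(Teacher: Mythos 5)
Your overall route is the same as the paper's: invoke \Cref{lem:Patch-property} to say every edge of a patched cycle not already in $C$ is internal to $K(c)$, and then argue that such an edge cannot cross $\partition(p)$. Where you diverge is in how you justify that last step, and the dichotomy you propose --- ``$K(c)$ is either disjoint from $K(c')$ or contained in a single part of $\pi(p)$'' --- is not true in a $t$-\NDHC with $t>1$, and the hypothesis ``$p$ is not a descendant of $c$'' does not rule out the bad configuration. Clusters in different nondeterministic branches are neither nested nor disjoint: a cluster node $c'$ has many sibling partition-node children carrying different, crosscutting partitions of the \emph{same} set $K(c')$ (\Cref{algo:algoA}). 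Concretely, let $q\neq p$ be two partition-node children of $c'$ and let $c$ be a cluster child of $q$, so that $K(c)$ is a part of $\pi(q)$ and $K(c)\subseteq K(c')$. Then $p$ is neither an ancestor nor a descendant of $c$, yet $\pi(p)$ is a different partition of $K(c')$ which may split $K(c)$; an edge internal to $K(c)$ can then cross $\pi(p)$, and a shortest path added by \textsc{Patch} inside $K(c)$ can cross $\pi(p)$ many times. So the LCA case analysis you sketch cannot deliver the dichotomy for such ``cousin'' nodes $p$, and this is exactly the step you flagged as the crux.

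In fairness, the paper's own three-line proof makes the same silent leap (``internal to $K(c)$, so it cannot cross $\partition(p)$'' is asserted without using any tree structure), and the statement is only ever needed in the regime where the leap is valid: by the validity mechanism (\Cref{def:validity}), a cycle is patched at a cluster $c$ only when $c$ is a descendant of every partition node $p=\psi[\cdot,\cdot]$ already charged for it, i.e.\ the relevant $p$ is an \emph{ancestor} of $c$. In that case $K(c)$ is contained in the single part of $\pi(p)$ corresponding to the cluster child of $p$ on the root-to-$c$ path (clusters are nested along a root path), so no edge internal to $K(c)$ crosses $\pi(p)$ and both your argument and the paper's go through. The fix for your write-up is therefore to prove and use only the ancestor case (or the case $K(c)\cap K(c')=\emptyset$), rather than the general dichotomy for arbitrary non-descendants, which fails in the nondeterministic tree.
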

 
\begin{proof}
  By $\Y$-amenability, at most $\Y$ edges of $C$ cross
  $\partition(p)$.  Let $C'$ be a cycle formed by patching $C$.  By
  Lemma~\ref{lem:Patch-property}, an edge in $C' - C$ is internal to
  $K(c)$, so it cannot cross $\partition(p)$.  Thus at most $\Y$
  edges of $C'$ cross $\partition(p)$.
\end{proof}

\begin{claim}
  Consider the call {\sc  Patch}$(C, c, \Delta_{\ell})$.
  The cost of the shortest path from $r$ to each special vertex is
  at most $\Delta_\ell$.
\end{claim}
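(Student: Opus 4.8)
The plan is to reduce the claim to a single diameter fact about the cluster $K(c)$ and then point to where that fact comes from. First, observe that every endpoint of every shortest path considered by \textsc{Patch} lies in $K(c)$: the starting vertex $r$ is chosen to lie in $K(c)$, and every other special vertex $v$ is, by construction, the head of an edge of $C$ that is internal to $K(c)$, hence also lies in $K(c)$. Second, each such path is a shortest $r$–$v$ path in the graph $G^*[K(c)]$ induced by $K(c)$, so its cost equals $\dist_{G^*[K(c)]}(r,v)$, which is at most the strong diameter of $K(c)$ (measured inside $G^*[K(c)]$, per the definition of strong diameter). Thus the claim follows once we know that a cluster node $c$ at level $\ell$ of $\calT$ has strong diameter at most $\Delta_\ell$.

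To see the diameter bound, I would trace how $c$ was produced. If $\ell = 0$ then $K(c) = V(G^*)$ and its strong diameter is $\diameter(G^*) = \Delta_0$ by definition. If $\ell \ge 1$, then $K(c)$ is one of the parts of the partition $\pi_{i,\kappa} = \textsc{MergeParts}(\pi_i, \kappa)$ created while the algorithm processed the level-$(\ell-1)$ cluster node that is the grandparent of $c$, where $\pi_i$ is a $\Delta_{(\ell-1)+1} = \Delta_\ell$-bounded partition (\Cref{thm:padded-dec}). Before the call to \textsc{MergeParts}, every part of $\pi_i$ has strong diameter at most $\Delta_\ell$; the content is then to argue that \textsc{MergeParts} preserves this property — i.e., that after greedily absorbing each non-chosen part into a neighbouring chosen part, each resulting part still has strong diameter at most $\Delta_\ell$ within its own induced subgraph. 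This is precisely the cluster-diameter invariant maintained by procedures $\calA$ and $\calV$, and I would invoke it here; it can be phrased as a downward induction on $\ell$.

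The main obstacle is exactly that last step: the strong diameter of a union of connected pieces joined by edges is, a priori, roughly the sum of their diameters plus the connecting edge costs, so one cannot conclude diameter $\le \Delta_\ell$ from $\Delta_\ell$-boundedness of the pre-merge parts by a one-line counting argument — this is the same phenomenon flagged in the caveat that cluster diameters need not drop geometrically. The clean way out is to rely on the diameter invariant established for the clustering construction (which \textsc{MergeParts} is designed to respect); alternatively, if one prefers a self-contained argument at this point, one settles for a bound of $O(\Delta_\ell)$ on the strong diameter of $K(c)$, hence $O(\Delta_\ell)$ on each shortest path, and absorbs the constant into the threshold $(\Y/3)\Delta_\ell$ used in \textsc{Patch} and into the $(1+\eps)$ bookkeeping of \Cref{lem:sum-of-lengths}. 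Either way the claim — "$r$ and the special vertices all lie in $K(c)$, and the strong diameter of $K(c)$ is (at most) $\Delta_\ell$" — is then immediate.
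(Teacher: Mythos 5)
There is a genuine gap. Your reduction is to the statement ``$K(c)$ has strong diameter at most $\Delta_\ell$ (or $O(\Delta_\ell)$),'' but no such bound holds, and the paper never claims one: \textsc{MergeParts} greedily absorbs \emph{every} part of $\pi_i$ that misses $\kappa$ into some chosen part, and these absorptions chain, so a single merged part can swallow a long path of $\Delta_\ell$-diameter pieces and end up with strong diameter far larger than any constant multiple of $\Delta_\ell$ (this is exactly the ``diameter does not drop geometrically'' caveat you cite). The ``cluster-diameter invariant maintained by $\calA$ and $\calV$'' that you invoke does not exist in the construction, and the fallback of settling for $O(\Delta_\ell)$ and absorbing a constant is equally unavailable, since the blow-up from merging is not bounded by any constant. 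So the claim cannot be obtained from any diameter property of $K(c)$ alone.

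The paper's argument is different in kind: it never bounds the diameter of $K(c)$, but instead localizes the cycle inside $K(c)$ to a single \emph{unmerged} part of $\pi_i$. The precondition that $C$ is valid for $c$ forces the parent partition node $p$ of $c$ to be the node $p_{i,\kappa}$ recorded in \Cref{algo:table-update} for an ancestor cycle, where $\pi_i$ is a $\Delta_\ell$-bounded partition of the grandparent cluster and, by the choice in \Cref{algo:choice-of-kappa}, $\kappa$ consists precisely of the parts of $\pi_i$ meeting the cycle's internal edges. Since each part of $\pi_{i,\kappa}$ contains exactly one part of $\kappa$, the part $K(c)$ contains a unique $S\in\kappa$, and every vertex of $C\cap K(c)$ --- in particular $r$ and all special vertices --- lies in $S$. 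Because $S$ is an original part of the $\Delta_\ell$-bounded partition, $\dist_{G^*[S]}(r,v)\le\Delta_\ell$, and since $S\subseteq K(c)$ the shortest paths taken in $G^*[K(c)]$ are only shorter. This use of validity and of the cycle-aware choice of $\kappa$ is the missing idea; without it your first two observations (that $r$ and the special vertices lie in $K(c)$ and that the paths are shortest paths in $G^*[K(c)]$) do not suffice.
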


\begin{proof}
  Let $p$ be the parent of $c$.  Assume $p$ is not the root, and let
  $c'$ be the parent of $p$.  Because $C$ is valid for $c$ (a
  precondition for the call), for every proper ancestor of $c$ that is
  a cluster node, and in particular for the grandparent $c'$ of $c$,
  and for some ancestor $C'$ of $C$.
  $\psi[c',C']$ is an ancestor of $c$.  Because $\psi[c',C']$ is a
  child of $c'$ and an ancestor of $c$, it must be the parent $p$ of $c$.
  Therefore, in some execution of
  line~\eqref{algo:table-update}, $p=p_{i, \kappa}$ where
\begin{OneLiners}
    \item $\pi_i$ is a
      $\Delta_{\ell}$-bounded partition of $K(c')$,
      \item $\kappa$ is the
        set of parts of $\pi$ that intersect $C$,
      \item $\pi_{i,\kappa}=
        \textsc{MergeParts}(\pi_i, \kappa)$, and
      \item $\pi(p_{i,\kappa})=\pi_{i.\kappa}$.
\end{OneLiners}        
If $C$ has no vertices in $K(c)$ then $\cost(C\cap G^*[K(c)])$ is zero so
\textsc{Patch}$(C,c,\delta_\ell)$ does not change $C$.  Suppose $C$
has some vertex $v$ in $K(c)$. Because $c$ is a child of $p$, $K(c)$
is a part of $\pi(p_{i,\kappa}) = \pi_{i,\kappa}$.  
Each part of $\pi_{i,\kappa}$ is obtained by merging a part of $\pi_i$
belonging
to $\kappa$ with some parts of $\pi_i$ not in $\kappa$.  Let $S$ be the
part of $\pi_i$ that contains $v$.  Because $S$ is the only part
belonging to $\kappa$ that is a subset of $K(c)$, every vertex in
$C\cap K(c)$ belongs to $S$.  In particular, all the vertices
designated as special in \textsc{Patch} belong to $S$.  Because $S$ is
a part of $\pi_i$, it has diameter at most $\Delta_\ell$.  This proves
the claim.
\end{proof}

\begin{corollary} \label{fact:cycle-cost} If cycle $C'$ is one of the
  cycles produced by patching cycle $C$
  then the sum of costs of non-special edges in $C'\cap G^*[K(c)]$ is
  at most $(\frac{\Y}{3} + 2)\cdot \Delta_\ell$.
\end{corollary}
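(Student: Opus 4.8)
The plan is to bound the cost of the non-special edges inside $K(c)$ on each of the three kinds of sub-cycles produced by \textsc{Patch}: the ``first'' cycle (root path to first special vertex, along $C$, back to root), a ``middle'' cycle (root to $t$-th special vertex, along $C$ to the $(t+1)$-st, back), and the ``final'' cycle (root to last special vertex, along $C$ back to $r$). In each case the edges of $C'$ that lie inside $K(c)$ split into two types: edges that come from the added shortest paths in $\calP$, and edges that come from the stretch of $C$ between two consecutive special vertices. The shortest-path edges are exactly the ones I would designate as coming from $\calP$ — but note that \emph{special} edges (those marked during the traversal of $C$) are excluded from the count in the statement, so I only need to worry about non-special $C$-edges and the $\calP$-edges.

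First I would handle the $\calP$-edges. By the preceding Claim, each shortest path from $r$ to a special vertex has cost at most $\Delta_\ell$, and each such path is used with multiplicity two, so a sub-cycle that uses two of these paths (the root-to-$t$ and root-to-$(t+1)$ paths, say, each once — since the multiset has multiplicity two globally, each individual sub-cycle uses each path once) contributes at most $2\Delta_\ell$ from $\calP$-edges. Second, for the stretch of $C$ between two consecutive special vertices: by the way the counter is reset in \textsc{Patch}, the total cost of \emph{internal} (to $K(c)$) edges of $C$ strictly between two consecutive special vertices, \emph{not counting the terminating special edge}, is at most $(\Y/3)\cdot\Delta_\ell$; this is exactly the invariant that the counter never exceeds $(\Y/3)\cdot\Delta_\ell$ before the edge that triggers the reset (that triggering edge being special, hence not counted). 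Adding these, the non-special edges of $C' \cap G^*[K(c)]$ cost at most $(\Y/3)\cdot\Delta_\ell + 2\Delta_\ell = (\tfrac{\Y}{3}+2)\cdot\Delta_\ell$.

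The one case needing slight care is the first/final cycle, where the relevant arc of $C$ runs from the last special vertex around to the start vertex $r$ (or from $r$ to the first special vertex); here too the counter was reset at the last special vertex and never re-exceeded $(\Y/3)\cdot\Delta_\ell$ before returning to $r$ (otherwise another special vertex would have been created), so the same bound applies, and this cycle uses only one path from $\calP$ (to the last special vertex), contributing at most $2\Delta_\ell$. The main obstacle, such as it is, is bookkeeping: being careful that the ``$\Y/3$'' term genuinely excludes the special terminating edge, that edges of $C$ not internal to $K(c)$ simply don't enter the count at all, and that the multiplicity-two paths are distributed so each sub-cycle sees each path at most once. None of these requires real work beyond unwinding the definitions in \textsc{Patch}, so the proof is short.
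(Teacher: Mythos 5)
Your proof is correct and follows exactly the argument the paper intends (the corollary is stated without proof precisely because it is this immediate combination of the preceding Claim with the counter-reset invariant): each sub-cycle uses at most two of the added shortest paths, each of cost at most $\Delta_\ell$, and the non-special internal edges on its arc of $C$ between consecutive special vertices cost at most $(\Y/3)\cdot\Delta_\ell$ since the counter was at most $(\Y/3)\cdot\Delta_\ell$ before the triggering special edge. Your handling of the first/final arc and of the multiplicity-two paths matches the construction, so nothing further is needed.
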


\begin{corollary} \label{cor:total-cost-patching-one-cycle}
  The total cost of edges that are in cycles formed by patching $C$
  and are not in $C$, taking into account multiplicity, is at most
  $$2\,\left(\frac{\cost(C\cap G^*[K(c)])}{(\Y/3)\cdot \Delta_\ell}\right) 2\Delta_\ell.$$
\end{corollary}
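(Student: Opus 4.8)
The plan is a direct counting argument: bound the number of special vertices, note that each added shortest path has cost at most $\Delta_\ell$, and multiply. First, if $\cost(C\cap G^*[K(c)]) \le (\Y/3)\cdot\Delta_\ell$, then \textsc{Patch} returns $\{C\}$ and introduces no new edges, so the inequality holds trivially. So assume $\cost(C\cap G^*[K(c)]) > (\Y/3)\cdot\Delta_\ell$, so that the traversal is actually carried out; let $r$ be the starting vertex and let $v_1,\dots,v_k$ be the other special vertices, in traversal order.

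I would first bound $k$. During the single traversal of $C$, the internal edges traversed between two consecutive resets of the counter form a contiguous stretch of the traversal, and these stretches partition the internal part of the traversal; each stretch that triggers a reset has total (multiplicity-counted) cost strictly exceeding $(\Y/3)\cdot\Delta_\ell$, since that is precisely the reset condition. Summing over the $k$ triggering stretches gives $k\cdot(\Y/3)\cdot\Delta_\ell < \cost(C\cap G^*[K(c)])$, i.e. $k < \cost(C\cap G^*[K(c)])/((\Y/3)\cdot\Delta_\ell)$. Next, $\calP$ consists, for each $v_j$, of two copies of a shortest $r$-to-$v_j$ path inside $G^*[K(c)]$, and by the Claim proved just above each such path has cost at most $\Delta_\ell$; hence the total multiplicity-counted cost of $\calP$ is at most $2k\Delta_\ell$.

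Finally, by Lemma~\ref{lem:Patch-property} every edge that lies in a cycle output by patching $C$ but not in $C$ is an edge of some path in $\calP$, and the cycle-formation step uses each of the two copies of each path in $\calP$ in exactly one output cycle; so the total cost of new edges, counted with multiplicity over all output cycles, is at most $\cost(\calP)\le 2k\Delta_\ell$. Combining with the bound on $k$, this is at most $2\bigl(\cost(C\cap G^*[K(c)])/((\Y/3)\cdot\Delta_\ell)\bigr)\Delta_\ell$, which is certainly at most the claimed $2\bigl(\cost(C\cap G^*[K(c)])/((\Y/3)\cdot\Delta_\ell)\bigr)\cdot 2\Delta_\ell$ (the argument in fact has a factor of $2$ to spare).

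The only point requiring care — and the main obstacle if any — is the multiplicity bookkeeping in this last step: confirming via Lemma~\ref{lem:Patch-property} and the explicit decomposition of $C\cup\calP$ into cycles that every ``new'' edge occurrence is accounted for by one of the two copies of an $r$-to-special-vertex path in $\calP$, so that no edge of $\calP$ is charged more than its multiplicity. Everything else is routine arithmetic.
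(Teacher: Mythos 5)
Your proof is correct and follows essentially the same route as the paper: bound the number of special vertices by noting each one (other than $r$) is charged a disjoint stretch of $C\cap G^*[K(c)]$ of cost exceeding $(\Y/3)\Delta_\ell$, then use the Claim to bound each of the two added path copies per special vertex by $\Delta_\ell$. Your bookkeeping is in fact slightly tighter (you avoid the ceiling-to-factor-2 relaxation the paper uses), but the argument is the same.
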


\begin{proof}  The number of shortest paths added
  is twice the number of special vertices.  For each special
  vertex other than the starting vertex, the algorithm scans a portion
  of $C\cap G^*[K(c)]$ of cost greater than $(\Y/3)\cdot \Delta_\ell$.
  Let $\eta$ be the smallest cost scanned.  The number of
  special vertices is at most
  $$1 + \left\lfloor\frac{\cost(C\cap
      G^*[K(c)])}{\eta}\right\rfloor$$
  which is at most
  $$\left\lceil\frac{\cost(C\cap G^*[K(c)])}{(\Y/3)\cdot \Delta_\ell}\right\rceil.$$
which is at most $2\frac{\cost(C\cap G^*[K(c)])}{(\Y/3)\cdot \Delta_\ell}$
because $\cost(C\cap G^*[K(c)])> (\Y/3) \cdot \Delta_\ell$.
\end{proof}

\begin{figure} \centering
\begin{subfigure}[t]{.3\textwidth} \centering
  \includegraphics[width=2in,page=1]{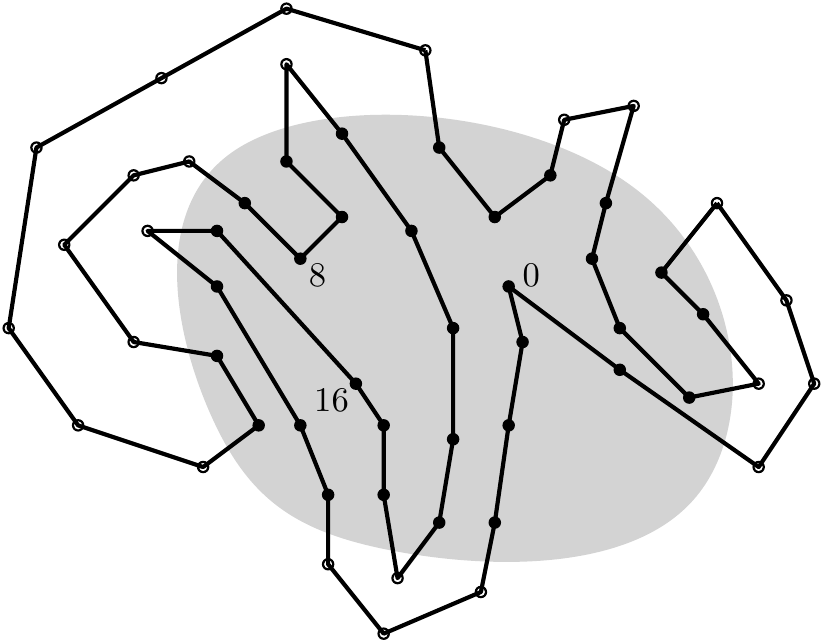}
\end{subfigure}
\quad
\begin{subfigure}[t]{.3\textwidth} \centering
  ~~~\includegraphics[width=2in,page=2]{figures/newpatch2.pdf}
\end{subfigure}
\quad
\begin{subfigure}[t]{.3\textwidth} \centering
  ~~~\includegraphics[width=2in,page=3]{figures/newpatch2.pdf}
\end{subfigure}
\caption{Another example of patching, with a more complicated cycle;
  this time the special vertices correspond to vertices where the
  counter is a multiple of $8$.}
\label{fig:cluster-and-cycle-two}
\end{figure}

\subsection{The remaining proofs}
\label{sec:proof-struct}

Now we restate and prove the remaining lemmas from \S\ref{sec:virtual}.

\Separation*

\begin{proof} The patching procedure applied to a cycle $C$ ensures
  that each edge appears an odd number of 
  times in the resulting cycles (counting multiplicity) iff the edge
  belongs to $C$.  It follows from
  \Cref{lem:preserve-demand-separation} that every two faces separated by
  $C$ are separated by at least one of the cycles resulting from
  patching.  The lemma then follows by induction.
\end{proof}

\SumOfCosts*

\begin{proof}[Proof of \Cref{lem:sum-of-lengths}]
  Consider iteration $\ell$ of the virtual procedure, which operates
  on cluster nodes $c$ having level $\ell$. Let $\calC_\ell$ denote
  the set $\calC$ at the end of this iteration. For a cycle
  $C \in \calC_\ell$, consider the cluster nodes at this level.  The  clusters $K(c)$
  corresponding to these nodes are disjoint, and hence edges internal
  to these clusters are disjoint.

  Now let $c$ be one such cluster node. For cycle $C \in \calC_\ell$,
  define $\calC(c,C)$ as the set of cycles $C'$ that are ancestors of
  $C$ that were valid and patched \emph{at the time $c$ was
    processed} by the virtual procedure $\calV$. For each cycle
  $C'\in \calC(c,C)$, \Cref{cor:total-cost-patching-one-cycle} bounds the increase in
  total cost due to patching $C'$. Hence, the total increase during
  iteration $\ell$ is at most
  \begin{gather*}
    \sum_{C \in
      \calC_\ell} \quad   \sum_{c \text{ at level } \ell} \quad \sum_{C' \in \calC(c,C)}
    2\,\left(\frac{\cost(C'\cap G^*[K(c)])}{(\Y/3)\cdot \Delta_{\ell}} \right)
    2\Delta_{\ell}\\
    \leq \frac{12}{\Y}\sum_{C \in \calC_\ell} \cost(C)
  \end{gather*}
  where we use the fact that for each cycle $C \in \calC_\ell$, the relevant
  valid clusters are disjoint.
  
We have shown that iteration $\ell$ increases the cost by at most a
factor of $1+\nicefrac{12}{\Y}$.  Because the number of iterations is
$O(\log n)$, we can choose $\Y = O(\eps^{-1} \log n)$ so that the
total increase over all iterations is at most $1+\eps$.  %
\end{proof}

\ForcingDef*

\begin{proof}[Proof of~\Cref{lem:forcing-def}]
  Let $\bar C$ be a cycle in the final collection $\calC$.  The nodes of
  $\calT$ for which $\bar C$ is valid form an ordinary hierarchical partition of
  $G^*$.  Define $\varphi_{\bar C}$ as the function that maps each such cluster node $c$ to
  its unique partition node child $p$ in this ordinary hierarchical
  partition; validity implies that $\psi[c,C] = p$ for an ancestor
  $C$ of $\bar C$. 

  Let $p$ be a partition node of $\calT|_{\varphi_{\bar C}}$.
  Let $c$ be the parent of $p$.  By definition of validity, 
  $\psi[c, C]$ was assigned $p$ for some ancestor $C$ of $\bar
  C$. 

  Case 1: $\psi[c,C]$ was assigned $p$ in some execution
  of~\Cref{algo:table-update}. In this case, $\pi(p) = \pi_{i,\kappa}$
  where $i$ is selected in~\Cref{algo:choice-of-psi} and $\kappa$ is
  selected in~\Cref{algo:choice-of-kappa} and $\pi_{i,\kappa}$ is
  derived in~\Cref{algo:MergeParts-step} by application of
  \textsc{MergeParts} to $\pi_i$ and $\kappa$.  By choice of $i$
  in~\Cref{algo:choice-of-psi},  at most $\Y$ edges of $C$ cross
  $\pi_i$, so the same holds for $\pi_{i,\kappa}$.

  Case 2: $\psi[c,C]$ was assigned $p$ in some execution
  of~\Cref{algo:shatter-table-update}.  In this case, $c$ is a part
  of a partition $\pi_{i,\kappa}$ selected in iteration $\ell=\log
  {\diameter}(G^*)$, where $i$ is selected in~\Cref{algo:choice-of-psi} and $\kappa$ is
  selected in~\Cref{algo:choice-of-kappa} and $\pi_{i,\kappa}$ is
  derived in~\Cref{algo:MergeParts-step} by application of
  \textsc{MergeParts} to $\pi_i$ and $\kappa$. Because $\pi_i$ is a
  $\Delta_{\log {\diameter}(G^*)+1}$-bounded partition,
  each part of $\pi_i$ has diameter less than one, but edge-lengths are integral
  and positive, so each part consists of a single vertex.  By
  choice of $\kappa$ and definition of \textsc{MergeParts}, each part
  of $\pi_{i,\kappa}$ contains at most one part $P$ of $\pi_i$ that
  contains an endpoint of an edge of $C$ internal to $K(c)$, which
  implies that no edge of $C$ is internal to a part of
  $\pi_{i,\kappa}$.  Therefore $C$ is 0-amenable for the shattering
  partition node of~\Cref{algo:shatter-partition-node}.
\end{proof}

\FailureProb*

\begin{proof}[Proof of Lemma~\ref{lem:failure-probability}]
  Consider iteration $\ell$ of the virtual
  procedure. Let $\calC_\ell$ denote the set $\calC$ at the end of
  this iteration. For a cycle $\bar C \in \calC_\ell$, consider the
  cluster nodes at this level for which $C$ is valid.
  The clusters $K(c)$ corresponding to these nodes $c$ are disjoint,
  and hence there are at most $n$ such clusters.

  For such a cluster node $c$, consider the ancestor $C$ of $\bar C$
  such that $\psi[c,C]$ is assigned a partition node in \Cref{algo:table-update}. Since $C$ was
  the result of patching a cycle with respect to $c$,
 \Cref{fact:cycle-cost} implies that
  the cost of non-special edges in $C$ is at most
  $(\nf{\Y}{3} +2)\Delta_\ell$. By the properties of random
  $\Delta$-bounded partitions, the expected number of nonspecial edges
  of $C$ crossing the random $\Delta_{\ell+1}$-bounded partition is
  bounded by $(\Y/3+2)\cdot \frac{\Delta_{\ell}}{\Delta_{\ell+1}}$,
  which is $2(\Y/3 + 2)$.  Counting the single special edge in the
  cycle, the expected number of edges internally crossing the
  partition is at most $2\Y/3 + 5$.  Therefore, by Markov's
  inequality, the number is at most $\Y$ with some constant
  probability.

  Thus over all $O(\log n)$ iterations of the for-loop
  in~\Cref{algo:foreye-virtual}, the probability that none of them
  selects a $\Delta_{\ell+1}$-bounded partition with at most $\Y$
  edges of $C$ internally crossing the partition is at most
  $\exp\{O(a \log n)\} = 1/n^{O(a)}$. Now we can take a naive union
  bound over all the cluster nodes $c$ at this level for which $\bar
  C$ is valid (of which there are at most $n$, since
  they are disjoint), over all cycles $\bar C \in \calC_\ell$, and over all
  levels $\ell$. These are only $O(n^3 \log n)$ events (see~\Cref{cor:quadratic}), so choosing
  $a$ to be a sufficiently large constant ensures that the success
  probability of the virtual procedure is at least $1-1/n$, as
  claimed.
\end{proof}

{%

\newcommand{\f}{\frac}
\newcommand{\cd}{\cdot}
\newcommand{\bn}{\binom}
\newcommand{\sr}{\sqrt}
\newcommand{\cds}{\cdots}
\newcommand{\lds}{\ldots}
\newcommand{\vds}{\vdots}
\newcommand{\dds}{\ddots}
\newcommand{\pge}{\succeq}
\newcommand{\ple}{\preceq}
\newcommand{\sm}{\setminus}
\newcommand{\s}{\subseteq}
\newcommand{\su}{\supseteq}
\renewcommand{\emptyset}{\varnothing}

\newcommand{\sumni}{\sum_{n=1}^\infty}
\newcommand{\sumin}{\sum_{i=1}^n}
\newcommand{\bigcupni}{\bigcup_{n=1}^\infty}
\newcommand{\bigcupin}{\bigcup_{i=1}^n}
\newcommand{\bigcapni}{\bigcap_{n=1}^\infty}
\newcommand{\bigcapin}{\bigcap_{i=1}^n}

\newcommand{\BE}{\begin{enumerate}}
\newcommand{\EE}{\end{enumerate}}
\newcommand{\im}{\item}
\newcommand{\BI}{\begin{itemize}}
\newcommand{\EI}{\end{itemize}}
\def\BAL#1\EAL{\begin{align*}#1\end{align*}}
\def\BALN#1\EALN{\begin{align}#1\end{align}}
\def\BG#1\EG{\begin{gather}#1\end{gather}}

\newcommand{\Sum}{\displaystyle\sum\limits}
\newcommand{\Prod}{\displaystyle\prod\limits}
\newcommand{\Int}{\displaystyle\int\limits}
\newcommand{\Lim}{\displaystyle\lim\limits}
\newcommand{\Max}{\displaystyle\max\limits}
\newcommand{\Min}{\displaystyle\min\limits}

\newcommand{\logn}{\log n}

\newcommand{\dx}{\frac d{dx}}
\newcommand{\dy}{\frac d{dy}}
\newcommand{\dz}{\frac d{dz}}
\newcommand{\dt}{\frac d{dt}}

\newcommand{\inv}{^{-1}}

\newcommand{\R}{\mathbb R}
\newcommand{\Z}{\mathbb Z}
\newcommand{\N}{\mathbb N}
\newcommand{\Q}{\mathbb Q}

\newcommand{\de}{\delta}
\newcommand{\De}{\Delta}
\newcommand{\la}{\lambda}
\newcommand{\g}{\gamma}
\newcommand{\pt}{\partial}
\newcommand{\al}{\alpha}
\newcommand{\be}{\beta}
\newcommand{\om}{\omega}
\newcommand{\Om}{\Omega}
\newcommand{\el}{\ell}
\renewcommand{\th}{\theta}
\newcommand{\Th}{\Theta}
\newcommand{\m}{\mathcal}

\newcommand{\Ra}{\Rightarrow}

\newcommand{\lf}{\lfloor}
\newcommand{\rf}{\rfloor}
\newcommand{\lc}{\lceil}
\newcommand{\rc}{\rceil}

\newcommand{\Var}{\textup{Var}}
\newcommand{\Cov}{\textup{Cov}}
\newcommand{\1}{\mathbbm 1}
\newcommand{\polylog}{\textup{polylog}}
\newcommand{\pl}{\textup{polylog}}
\newcommand{\norm}[1]{\left\lVert#1\right\rVert}
\newcommand{\vol}{\textbf{\textup{vol}}}

\newcommand{\rank}{\textup{rank}}
\newcommand{\spn}{\textup{span}}
\newcommand{\Tr}{\textup{Tr}}

\newcommand{\lp}{\left(}
\newcommand{\rp}{\right)}
\newcommand{\lb}{\left[}
\newcommand{\rb}{\right]}
\newcommand{\lmt}{\left[\begin{matrix}}
\newcommand{\rmt}{\end{matrix}\right]}

\newcommand{\BT}{\begin{theorem}}
\newcommand{\ET}{\end{theorem}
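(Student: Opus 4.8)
The plan is to pin down the single place where $\calV$ can abort, bound the abort probability for one cluster-node/cycle pair, and then union-bound over all relevant pairs, using that $|\calC|$ stays polynomial (\Cref{cor:quadratic}). Inspecting \Cref{algo:algoV}, the keyword \textbf{fail} occurs only in \Cref{algo:crossing-check}, so a failure happens exactly when, for some level $\ell$, some cluster node $c$ at level $\ell$, and some cycle $C$ that is valid for $c$ at the moment $c$ is processed, \emph{every} one of the $a\log n$ independent random $\Delta_{\ell+1}$-bounded partitions $\pi_1,\ldots,\pi_{a\log n}$ sampled for $c$ in \Cref{algo:foreye-virtual} has more than $\Y$ edges of $C$ crossing it. Crucially, by the time the crossing check runs, $C$ has already been passed through $\textsc{Patch}(\cdot,c,\Delta_\ell)$ in this same iteration, so $C$ is one of the cycles \emph{produced} by patching with respect to $c$.

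For a single partition $\pi_i$ and a fixed such pair $(c,C)$: since $C$ was produced by $\textsc{Patch}(\cdot,c,\Delta_\ell)$, \Cref{fact:cycle-cost} bounds the total cost of the non-special edges of $C$ internal to $K(c)$ by $(\Y/3+2)\Delta_\ell$, and by construction $C$ has at most one special edge internal to $K(c)$. By \Cref{thm:padded-dec}, the random $\Delta_{\ell+1}$-bounded partition cuts an internal edge $uv$ with probability at most $\beta\,\cost(uv)/\Delta_{\ell+1}$; summing over the non-special internal edges of $C$ and using $\Delta_\ell/\Delta_{\ell+1}=2$, the expected number of non-special internal edges of $C$ crossing $\pi_i$ is at most $2\beta(\Y/3+2)$, and adding the at-most-one special edge, the expected number of internal edges of $C$ crossing $\pi_i$ is $2\beta(\Y/3+2)+1 = O(\Y)$. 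With the constant in the $\textsc{Patch}$ threshold chosen appropriately relative to $\beta$ (this is what the ``$\Y/3$'' stands for in general), this expectation is at most, say, $\Y/2$, so by Markov's inequality $\pi_i$ is bad with probability at most some constant $q<1$. By independence of the $a\log n$ samples, the probability that $(c,C)$ causes a failure is at most $q^{\,a\log n}=n^{-\Omega(a)}$.

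It remains to union-bound over all $(c,C)$ for which a failure could occur. The structural point is that $\calC$ always forms a frontier of the tree of cycles --- it evolves only by replacing a cycle with the cycles that patch it --- so at any moment at most one ancestor of a given cycle $\bar C$ lies in $\calC$. Hence, for a fixed $\bar C$, the set of cluster nodes of $\calT$ for which $\bar C$ is valid is closed under taking ancestors and, at each cluster node $c'$ it contains, branches to a \emph{single} partition-node child, namely $\psi[c',C']$ for the unique ancestor $C'$ of $\bar C$ that was in $\calC$ when $c'$ was processed (this is the observation used in the proof of \Cref{lem:forcing-def}, now applied to intermediate collections). Therefore, at any fixed level, the cluster nodes for which $\bar C$ is valid have pairwise disjoint clusters and hence number at most $n$. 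Every cycle ever tested against a cluster in iteration $\ell$ is an ancestor of some cycle surviving to the end of iteration $\ell$, and by \Cref{cor:quadratic} the surviving collection $\calC_\ell$ has size $O(n^3)$; so over all $O(\log n)$ levels there are only $\poly(n)$ relevant triples (level, cluster node, cycle). Taking $a$ a sufficiently large constant, the union bound over these triples is at most $1/n$, which is the claim.

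I expect the last step to be the delicate one: making precise that every cycle tested during an iteration is an ancestor of a surviving cycle, and that the ``disjoint clusters, hence at most $n$'' count applies to the cycle at the moment its cluster is processed (not merely to the final collection, as in \Cref{lem:forcing-def}). The frontier property of $\calC$ is what makes this work, so I would state and use it explicitly. By contrast, the per-partition Markov estimate is routine given \Cref{fact:cycle-cost} and \Cref{thm:padded-dec}; the only care there is arranging the $\textsc{Patch}$ threshold constant so that the expected crossing count lies a constant factor below $\Y$.
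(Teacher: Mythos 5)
Your proposal is correct and matches the paper's proof of \Cref{lem:failure-probability} in all essentials: a Markov bound per cluster/cycle pair using \Cref{fact:cycle-cost} and the padded-decomposition guarantee, amplification over the $a\log n$ independent partitions, and a union bound over $O(n^3\log n)$ events via the disjointness of valid clusters at a level and \Cref{cor:quadratic}. Your extra care about the constant $\beta$ (retuning the \textsc{Patch} threshold so the expected crossing count sits a constant factor below $\Y$) and your explicit ``every tested cycle is an ancestor of a surviving cycle'' step are just slightly more pedantic versions of what the paper does implicitly.
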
}
\newcommand{\BL}{\begin{lemma}}
\newcommand{\EL}{\end{lemma}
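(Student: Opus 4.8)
The plan is to use that the only step where $\calV$ can fail is the crossing check (line~\ref{algo:crossing-check}), which is executed while processing some cluster node $c$ at some level $\ell$ together with some cycle $C\in\calC$ that is valid for $c$, and which triggers only when \emph{all} of the $a\log n$ independently sampled $\Delta_{\ell+1}$-bounded partitions $\pi_1,\dots,\pi_{a\log n}$ of $K(c)$ are crossed by more than $\Y$ edges of $C$. So I would first bound, for one fixed pair $(c,C)$, the probability $q$ that a \emph{single} random partition $\pi_i$ is crossed more than $\Y$ times by $C$; independence of the $\pi_i$'s then gives failure probability $q^{a\log n}$ for that pair, and a union bound over all $(c,C)$ for which a crossing check is run completes the argument.

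To bound $q$, the point is that by the time $c$ is processed, $C$ has already passed through the call \textsc{Patch}$(\cdot,c,\Delta_\ell)$, so $C$ is one of the cycles that call outputs (or is left unchanged by it), and $C$ is valid for $c$ — a precondition that \textsc{Patch}'s guarantees rely on. Hence \Cref{fact:cycle-cost} applies, bounding the total cost of the non-special edges of $C$ inside $K(c)$ by $O(\Y)\cdot\Delta_\ell$. An edge that is not internal to $K(c)$ cannot cross a partition of $K(c)$, so only these internal edges matter. By \Cref{thm:padded-dec}, a non-special internal edge $e$ crosses the $\Delta_{\ell+1}$-bounded partition $\pi_i$ with probability at most $\beta\,\cost(e)/\Delta_{\ell+1}$; summing over them and using $\Delta_\ell/\Delta_{\ell+1}=2$, the expected number of non-special edges of $C$ crossing $\pi_i$ is $O(\beta\Y)$. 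A patched cycle carries at most one special edge, contributing at most $1$. Choosing the constant in \textsc{Patch}'s threshold small enough relative to $\beta$ makes the expected total number of crossing edges at most, say, $\Y/2$, and Markov's inequality gives $q\le 1/2$; thus the failure probability for $(c,C)$ is at most $2^{-a\log n}=n^{-\Omega(a)}$.

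For the union bound: fix a level $\ell$ and a cycle $C$ present in $\calC$ at that level. The clusters $K(c)$ for which a crossing check with $C$ runs at level $\ell$ are pairwise disjoint subsets of $V(G^*)$, so there are at most $n$ of them; by \Cref{cor:quadratic}, $|\calC|=O(n^3)$ throughout, and there are $O(\log n)$ levels, so the number of crossing-check events is $\poly(n)$. Taking $a$ a sufficiently large constant makes $n^{-\Omega(a)}\cdot\poly(n)\le 1/n$, as claimed.

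The step I expect to be the main obstacle is the bound on $q$ — in particular, making the accounting of \Cref{fact:cycle-cost} interact correctly with the scale drop: the patched portion of $C$ inside the cluster is measured against $\Delta_\ell$ while the random partition lives at the halved scale $\Delta_{\ell+1}$, so one level of refinement is exactly what converts ``$\Theta(\Y)\cdot\Delta_\ell$ worth of cost'' into an \emph{expected} $O(\Y)$ crossings, and all constants (the padded-decomposition constant $\beta$, the doubled shortest paths introduced by patching, and the single special edge) must be tracked carefully so that this expectation stays a constant fraction of $\Y$ — otherwise Markov's inequality fails to give $q$ bounded away from $1$. The union-bound bookkeeping is then routine, relying only on disjointness of the clusters at each level and the polynomial bound on $|\calC|$.
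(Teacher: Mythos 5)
Your proof is correct and follows essentially the same route as the paper's: use \Cref{fact:cycle-cost} for the patched cycle, the padded-decomposition guarantee of \Cref{thm:padded-dec} at scale $\Delta_{\ell+1}$ plus the single special edge to bound the expected number of crossings, Markov's inequality to get a constant success probability per sampled partition, amplification over the $a\log n$ independent partitions, and a union bound over the at most $n$ disjoint valid clusters per level, the $O(n^3)$ cycles of \Cref{cor:quadratic}, and the $O(\log n)$ levels. Your explicit tracking of the constant $\beta$ (absorbed by adjusting the \textsc{Patch} threshold or the choice of $\Y$) is only a minor tightening of constants that the paper leaves implicit.
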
}
\newcommand{\BD}{\begin{definition}}
\newcommand{\ED}{\end{definition}}
\newcommand{\BC}{\begin{corollary}}
\newcommand{\EC}{\end{corollary}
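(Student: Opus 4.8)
In effect the statement asserts that from a $(\Y,\eps)$-\friendly $t$-\NDHC $\calT$ one can recover, in quasipolynomial time, a cut of $G$ of sparsity at most $2(1+\eps)$ times optimal; with \Cref{thm:structure} this yields \Cref{thm:main}. The plan is to write a Sherali--Adams-style linear program over $\calT$ and round it top-down along the tree. First I would set up the LP. For a cluster node $c$, a \emph{configuration} at $c$ records how a $\Y$-\amenable simple cycle can meet the cluster $K(c)$: which of the (at most $\Y$, by amenability) edges running between the $\le 2\Y$ sub-clusters of $c$, and which of the edges leaving $K(c)$ altogether, the cycle uses; the non-crossing pairing of these edges induced by the cycle's arcs inside $K(c)$ (non-crossing because $G^*$ is planar and $K(c)$ is a region); and, for each region of $K(c)$ thus cut out, whether it is enclosed by the cycle or not. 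There are $n^{O(\Y)}$ configurations per cluster. The LP has a variable, for each cluster node $c$, giving a joint weight on (a partition-child $p$ of $c$, a configuration at $c$, and a configuration at each child cluster of $p$); these encode a fractional ``forcing'' $\varphi$ together with a fractional trace of the target cut. I would impose: the root's trivial configuration has weight one; at each cluster node the partition-child weights sum to one; the tree version of Sherali--Adams local consistency (the configuration marginal at a cluster node equals the conditional marginal induced at its grandparent), exactly as for bounded-treewidth sparsest cut in~\cite{CKR10,GTW13-spcut}; and agreement of configurations on shared cluster-boundary edges. The objective is to minimize $\sum_e \cost(e)\,x_e$, with $x_e$ the linear total weight with which edge $e$ is used, subject to $\sum_{\{s,t\}}\demand(s,t)\,y_{st}\ge 1$, where $y_{st}$ is the linear expression for the weight with which the trace places faces $s$ and $t$ on opposite sides --- well defined since, once the forcing is fixed below their deepest common cluster, $s$ and $t$ lie in determined regions of it.

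For feasibility: because $\calT$ is $(\Y,\eps)$-\friendly there is a simple cycle $\widehat C$ of sparsity at most $(1+\eps)\Phi^*$ and a forcing $\varphi$ for which $\widehat C$ is $\Y$-\amenable with respect to $\calT|_\varphi$. Put all weight at each cluster node of $\calT|_\varphi$ on $\varphi(c)$ and on the configuration that $\widehat C$ actually induces there (and nothing on clusters outside $\calT|_\varphi$), then normalize by $\demand(\widehat C)$: this is LP-feasible with objective equal to the sparsity of $\widehat C$, so the LP optimum $\nu$ satisfies $\nu\le(1+\eps)\Phi^*$.

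For the rounding: starting at the root with its fixed configuration, recurse down $\calT$; at a cluster node $c$ whose configuration is already fixed, the LP induces a conditional distribution over (partition child $p$ of $c$, configurations of the child clusters of $p$) consistent with that choice --- sample from it, set $\varphi(c)=p$, fix the children's configurations, and recurse. Because $\calT$ is a tree, local consistency makes this well defined and marginal-preserving, so the sample is a single simple cycle $C$ of $G^*$ whose expected cost $\E[\cost(C)]$ equals the LP cost $\sum_e\cost(e)\,x_e=\nu$. The one constant lost is in the separated demand: for each pair $\{s,t\}$ the event that $C$ separates $s$ and $t$ is decided at a single node of $\calT$, and a short case analysis at that node --- identical in spirit to the bounded-treewidth argument of~\cite{GTW13-spcut} --- shows the separation probability is at least $\tfrac12 y_{st}$, so $\E[\text{separated demand}]\ge\tfrac12$. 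Hence $\E\big[\cost(C)-2\nu\cdot(\text{separated demand})\big]\le 0$, so some outcome of the rounding is a simple cycle of sparsity at most $2\nu\le 2(1+\eps)\Phi^*$; it is obtained deterministically by the method of conditional expectations along $\calT$, each conditional expectation being computable from the LP. Translating $C$ to a cut of $G$ by \Cref{lem:simple-cycle} finishes the proof, and the running time is that of solving and rounding an LP with $|\calT|\cdot n^{O(\Y^2)}=n^{O(\log^2 n/\poly(\eps))}$ variables and constraints.

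The main obstacle is twofold. First, the configurations and local-consistency constraints must be designed so that every feasible LP solution actually glues, level by level, into one globally consistent simple cycle of $G^*$; this is where planarity is essential, both to keep the per-cluster configuration count at $n^{O(\Y)}$ (non-crossing pairings, rather than $2^{\Theta(\Y)}$ arbitrary ones) and to make the cluster-boundary traces compatible. Second, and more delicate, is the factor-$2$ loss in the rounding: one has to pin down the single tree node that governs each demand pair and argue that conditioning there recovers at least half of the LP's fractional separation --- this is precisely where the ratio becomes $2$ rather than $1$. The rest (feasibility, cost preservation under the tree sampling, derandomization, and the runtime accounting) is routine.
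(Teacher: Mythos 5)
Your overall architecture matches the paper's: a Sherali--Adams-style LP over the \NDHC with per-node ``configuration'' variables, a top-down conditional sampling of a forcing together with configurations, expected cost equal to the LP cost, expected separated demand at least half the LP's, and a final expectation-to-ratio conversion. However, the two places you yourself flag as ``the main obstacle'' are exactly where the paper's proof does its real work, and your plan for them is either missing or heads down a path the paper deliberately avoids. First, you require the LP and the rounding to produce a single globally consistent \emph{simple cycle}, enforced through crossing-edge choices, non-crossing pairings, and local gluing constraints. Local (parent--children) consistency of a lifted LP does not certify global cycle structure, and the conditionally independent sampling in sibling subtrees will in general not stitch the locally sampled arcs into one simple cycle; nothing in your sketch establishes this. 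The paper sidesteps the issue entirely: its configurations are just the sets $S\in\calA^+(p)$ of enclosed boundary \emph{faces}, the rounded object is a face set $U$ --- a cut in the primal, not necessarily simple and never claimed to be a cycle --- and the cost and separated demand of $\delta_G(U)$ are compared to the LP directly via \Cref{lem:xfS} and \Cref{cor:cost}. Simplicity is needed only in the structure theorem, not for the rounded output.

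Second, the factor-$2$ step, which you describe as ``a short case analysis at a single node,'' is the crux and needs more machinery than your LP contains. For a demand pair $\{s,t\}$ that is resolved deep in two different branches below the lca, the event ``$s$ inside, $t$ outside'' is not a linear function of your single-node variables, so your $y_{st}$ is not well defined as stated; the paper introduces level-two lifted variables $x(\{p_s,p_t\},S)$ over pairs of partition nodes whose lca is a partition node, projection variables $y(p,\{s,t\},D,W)$ and $z(p,s,D,W)$, and consistent-marginal constraints, and then proves the $\tfrac12$ bound in \Cref{lem:demand} by combining (i) the conditional independence of the two branches under the top-down rounding with (ii) the explicit decoupling inequality $(a+b)(b+d)+(a+c)(c+d)\ge \tfrac12 (b+c)$ of \Cref{prop:decoupling}. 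Without the pair lift and this argument your demand bound is an assertion, not a proof. A smaller point: your normalization (``root configuration has weight one'') conflicts with rescaling the integral solution by $1/\demand(\widehat C)$ to meet the constraint $\sum \demand(s,t)\,y_{st}\ge 1$; the paper instead guesses the separated demand $\alpha$ in powers of $(1+\eps)$ (\Cref{lem:feasible}) and keeps the variables as genuine probabilities, which is also what the conditional rounding and \Cref{lem:roe-to-eor} require.
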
}
\newcommand{\BO}{\begin{observation}}
\newcommand{\EO}{\end{observation}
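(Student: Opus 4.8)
The plan is to pin down exactly when a failure can occur, bound the probability of a single such bad event, and then union-bound. A failure happens only at line~\eqref{algo:crossing-check} of \Cref{algo:algoV}: while some cluster node $c$ at some level $\ell$ is being processed, there is a cycle $C$ that is \emph{valid for} $c$ (in the sense of \Cref{def:validity}) such that \emph{every one} of the $a\log n$ random $\Delta_{\ell+1}$-bounded partitions $\pi_1,\dots,\pi_{a\log n}$ of $K(c)$ is crossed by more than $\Y$ edges of $C$. Note that by the order of operations in \Cref{algo:algoV}, any such $C$ has already been run through the call \textsc{Patch}$(\,\cdot\,,c,\Delta_\ell)$ at the top of the loop body for $c$, so \Cref{fact:cycle-cost} applies to it. So I would first bound the probability of this bad event for a single fixed triple $(\ell,c,C)$, then sum over triples.

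For a single partition: first condition on all the randomness used before $c$ is processed, so that $K(c)$ and $C$ are fixed and the partitions $\pi_1,\dots,\pi_{a\log n}$ are then drawn afresh and independently. By \Cref{fact:cycle-cost}, the total cost of the non-special edges of $C$ lying in $G^*[K(c)]$ is at most $(\Y/3+2)\Delta_\ell$ — this accounts for the arc of the original cycle between consecutive special vertices together with the two shortest-path detours, the latter being short because (by the Claim preceding \Cref{fact:cycle-cost}) all special vertices lie in a single $\Delta_\ell$-bounded part. By \Cref{thm:padded-dec}, each edge $e$ internal to $K(c)$ is cut by $\pi_i$ with probability $O(\cost(e)/\Delta_{\ell+1})$, so the expected number of non-special edges of $C$ crossing $\pi_i$ is $O\!\big((\Y/3+2)\,\Delta_\ell/\Delta_{\ell+1}\big)=O(\Y)$, using $\Delta_\ell=2\Delta_{\ell+1}$; with the constants in \textsc{Patch} and in \Cref{thm:padded-dec} taken appropriately this is at most, say, $\Y/2$. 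The cycle $C$ contains at most one special edge lying in $K(c)$, which deterministically adds at most $1$ more crossing, so the expected number of edges of $C$ internal to $K(c)$ that cross $\pi_i$ is still below $\Y$ (using that $\Y=\Theta(\eps^{-1}\log n)$ exceeds any fixed constant for $n$ large). By Markov's inequality, $\Pr[\,\text{more than }\Y\text{ edges of }C\text{ cross }\pi_i\,]\le q$ for some absolute constant $q<1$.

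Next I would amplify and union-bound. The $\pi_i$ are mutually independent given the history before $c$ is processed, so the probability that all $a\log n$ of them are bad for $C$ is at most $q^{a\log n}=n^{-\Omega(a)}$. Finally, the triples $(\ell,c,C)$ over which a failure could possibly occur number only $\poly(n)$: there are $O(\log n)$ levels, at most $O(n^3)$ cycles in $\calC$ at any time (\Cref{cor:quadratic}), and — since the clusters of a single level are vertex-disjoint — each cycle is valid for at most $n$ cluster nodes at a given level. Choosing the constant $a$ large enough that $n^{-\Omega(a)}\cdot\poly(n)\le 1/n$, a union bound over these events yields the claim.

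The real content here is entirely carried by the two results I invoke: \Cref{fact:cycle-cost}, which is precisely the guarantee \textsc{Patch} was designed to deliver (every patched piece of $C$ has cost $O(\Y\Delta_\ell)$ inside the cluster), and the padded-decomposition bound \Cref{thm:padded-dec}. Given those, the only points needing care are (i) confirming that the cycles examined at line~\eqref{algo:crossing-check} are exactly the outputs of the \textsc{Patch} call just made for $c$, so that \Cref{fact:cycle-cost} is applicable to them, and (ii) that each cluster's $a\log n$ partitions are fresh randomness, independent of one another and of the cycle $C$, which is what legitimizes both the amplification over the $a\log n$ tries and the union bound. I expect (i) — keeping track of which cycles are ``valid for $c$'' at the instant of the check, and of how many clusters a given cycle can be valid for — to be the most delicate bookkeeping point; the probabilistic part is just Markov plus independence.
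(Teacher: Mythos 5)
Your proposal is correct and follows essentially the same route as the paper's proof of the failure-probability lemma: apply \Cref{fact:cycle-cost} to the freshly patched cycle to bound the cost of its non-special edges inside $K(c)$ by $(\Y/3+2)\Delta_\ell$, use the padded-decomposition guarantee plus $\Delta_\ell = 2\Delta_{\ell+1}$ and Markov to get a constant failure probability per partition (accounting separately for the single special edge), amplify over the $a\log n$ independent partitions, and union-bound over levels, the $O(n^3)$ cycles of \Cref{cor:quadratic}, and the at most $n$ disjoint valid clusters per level. Your extra care about the constant $\beta$ from \Cref{thm:padded-dec} and about conditioning on the history before $c$ is processed only makes the bookkeeping slightly more explicit than the paper's write-up.
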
}
\newcommand{\BCL}{\begin{claim}}
\newcommand{\ECL}{\end{claim}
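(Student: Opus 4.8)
The plan is to localise failures and then union-bound. Inspecting \Cref{algo:algoV}, procedure $\calV$ executes \textbf{fail} at exactly one place, \Cref{algo:crossing-check}, and this line is reached once for every pair $(c,C)$ in which $c$ is a nonleaf cluster node --- say at level $\ell$ --- and $C$ is a cycle that lies in $\calC$ and is valid for $c$ at the moment $c$ is processed. A failure at $(c,C)$ is precisely the event that each of the $a\log n$ independently sampled $\Delta_{\ell+1}$-bounded partitions $\pi_1,\dots,\pi_{a\log n}$ of $K(c)$ is crossed by more than $\Y$ edges of $C$. So it suffices to bound this probability for a fixed pair, and then bound the number of such pairs.

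For the per-pair bound, fix $(c,C)$ with $c$ at level $\ell$ and condition on the history of $\calV$ up to the point at which $c$ is processed; this fixes $K(c)$ and the cycle $C$. Since $\calV$, before running the crossing check, replaces every cycle valid for $c$ by the output of {\sc Patch}$(\cdot,c,\Delta_\ell)$ applied to it, the cycle $C$ is one of the cycles produced by such a call. Hence by \Cref{fact:cycle-cost} the non-special edges of $C$ internal to $K(c)$ have total cost at most $(\nicefrac{\Y}{3}+2)\,\Delta_\ell$, while by construction $C$ carries at most one special edge, which is internal to $K(c)$. Now fix a sampled partition $\pi_i$: it is $\Delta_{\ell+1}$-bounded and $\Delta_\ell=2\Delta_{\ell+1}$, so by \Cref{thm:padded-dec} each internal edge $e$ of $C$ crosses $\pi_i$ with probability at most $\beta\,\cost(e)/\Delta_{\ell+1}$. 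Summing over the non-special internal edges of $C$ and adding the single possible special edge, the expected number of edges of $C$ crossing $\pi_i$ is at most $2\beta(\nicefrac{\Y}{3}+2)+1$; choosing the constant in the patching threshold small enough relative to $\beta$ (the value $\nicefrac{1}{3}$ is a free parameter that affects nothing but other constants) makes this at most $\nicefrac{\Y}{2}$, so Markov's inequality gives that $\pi_i$ is crossed more than $\Y$ times with probability at most $\nicefrac{1}{2}$. The $\pi_i$ are independent, so the probability of a failure at $(c,C)$ is at most $2^{-a\log n}=n^{-\Omega(a)}$.

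For the count, note that for a fixed cycle $C$ and a fixed level $\ell$, the clusters $K(c)$ over cluster nodes $c$ at level $\ell$ for which $C$ is valid are pairwise-disjoint subsets of $V(G^*)$, so there are at most $n$ of them; by \Cref{cor:quadratic} the collection $\calC$ has $O(n^3)$ cycles at all times; and there are $O(\log n)$ levels. Hence $\calV$ performs the crossing check on only $\poly(n)$ pairs $(c,C)$. A union bound over these, each of probability $n^{-\Omega(a)}$, shows that for a sufficiently large absolute constant $a$ no failure occurs except with probability at most $\nicefrac{1}{n}$.

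The main obstacle is the per-pair expectation bound, and it is where patching is essential: without the preceding {\sc Patch}$(\cdot,c,\Delta_\ell)$, $\cost(C\cap G^*[K(c)])$ could be arbitrarily large compared with the decomposition scale $\Delta_{\ell+1}$, and a $\Delta_{\ell+1}$-bounded partition would be expected to cut far more than $\Y$ edges of $C$. Making \Cref{fact:cycle-cost} usable relies on the validity precondition (which is what pins the shortest paths added by {\sc Patch} to cost at most $\Delta_\ell$, since it places the relevant vertices inside a single part of an earlier $\Delta_\ell$-bounded partition) and on the structural fact that each patched cycle carries at most one special edge within $K(c)$; one also has to pick the patching-threshold constant, and then $a$, jointly with $\beta$ so that Markov yields a probability bounded away from $1$ and the final union bound closes. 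Everything else follows routinely from \Cref{cor:quadratic}, which itself rests on \Cref{lem:sum-of-lengths}.
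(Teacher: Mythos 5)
Your proof is correct and follows essentially the same route as the paper's: bound the expected number of crossings of a patched cycle via \Cref{fact:cycle-cost} and the padded-decomposition guarantee, apply Markov per sampled partition, use independence over the $a\log n$ trials, and union-bound over the at most $n$ disjoint valid clusters per level, the $O(n^3)$ cycles from \Cref{cor:quadratic}, and the $O(\log n)$ levels. Your explicit handling of the constant $\beta$ (by retuning the patching threshold) is a slightly more careful bookkeeping of a constant the paper's proof leaves implicit, but it is not a different argument.
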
}
\newcommand{\BSCL}{\begin{subclaim}}
\newcommand{\ESCL}{\end{subclaim}}
\newcommand{\BF}{\begin{fact}}
\newcommand{\EF}{\end{fact}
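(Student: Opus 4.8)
The plan is to pin down exactly when $\calV$ can fail, bound the failure probability of each ``local'' bad event, and then union-bound. Inspecting \Cref{algo:algoV}, the only place $\calV$ can fail is \Cref{algo:crossing-check}: a failure at a cluster node $c$ (say at level $\ell$) for a cycle $C\in\calC$ valid for $c$ happens precisely when \emph{all} $a\log n$ of the independently drawn $\Delta_{\ell+1}$-bounded partitions $\pi_i$ of $K(c)$ have more than $\Y$ edges of $C$ crossing them. So it suffices to bound, for each pair $(c,C)$, the probability of this event, and then sum over all triples $(\ell,c,C)$.

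First I would control the cost of the edges of $C$ inside $K(c)$, modulo one exceptional edge. If $C$ was produced by the call \textsc{Patch}$(\cdot,c,\Delta_\ell)$ made at the start of processing $c$, then \Cref{fact:cycle-cost} bounds the cost of the non-special edges of $C$ inside $G^*[K(c)]$ by $(\Y/3+2)\Delta_\ell$, and the structure of \textsc{Patch} — each output cycle is an arc of the patched cycle between two consecutive special vertices together with two shortest paths — shows $C$ has at most one special edge internal to $K(c)$. If instead $C$ survived the patching step unchanged, then the patching threshold gives $\cost(C\cap G^*[K(c)]) \le (\Y/3)\Delta_\ell$ with no edge special with respect to $c$, so the same bound holds with no exceptional edge. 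Now \Cref{thm:padded-dec} says each non-special internal edge $e$ of $C$ crosses the random $\Delta_{\ell+1}$-bounded partition $\pi_i$ with probability at most $\beta\,\cost(e)/\Delta_{\ell+1}$, so by linearity of expectation and $\Delta_\ell=2\Delta_{\ell+1}$ the expected number of non-special internal edges of $C$ crossing $\pi_i$ is at most $2\beta(\Y/3+2)$; adding the lone special edge gives expectation at most $2\beta(\Y/3+2)+1$. Choosing the constants hidden in $\Y=\Theta(\eps^{-1}\log n)$ and in the patching threshold (written $\Y/3$ above) appropriately makes this expectation a constant factor below $\Y$, so Markov's inequality yields $\Pr[\,|\{e\in C: e\text{ crosses }\pi_i\}| > \Y\,]\le q$ for some fixed $q<1$; since the $\pi_i$ are independent, the probability of failure at $(c,C)$ is at most $q^{a\log n}=n^{-\Omega(a)}$.

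It remains to union-bound over all bad events. There are $O(\log n)$ levels; at a fixed level the clusters $K(c)$ are pairwise disjoint subsets of $V(G^*)$, so there are at most $n$ cluster nodes; and by \Cref{cor:quadratic} the collection $\calC$ never holds more than $O(n^3)$ cycles. Hence there are only $\poly(n)$ bad events, and taking the constant $a$ in \Cref{algo:foreye-virtual} large enough makes their total probability at most $\nf1n$. (Along the way one should check that \textsc{Patch} applied to a cycle valid for $c$ produces only cycles valid for $c$, so that the fail check is only ever run on valid cycles; this is immediate from \Cref{def:validity}, since validity at $c$ depends only on the values $\psi[\cdot,\cdot]$ at proper ancestors of $c$, which the patch does not touch.)

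The hard part is the second paragraph: invoking \Cref{fact:cycle-cost} correctly in both cases — in particular when $C$ persists across a level without being re-patched — and calibrating constants so that the expected number of internal crossings of $C$ is bounded away from the threshold $\Y$, since only then does Markov's inequality give a per-partition failure probability strictly below $1$, which is exactly what the ``$a\log n$ independent tries'' argument needs. Everything else is routine counting.
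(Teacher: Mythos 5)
Your proposal is correct and follows essentially the same route as the paper's proof of \Cref{lem:failure-probability}: bound the cost of the (post-patch) cycle inside $K(c)$ via \Cref{fact:cycle-cost}, use \Cref{thm:padded-dec} plus $\Delta_\ell = 2\Delta_{\ell+1}$ and Markov to get a constant per-partition success probability, amplify over the $a\log n$ independent partitions, and union-bound over levels, disjoint clusters, and the $O(n^3)$ cycles from \Cref{cor:quadratic}. Your explicit tracking of the constant $\beta$ and of the unpatched case ($\cost(C\cap G^*[K(c)])\le(\Y/3)\Delta_\ell$) is a slightly more careful bookkeeping of the same argument, not a different approach.
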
}
\newcommand{\BA}{\begin{assumption}}
\newcommand{\EA}{\end{assumption}}
\newcommand{\BP}{\begin{proof}}
\newcommand{\EP}{\end{proof}}
\newcommand{\BSP}{\begin{subproof}}
\newcommand{\ESP}{\end{subproof}}
\newcommand{\BPS}{\begin{proof}[Proof (Sketch)]}
\newcommand{\EPS}{\end{proof}}
\Crefname{observation}{Observation}{Observations}
\Crefname{claim}{Claim}{Claims}
\Crefname{subclaim}{Subclaim}{Subclaims}
\Crefname{fact}{Fact}{Facts}
\Crefname{assumption}{Assumption}{Assumptions}

\newenvironment{subproof}[1][\proofname]{%
  \renewcommand{\qedsymbol}{$\diamond$}%
  \begin{proof}[#1]%
}{%
  \end{proof}%
}

\newcommand{\para}{\paragraph}

\newcommand{\tO}{\tilde{O}}

\newcommand{\thml}[1]{\label{thm:#1}}
\newcommand{\thm}[1]{\Cref{thm:#1}}
\newcommand{\leml}[1]{\label{lem:#1}}
\newcommand{\lem}[1]{\Cref{lem:#1}}
\newcommand{\defnl}[1]{\label{def:#1}}
\newcommand{\defn}[1]{\Cref{def:#1}}
\newcommand{\clml}[1]{\label{clm:#1}}
\newcommand{\clm}[1]{\Cref{clm:#1}}
\newcommand{\corl}[1]{\label{cor:#1}}
\newcommand{\cor}[1]{\Cref{cor:#1}}
\newcommand{\obsl}[1]{\label{obs:#1}}
\newcommand{\obs}[1]{\Cref{obs:#1}}
\newcommand{\eqnl}[1]{\label{eq:#1}}
\newcommand{\eqn}[1]{(\ref{eq:#1})}
\newcommand{\linel}[1]{\label{line:#1}}
\renewcommand{\line}[1]{line~\ref{line:#1}}
\newcommand{\secl}[1]{\label{sec:#1}}
\renewcommand{\sec}[1]{\Cref{sec:#1}}

\newcommand{\inside}{\textsf{inside}}
\newcommand{\fpath}{\textsf{partn\_path}}
\newcommand{\parent}{\textsf{par}}
\newcommand{\GG}{\overline{\Gamma}}

\section{Finding a Sparse Cut via LPs}
\label{sec:sa}

The structure theorem (\Cref{thm:structure}) from
\S\ref{sec:structure-theorem} gives us a $t$-non\-deterministic
hierarchical clustering $\calT$ of the dual graph $G^*$ that is
$(Z,\e)$-friendly with high probability.
Given such a clustering $\calT$, we need to find a good forcing and a good ``low-complexity''
solution with respect to it. The natural approach to try is 
dynamic programming, but no such approach is currently known for
non-uniform sparsest cut; for example, the problem is NP-hard even for
treewidth-$2$ graphs. Hence we solve a linear program and
round it.
Our linear program is directly inspired by the Sherali-Adams approach
for bounded-treewidth graphs, augmented with ideas specific to  the planar
case. It encodes a series of choices giving us a forcing, and also
choices about how a $(1+\eps)$-approximate solution crosses the
resulting clustering. Of course, these choices are fractional, so we
need to round them, which is where we lose the factor of $2$.

\subsection{Notation}

As in
the previous sections, we work  on the planar dual
$G^*=(V^*,E^*)$. Given a partition $\pi$ of a vertex set $U\s V^*$, let
$\partial(\pi)$ denote the \emph{boundary}, i.e., the faces in $G^*$
whose vertices are all in $U$ but belong to more than one part of this partition; see
Figure~\ref{fig:Ap}.\footnote{This boundary $\partial(p)$ is a
  collection of faces of the dual, and hence differs from the typical
  edge-boundary.} Recall that for a partition node $p$ with parent cluster node $c$, we defined $\pi(p)$ as a partition of $K(c)$. In this section, we would like to extend the partition to all of $V^*$ by looking at the partition nodes that are ancestors of $p$. For a (cluster or partition) node $a$ in the tree,
define $\fpath(a)$ as the partition nodes on the path from the root of $\m T$ to $a$,
inclusive.

For a partition node $p$ with parent cluster node $c$, define $\pi^+(p)$ as the following partition of $V^*$: its parts are the parts of $\pi(p)$, together with all parts $P$ in $\pi(p')$ over all partition nodes $p'\in \fpath(p)$, \emph{except} the parts $P$ that contain $K(c)$. By the hierarchical nature of $\m T$, these parts form a valid partition of $V^*$, which we call $\pi^+(p)$. For ease of notation, we abbreviate $\pt(p):=\pt(\pi(p))$ and $\pt^+(p):=\pt(\pi^+(p))$. While we will not need it, the reader can verify that $\pt^+(p)=\biguplus_{p'\in\fpath(p)}\pt(p')$, where the $\biguplus$ indicates disjoint union.%

For any simple 
cycle $C$ in $G^*$, let $\inside(C)$ be the
set of faces of $G^*$ (corresponding to vertices of $G$) inside $C$. Informally, we now define $\m A^+(p)$ as the collection of all subsets of $\pt^+(p)$ that can
comprise the faces of $\pt^+(p)$ inside any cycle $C$ in $G^*$ that is $Z$-amenable for each partition node $p'\in\fpath(p)$. %
Formally, $A^+(p)$ is the set
\BAL
\m \{ \inside(C) \cap \pt^+(p) \mid C \text{ is } & \text{$Z$-\amenable
      for all } p'\in\fpath(p) \}.
\EAL

The following lemma lies at the heart of our ``low-complexity'' argument: the size of $\m A^+(p)$ is quasi-polynomially bounded.

\begin{lemma}\leml{size-A}
Let $\m T$ be $(Z,\e)$-friendly with depth $H$. Then,
  $|\m A^+(p)| \le O(n)^{(H+1)Z}$, and we
  can compute it in time
  $O(n)^{O(H+1)Z}$.
\end{lemma}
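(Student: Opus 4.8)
The goal is to bound $|\mathcal A^+(p)|$ by $O(n)^{(H+1)Z}$ and to enumerate the set in comparable time. The key observation is that a set $S \in \mathcal A^+(p)$ — that is, $S = \inside(C) \cap \pi^+(p)$ for some cycle $C$ that is $Z$-amenable for every partition node $p' \in \fpath(p)$ — is completely determined by a small amount of ``interface'' data, one piece for each partition node on the root-to-$p$ path. Since $|\fpath(p)| \le H+1$ and each piece of data has only $O(n)^{Z}$ possible values, the product bound follows.

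\textbf{Key steps.} First I would set up the right notion of interface data. For a single partition node $p' \in \fpath(p)$ with parent cluster node $c'$, the cycle $C$ crosses $\pi(p')$ in at most $Z$ edges (by $Z$-amenability). Walking along $C$, these at most $Z$ crossing edges cut $C$ into at most $Z$ arcs, each of which lies entirely within a single part of $\pi(p')$ (restricted to $K(c')$) or entirely outside $K(c')$; consequently each part $P \in \pi(p')$ is either entirely inside $C$, entirely outside $C$, or ``straddled'' by $C$, and the straddled parts are exactly those containing an endpoint of one of the $\le Z$ crossing edges — at most $2Z$ of them. So the data I record for $p'$ is: the (at most $Z$) crossing edges of $C$ on $\pi(p')$, together with, for each part $P \in \pi(p')$, one trit recording inside/outside/straddled. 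The number of choices here is at most $(|E^*|+1)^{Z} \cdot 3^{|\pi(p')|}$... but $|\pi(p')|$ can be large, so this naive count is too weak; I must do better by noting that once the crossing edges are fixed, the straddled parts are determined (they contain the endpoints), and the inside/outside status of the remaining parts is not free either. Here is the cleaner route: fix the $\le Z$ crossing edges of $C$ on $\pi(p')$; these lie on $C$, and in the planar embedding they partition the plane so that the ``inside'' region is determined up to a global choice — actually the cleanest statement is that the $\le Z$ crossing edges, together with the cyclic structure of $C$, determine which arcs of $C$ are traversed, hence determine $\inside(C)$ restricted to the parts of $\pi(p')$, up to at most $2^{O(Z)}$ combinatorial choices of how the arcs close up. So the number of possible ``restrictions to $\pi(p')$'' is at most $(|E^*|+1)^{Z} \cdot 2^{O(Z)} = O(n)^{Z}$.

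\textbf{Assembling the bound.} Since $S = \inside(C) \cap \pi^+(p)$ and $\pi^+(p) = \biguplus_{p' \in \fpath(p)} \pi(p')$ (the identity noted just before the lemma), $S$ is the disjoint union over $p' \in \fpath(p)$ of $\inside(C) \cap \pi(p')$. Each of these $\le H+1$ restrictions takes at most $O(n)^{Z}$ values by the previous step, so $|\mathcal A^+(p)| \le \big(O(n)^{Z}\big)^{H+1} = O(n)^{(H+1)Z}$. For the algorithmic claim, I would enumerate: for each $p' \in \fpath(p)$, iterate over all $\le Z$-subsets of $E^*$ as candidate crossing edges and over the $2^{O(Z)}$ closure patterns, check feasibility (that some $Z$-amenable cycle realizes this pattern on $\pi(p')$ — or more simply, just generate the candidate restriction set and keep all of them, since overcounting is harmless for an upper bound and we can deduplicate at the end), and take the product over $p'$; the total work is $O(n)^{O((H+1)Z)}$ as claimed. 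One should double-check that the ``keep all candidates'' relaxation still yields a superset of $\mathcal A^+(p)$ of the stated size — it does, because every genuine $S$ arises from a genuine cycle whose per-node pattern is among those enumerated.

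\textbf{Main obstacle.} The delicate point is the second step: arguing that fixing the $\le Z$ crossing edges of $C$ on a single partition $\pi(p')$ pins down $\inside(C) \cap \pi(p')$ up to only $2^{O(Z)}$ choices, rather than $2^{\#\text{parts}}$ choices. This requires using planarity and the Jordan-curve structure of the simple cycle $C$: the crossing edges, being $\le Z$ in number and lying on a simple closed curve, chop $C$ into $\le Z$ arcs; each arc lies in one side of $\pi(p')$; and the global orientation of $C$ together with the cyclic order of the arcs determines, for each part, whether it is enclosed. I expect the cleanest way to nail this down is to observe that $\inside(C)$ is determined by $C$ as an edge set (via Proposition~\ref{lem:simple-cycle} and the $U(C)$ construction), and that knowing which $\le Z$ edges of $C$ cross $\pi(p')$ plus the induced equivalence classes of the other edges of $C$ within parts is enough; the arc-counting bounds the latter by $2^{O(Z)}$. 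If made rigorous this gives exactly the claimed bound, and it is the only place where more than bookkeeping is needed.
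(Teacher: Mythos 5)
You have put your finger on the right difficulty, but the step you defer is the entire content of the lemma, and the per-node form in which you need it is false. Your count factorizes over the nodes of the root-to-$p$ path: you claim that for a single partition node $p'$ with parent cluster $c'$, the at most $Z$ edges of $C$ crossing $\pi(p')$ determine $\mathsf{inside}(C)\cap\partial(\pi(p'))$ up to $2^{O(Z)}$ choices, justified by the assertion that these crossings cut $C$ into arcs each lying inside a single part of $\pi(p')$ or entirely outside $K(c')$. That assertion is wrong: an edge crossing $\pi(p')$ must be internal to $K(c')$, so between two consecutive such crossings the cycle may leave $K(c')$ and re-enter it in a \emph{different} part, using only edges that cross ancestor partitions on the path. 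In the extreme case $C$ has no edge internal to $K(c')$ at all (zero crossings of $\pi(p')$), yet it can pass through vertices of $K(c')$ and enclose very different subsets of the boundary faces $\partial(\pi(p'))$, the choice being governed by up to $HZ$ crossing edges at ancestor levels. So the marginal number of values of $\mathsf{inside}(C)\cap\partial(\pi(p'))$ is not bounded by $O(n)^{Z}$; it can depend on roughly $HZ$ edges, and multiplying per-node marginals would then give an exponent of order $H^2Z$, not $(H+1)Z$. The $2^{O(Z)}$ ``closure pattern'' fudge factor cannot absorb this, because the missing information is the location of ancestor-level crossings, which ranges over $n^{\Theta(HZ)}$ possibilities.

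The paper avoids per-node localization entirely: it works with the combined partition $\pi^+(p)$. It first charges every edge of $C$ crossing $\pi^+(p)$ to a crossing of some $\pi(p')$ on the root-to-$p$ path (for an edge leaving a cluster, the charge goes to the first level at which its endpoints are separated), so $Z$-amenability at each of the at most $H+1$ nodes gives at most $(H+1)Z$ crossings of $\pi^+(p)$ in total; it then argues that this single joint crossing set already determines $\mathsf{inside}(C)\cap\partial(\pi^+(p))$, because two cycles with the same crossing set can differ only within parts; counting choices of at most $(H+1)Z$ edges gives $O(n)^{(H+1)Z}$, and the enumeration guesses the crossing set together with a cyclic order of the crossings and checks that some cycle realizes it. Your joint enumeration (taking the product of per-node candidate crossing sets) is essentially the same data, since the union over the path of the per-node crossing sets is exactly the crossing set of $\pi^+(p)$; but then the determination statement you must prove is the joint one about $\pi^+(p)$, not the per-node one — and that is precisely the step you left as ``if made rigorous.'' A secondary issue: your ``keep all candidates and deduplicate'' relaxation only produces a superset of $\mathcal{A}^+(p)$, whereas the lemma (and its later use in the LP) asks for the set itself; the paper's realizability check is what ensures this.
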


\begin{proof}%
Consider a cycle $C$ that is $Z$-amenable for each normal partition
node $p'\in\fpath(p)$ and $0$-amenable for each shattering partition
node $p'\in\fpath(p)$. We first claim that $C$ must cross $\pi^+(p)$
at most $(H+1)Z$ times. Consider an edge $e$ in $C$ whose endpoints
belong to different parts of $\pi^+(p)$. By construction of $\pi^+(p)$
and the $0$-amenability of $Z$ for shattering partition nodes, there
must exist a normal partition node $p'\in\fpath(p)$ such that the endpoints of $e$ belong to different parts of $\pi(p')$. In other words, we can charge each crossing of $\pi^+(p)$ to a crossing of some $\pi(p')$ for some normal partition node $p'\in\fpath(p)$. Since $C$ is $Z$-amenable for normal partition node $p'\in\fpath(p)$, each one can be charged at most $Z$ times, and $|\fpath(p)|\le H+1$, so there are $(H+1)Z$ total crossings of $\pi^+(p)$.

Next, observe that if two cycles $C,C'$ cross the same edges in $\pi^+(p)$, then $\inside(C) \cap \pt^+(p)= \inside(C') \cap \pt^+(p)$, since they can only differ ``within'' parts of $\pi^+(p)$. It follows that $|\m A^+(p)|$ is at most the number of ways to choose up to $(H+1)Z$ crossings of $\pi^+(p)$, which is $O(n)^{(H+1)Z}$.

We can compute $\m A^+(p)$ as follows. First, guess the at most $Z$ crossings for each normal partition node $p'\in\fpath(p)$, and guess one of the at most $((H+1)Z)!$ cyclic orderings of the crossings. Not every cyclic ordering of crossings may correspond to a valid cycle, but it is easy to check in polynomial time whether a cycle exists, and if so, find such a cycle and subsequently compute $\inside(C) \cap \pt^+(p)$.
\end{proof}

We now extend these definitions to pairs of nodes: for 
partition nodes $\{p,p'\}$ in $\calT$, define
\begin{OneLiners}
\item[(i)] $\pt^+(\{p,p'\}):=\pt^+(p) \cup \pt^+(p')$, and
\item[(ii)]
  $\m A^+(\{p,p'\}) := \{ S_p \cup S_{'p} \mid S_p \in\m A^+(p), S_{p'} \in\m A^+(p')\}$; %
  note that
  $\m A^+(\{p,p'\})\supseteq \m A^+(p) \cup \m A^+(p')$ and strict
  containment is possible.
\end{OneLiners}
\Cref{lem:size-A} implies that $|\m A^+(\{p,p'\})| \le n^{O(HZ)}$.
Finally, for
any two nodes $a$ and $b$ in $\calT$, define $\textsf{lca}(a,b)$ as the \emph{lowest
  common ancestor} node of $a$ and $b$ in $\calT$.

\begin{figure} \centering
\begin{subfigure}[t]{.3\textwidth} \centering
  \includegraphics[width=1.5in,page=1]{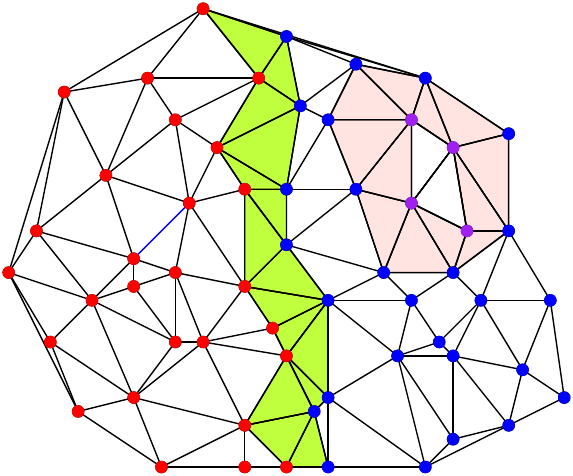}
\end{subfigure}
\quad
\begin{subfigure}[t]{.3\textwidth} \centering
  ~~~\includegraphics[width=1.5in,page=2]{figures/boundary-etc.pdf}
\end{subfigure}
\quad
\begin{subfigure}[t]{.3\textwidth} \centering
  ~~~\includegraphics[width=1.5in,page=3]{figures/boundary-etc.pdf}
\end{subfigure}
\caption{(a) A partition $\pi$ of vertices $U\s V^*$ into three parts (with nodes in
  red, blue, and purple), and with the boundary $\partial(\pi)$ colored
  in. (b) A cycle $C$, and (c) the collection of faces
  $\inside(C) \cap \partial(\pi)$ in bright blue. Since the cycle has
  $8$ edges cut, it is $Z$-amenable and this collection of faces 
  belongs to $\calA(\pi)$ as long as $Z \geq 8$. }
\label{fig:Ap}
\end{figure}

\subsection{Variables and Constraints}
\label{sec:variables}

In this section, we introduce the variables and constraints of our
linear program. If these variables were to take on Boolean values,
they would encode a $Z$-\amenable solution $C$. Of course, the optimal
LP solution will be fractional, and we give the rounding in the next
section. The consistency constraints will be precisely those needed
for the rounding, and should be thought of as defining a suitable
pseudo-distribution that can then be rounded.

Assume we start with a $Z$-\friendly $\m T$.
We begin with defining the $x$-variables. For each partition node $p$ and set of (dual)
faces $S\in\m A^+(p)$, declare a variable $x(\{p\},S)$ with the
constraint
\begin{gather}
  0\le x(\{p\}, S)\le1, \label{eq:x1}
\end{gather}
which represents whether or not the sparsest cut solution $C$ (treated
as a set of faces) satisfies $C\cap \pt^+(p)=S$. In other words,
$x(\{p\},S) = 1$ says that $S$ is the set of faces that lie inside the
target solution, and also belong to $\pt(q)$ for some partition node
$q$ that is either $p$ or an ancestor of $p$.

We also define variables that represent a ``level-two lift'' of these
variables, which capture the same idea for pairs of faces. For each pair of distinct partition nodes $p,p'$ whose lowest
common ancestor $\textsf{lca}(p,p')$ is a partition node, define a variable
$x(\{p,p''\},S)$ with the constraint
\begin{gather}
  0\le x(\{p,p''\},S)\le1, \label{eq:x2}
\end{gather}
which represents whether or not
$C\cap (\pt^+(\{p,p''\}))=S$.

\textbf{Consistency.}
Next we impose ``consistency''
constraints on these variables. For the root node
$\widehat p$, %
we add the constraint %
\begin{gather}
  x(\{\widehat p\},\emptyset) %
  = 1 . \label{eq:consis1}
\end{gather}
 
Recall that cluster partition nodes and cluster nodes alternate in $\m T$, and recall the notion of forcing and relevance from \Cref{def:forcing}. We impose a constraint capturing (i) a ``relaxation'' of
forcings and relevance, where each cluster node fractionally chooses one of its
children partition nodes $p_i$, and (ii) a ``relaxation" of determining which faces in $\pt^+(p_i)$ (for the chosen $p_i$) are contained in $C$. Formally, for each cluster node $c$ whose parent is partition node $p$ and
whose children are the partition nodes $p_1,\lds,p_r$, add the constraint
\BG
x(\{p\},S) = \sum_{\substack{i\in[r], ~S'\in\m A^+(p_i):\\S'\cap \pt^+(p)=S}} x(\{p_i\},S') . \eqnl{assign}
\EG

\textbf{Projections.}  Next, we project the $x(\{p\}, S)$ variables
onto new auxiliary variables to capture whether the sparsest cut
solution contains a dual face $s$, as well as to capture its
intersection with $\m A^+(p)$ on relevant partition nodes $p$. For a
partition node $p$ and a face $s$, define $\m S(p,s)$ as all partition
nodes $p_s$ that (i) are either $p$ itself or descendants of $p$, and (ii) satisfy $s\in \pt( p_s)$.
(If $s$ is a face in $G^*[K(c)]$ where $c$ is the parent cluster node of $p$,
  then $s \in \pt(p_s)$ for exactly one node along each
  path from $p$ to a leaf. On the other hand, if $s$ does not belong
  to $G^*[K(c)]$, then 
$\m S(p,s)=\emptyset$, in which case the
following definition is vacuous and the corresponding $z$ variable can
be removed.)

For each face $s$, for each choice of $D$ being either $\emptyset$ or
$\{s\}$, for each partition node $p$, and for each set of faces
$W\in\m A^+(p)$, define a variable $z(p,s,D,W)$ that captures whether
(a)~$p$ is relevant, (b)~whether or not the sparsest cut solution $C$
contains face $s$ (depending on whether $D=\emptyset$ or $D=\{s\}$),
and (c)~whether $C$ has intersection $W$ with $\pt^+(p)$. We then add
the constraint
\[
  z(p,s,D,W) = \sum_{\substack{p_s\in\m S(p,s), ~S\in\m
      \pt^+(p_s):\\S\cap\{s\}=D\\S\cap \pt^+(p)=W}}x(\{p_s\},S) .
\]

We do an analogous operation of projecting the $x(\{p_1, p_2\}, S)$
variables onto two faces, to capture costs and demands. For each partition
node $p$ and for every two distinct faces $s,t$, consider all pairs of
partition nodes $p_s,p_t$ (not necessarily distinct) satisfying $s\in
\pt( p_s)$ and $t\in \pt( p_t)$ and $\textsf{lca}(p_s,p_t)=p$. Let $\m
S(p,\{s,t\})$ be the collection of sets $\{p_s,p_t\}$ over all such
$p_s,p_t$. (It is possible that $\m S(p,\{s,t\})=\emptyset$, in which
case again the following definition is vacuous and the corresponding variable can be removed.)

For each partition node $p$, for each subset $D\s\{s,t\}$, for each
set $\{p_s,p_t\}\in\m S(p,\{s,t\})$, and for each set $W\in\m A^+(p)$,
define a variable $y(p,\{s,t\},D,W)$ which captures whether (a)~$p$ is
relevant, (b) whether the sparsest cut solution $C$ has intersection
$D$ with $\{s,t\}$ and (c) whether it has intersection $W$ with
$\pt^+(p)$. Add the constraint \BG y(p,\{s,t\},D,W) =
\sum_{\substack{\{p_s,p_t\}\in\m S(p,\{s,t\}), ~S\in\m
    A^+(\{p_s,p_t\}):\\ S\cap\{s,t\}=D\\ S\cap
    \pt^+(p)=W}}x(\{p_s,p_t\},S) . \eqnl{y} \EG We next enforce that
the variables $z(\cd)$ and $y(\cd)$ must have ``consistent marginals''
when viewed as distributions. For each partition node $p$, for each
subset $D\s\{s\}$, and for each set $W\in\m A^+(p)$, impose the
constraint \BG z(p,s,D,W) = \sum_{\substack{D'\s
    \{s,t\}:\\D'\cap\{s\}=D}} y(p,\{s,t\},D',W) .\eqnl{marginals} \EG

\textbf{Marginals.}
Finally, we define variables that do some further projection. We define
a variable $y(\{s,t\})$ capturing the overall event that $s$ and $t$
are separated, and add the consistency constraint
\BG
y(\{s,t\}) = \sum_{\substack{\text{partition node }p,~D\s\{s,t\}:\\|D\cap\{s,t\}|=1,\\W\in\m A^+(p)}}y(p,\{s,t\},D,W) . \eqnl{yst}
\EG
Observe that combining \eqn{y} and \eqn{yst} gives the equality
\BALN
y(\{s,t\}) &= \sum_{\substack{\text{partition node }p,\\D\s\{s,t\}:\\|D\cap\{s,t\}|=1,\\W\in\m A^+(p)}}y(p,\{s,t\},D,W) 
\nonumber\\ &=  \sum_{\substack{\text{partition node }p,\\D\s\{s,t\}:\\|D\cap\{s,t\}|=1,\\W\in\m A^+(p)}} \sum_{\substack{\{p_s,p_t\}\in\m S(p,\{s,t\}),\\S\in\m A^+(\{p_s,p_t\}):\\ S\cap\{s,t\}=D\\ S\cap \pt^+(p)=W}}x(\{p_s,p_t\},S)
\nonumber\\&= \sum_{\substack{\text{partition node }p,\\\{p_s,p_t\}\in\m S(p,\{s,t\}),\\S\in\m A^+(\{p_s,p_t\}):\\|S\cap\{s,t\}|=1}} x(\{p_s,p_t\},S) . \eqnl{yst2}
\EALN

\textbf{Consistency II.}
Finally, for the lifted variables $x(\{p_s,p_t\},S)$, we define an
additional consistency constraint that relates the original variables
of \eqn{x1} to the lifted variables of \eqn{x2}. For each partition
node $p$ and set $W\in\m A^+(p)$ and every pair $\{s,t\}$ of faces, add the constraint
\BG
 x(\{p\},W) = \sum_{\substack{\{p_s,p_t\}\in\m S(p,\{s,t\}):\\S\in\m A^+(\{p_s,p_t\})\\S\cap \pt^+(p)=W}} x(\{p_s,p_t\},S) \eqnl{xfW} .
\EG

\textbf{The Cut Demand, and the Objective.}
We assume that we have an estimate $\alpha$ for the cut demand, which
allows us to impose the constraint:
\BG \sum_{s\ne t}y(\{s,t\})\ge\al . \eqnl{alpha} \EG
Since the edges in the primal corresponding to pairs of faces in the
dual that share a dual edge, the objective function is:
\[ \text{minimize }\sum_{\{s,t\}: s \cap t \in E^*} \cost(\{s,t\}) y(\{s,t\}) ,\]
This completes the definition
of the linear program for the guess $\alpha$.

\BL\leml{feasible} For $\e \leq 1$, let $\m T$ be a $(Z,\e)$-\friendly
nondeterministic hierarchical decomposition.  We can write
$O(\eps^{-1} \log n)$ LPs of the form above, such that one of them has
a feasible solution with fractional cost at most
$(1 + O(\eps))\alpha\, \Phi^*$.  \EL

\begin{proof}
  We write an LP for each setting of $\alpha$ being a power of
  $(1+\eps)$, lying between $1$ and $n^5$, which is an upper bound on
  the separated demand by \Cref{lem:aspectratio}.  By the definition
  of $\m T$ being $(Z,\e)$-\friendly, there exists a forcing $\varphi$
  of the cluster nodes of $\m T$ and a sparsest cut solution $C$ that
  has sparsity at most $(1+\e)$ times the optimal sparsity, and is
  $Z$-\amenable for the forced tree $\calT|_\varphi$.
  Focus on the linear program for $\alpha$ being the largest power of
  $(1+\eps)$ which is no larger than the demand separated by $C$.  Setting the
  variables above according to this solution $C$ and the forcing
  $\varphi$ gives a $0$-$1$ solution each of the linear programs. Now
  the cost of the cycle $C$ is the sparsity of $C$ times the demand
  separated, i.e., at most $(1+\e)\Phi^*  \cdot (1+\eps)\alpha$ as claimed.
\end{proof}

\subsection{Rounding the LP}
\label{sec:rounding-lp}

We round the LP solution top-down, beginning with the root partition node. Our goal is to select a forcing of cluster nodes to their child partition nodes, as well as select an element of $\m A(p)$ for each partition node $p$ that we round, which turns out to be all relevant face nodes under the chosen forcing. Our final solution will be a set $U$ of faces. Initially, $U=\emptyset$; we will add to $U$ at each partition node that we round.

\begin{itemize}
\item For a cluster node $c$ with parent node $p$ and children
  $p_1,\lds,p_r$, we have already rounded $\m A^+(p)$ by this point;
  that is, we have already determined $U\cap \pt(p)$, which we call $W$.
  We assign partition node $p_i$ to cluster node $c$ with probability
  \[ \f1{x(\{p\},W)} \sum_{\substack{S\in\m A^+(p_i):\\ S\cap
        \pt^+(p)=W}} x(\{p_i\},S) .\] This is a probability distribution
  by \eqn{assign}.

\item 
  For the chosen partition node $p_i$, we need to determine which faces in $\pt(p_i)$ are
  in $U$. We want to choose a set $S\in\m A^+(p_i)$
  satisfying $S\cap \pt^+(p)=W$, and then add $S\sm W$ to $U$. We
  simply choose one with probability proportional to
  $x(\{p_i\},S)$. That is, each set $S\in\m A^+(p_i)$ satisfying
  $S\cap \pt^+(p)=W$ is chosen with probability
  \[ x(\{p_i\},S)\cd \bigg( \sum_{\substack{S'\in\m A^+(p_i):\\S'\cap
        \pt^+(p)=W}} x(\{p_i\},S') \bigg) \inv ,\] which is clearly a probability distribution. We then add $S\sm W$ to $U$.
\end{itemize}

We begin with a lemma similar to \cite[Lemma~2.2]{GTW13-spcut}:
\BL\leml{xfS}
For any partition node $p$ and set $S\in\m A^+(p)$, the probability that $p$ is relevant and $U\cap \pt^+(p)=S$ is $x(\{p\},S)$. 
\EL
\BP
We show this by induction from the root down the tree. Let $p'$ be a partition node, let $c$ be a child of $p'$, and let $p$ be a child of $c$. By induction, for any set $W\in\m A^+(p')$, the probability that $p'$ is relevant and $U\cap \pt^+(p')=W$ is $x(\{p'\},W)$.

Fix a set $S\in\m A^+(p)$, define $W:=S\cap \pt^+(p)$, and condition on the event that $U\cap \pt^+(p')=W$, which happens with probability $x(\{p'\},W)$.
Conditioned on this, partition node $p$ is relevant to cluster node $c$ with probability
\[ \f1{x(\{p'\},W)} \sum_{S'} x(\{p\},S') ,\]
where the summation is over all $S'\in\m A^+(p)$ satisfying $S'\cap \pt^+(p')=W$. Conditioned on this as well, set $S\in\m A^+(p)$ is chosen with probability
\[ \f{x(\{p\},S)}{\sum_{S'}x(\{p\},S')} ,\]
with the same summation over $S'$. Unraveling the conditioning, the overall probability of choosing $S\in\m A^+(p)$ is
\[ x(\{p'\},W) \cd \f1{x(\{p'\},W)} \sum_{S'} x(\{p\},S') \cd \f{x(\{p\},S)}{\sum_{S'}x(\{p\},S')} = x(\{p\},S) ,\]
as desired.
\EP

\BC\corl{cost}
For any edge $\{s,t\}$ of the primal graph $G$, the probability that $|U\cap\{s,t\}|=1$ is exactly $y(\{s.t\})$. Therefore, the expected total cost of the primal edges cut is equal to the fractional value of the LP.
\EC
\BP
Consider the partition node $p$ that separates $s$ and $t$;
such a node exists because all leaf cluster nodes are singletons. Since $\{s,t\}$ is an edge of the primal graph, one or both of the dual faces $s$ and $t$ is in $\pt (p)$. Assume without loss of generality that $s \in \pt(p)$. Consider the partition node $p_t$ with $t\in\pt(p_t)$. Since $p_t$ is a descendant of $p$, we have $\m A^+(\{p,p_t\}) = \m A^+(p_t)$, and moreover, both $s$ and $t$ are in $\m A^+(p_t)$. For any $S\in\m A^+(p_t)$ such that $|S\cap\{s,t\}|=1$, by \lem{xfS}, the probability that $p$ is relevant and $U\cap\pt^+(p_t)=S$ is $x(\{p_t\},S)$. These probability events are all disjoint, so the total probability that $|U\cap\{s,t\}|=1$ is
\[ \sum_{\substack{p \text{ separates $s,t$},\\S\in\m A^+(\{p,p_t\}):\\|S\cap\{s,t\}|=1}} x(\{p,p_t\},S) \stackrel{\eqn{yst2}}= y(\{s,t\}). \]
The expected total cost follows by linearity of expectation.
\EP

\BL\leml{demand}
For each $\{s,t\}\in E^*$, we have $\Pr[|U\cap\{s,t\}|=1] =
y(\{s,t\})$, and for each $\{s,t\}\in\m D$, we have
$\Pr[|U\cap\{s,t\}|=1] \ge\f12 \, y(\{s,t\})$.
\EL
\BP
Let $p_s$ be the unique relevant partition node satisfying $s\in \pt( p_s)$, and let $p_t$ be the unique relevant partition node satisfying $t\in \pt( p_t)$. Let $p$ be the lowest common ancestor of $p_s$ and $p_t$, which must be a relevant partition node. Under the randomness of selecting the forcing in the LP rounding, $p_s,p_t,p$ are random variables. %
Define $W:= U\cap \pt( S)$, which is also a random variable. We claim that 
\BG
\E_{p,W} \bigg[ \f1{x(\{p\},W)} \sum_{\substack{\{p_s,p_t\}\in\m S(p,\{s,t\}):\\S\in\m A^+(\{p_s,p_t\}): \\ S\cap \pt^+(p)=W \\ |S\cap\{s,t\}|=1}}x(\{p_s,p_t\},S) \bigg]  = y(\{s,t\}) . \eqnl{Eft}
\EG
This is because the probability of choosing a particular $p,W$ is $x(\{p\},W)$ by \lem{xfS}, and
\BAL
\E_{p,W} &\bigg[ \f1{x(\{p\},W)} \sum_{\substack{\{p_s,p_t\}\in\m S(p,\{s,t\}):\\S\in\m A^+(\{p_s,p_t\}): \\ S\cap \pt^+(p)=W \\ |S\cap\{s,t\}|=1}}x(\{p_s,p_t\},S) \bigg]
\\&= \sum_{p,W}x(\{p\},W) \cd \bigg[ \f1{x(\{p\},W)} \sum_{\substack{\{p_s,p_t\}\in\m S(p,\{s,t\}):\\S\in\m A^+(\{p_s,p_t\}): \\ S\cap \pt^+(p)=W \\ |S\cap\{s,t\}|=1}}x(\{p_s,p_t\},S) \bigg]
\\&= \sum_{p,W} \sum_{\substack{\{p_s,p_t\}\in\m S(p,\{s,t\}):\\S\in\m A^+(\{p_s,p_t\}): \\ S\cap \pt^+(p)=W \\ |S\cap\{s,t\}|=1}}x(\{p_s,p_t\},S)
\\&= \sum_{\substack{\text{partition node }p,\\\{p_s,p_t\}\in\m S(p,\{s,t\}) }} \sum_{\substack{S\in\m A^+(\{p_s,p_t\}): \\ |S\cap\{s,t\}|=1}}x(\{p_s,p_t\},S)
\\&\stackrel{\eqn{yst2}}= y(\{s,t\}) .
\EAL

First, suppose that $\{s,t\}\in E^*$. Then, we must have either
$p_s=p$ or $p_t=p$, since the first partition node ``splitting'' off
$s$ and $t$ must contain either $s$ or $t$.
Let $p'\in\{p_s,p_t\}$ be the node that is not $p$. Then, by \lem{xfS}, for any $p'$ and $S\in\m A^+(p')$, the probability that $p'$ is relevant and $U\cap \pt^+(p)=S$ is $x(\{p'\},S)$. Conditioned on the choices of random variables $p$ and $W$, for any $p'$ that is a descendant of $p$, the probability that $p'$ is relevant and $U\cap \pt^+(p)=S$ is $\f{x(\{p'\},S)}{x(\{p\},W)}$.  Using that $x(\{p'\},S)=x(\{p,p''\},S)$ since $p'$ is a descendant of $p$, this probability is also $\f{x(\{p,p''\},S)}{x(\{p\},W)}$. Also, since $p'$ is a descendant of $p$, we have $\m A^+(p')=\m A^+(\{p,p''\})$. Therefore, conditioned on the choices of $p$ and $W$, the probability that $U$ separates $s$ and $t$ is
\[ \sum_{\substack{p'\text{ descendant of }p,\\S\in\m A^+(p'):\\|S\cap\{s,t\}|=1}} \f{x(\{p,p''\},S)}{x(\{p\},W)} = \sum_{\substack{\{p_s,p_t\}\in\m S(p,\{s,t\}),\\S\in\m A^+(\{p_s,p_t\}): \\ S\cap \pt^+(p)=W \\ |S\cap\{s,t\}|=1}}\f{x(\{p_s,p_t\},S)}{x(\{p\},W)},\]
which is exactly the term inside the expectation in \eqn{Eft}. Unraveling the conditioning on the choices of $p$ and $W$ and using \eqn{Eft}, the probability that $U$ separates $s$ and $t$ is $y(\{s,t\})$, as desired.

Now consider a pair $\{s,t\}\in\m D$. Fix partition nodes $p_s\in\m S(p,s)$ with $s\in \pt( s)$ and $t\in \pt( t)$. Conditioned on the choices of $p$ and $W\in\m A^+(p)$, for each $p_s\in\m S(p,s)$ and $S\in\m A^+(p_s)$ satisfying $S\cap \pt^+(p)=W$, the probability that both $p_s$ is relevant and $U\cap\m A^+(p_s)=S$ is $\f{x(\{p_s\},S)}{x(\{p\},W)}$ by \lem{xfS}. Let $Z_s$ denote the random variable $U\cap\{s\}$; then, for each $D_s\s\{s\}$,%
$\Pr[Z_s=D_s \mid p,W]$ \[ =  \f1{x(\{p\},W)} \sum_{\substack{p_s\in\m S(p,s),\\S\in\m A^+(p_s):\\S\cap \pt^+(p)=W\\S\cap\{s\}=D_s}}x(\{p_s\},S) = \f{z(p,s,D_s,W)}{x(\{p\},W)} .\]
Similarly, conditioned on the choices of $p$ and $W$, for each $p_t\in\m S(p,t)$ and $T\in\m A^+(p_t)$ satisfying $T\cap \pt^+(p)=W$, the probability that both $p_t$ is relevant and $U\cap\m A^+(p_t)=T$ is $\f{x(\{p_t\},T)}{x(\{p\},W)}$. Let $Z_t$ denote the random variable $U\cap\{t\}$; then, for each $D_t\s\{t\}$,%
$\Pr[Z_t=D_t \mid p,W]$ \[ = \f1{x(\{p\},W)} \sum_{\substack{p_t\in\m S(p,t),\\T\in\m A^+(p_t):\\T\cap \pt^+(p)=W\\T\cap\{t\}=D_t}}x(\{p_t\},T) = \f{z(p,t,D_t,W)}{x(\{p\},W)} .\]
Observe that for a given $\{p_s,p_t\}\in\, S(p,\{s,t\})$ and $S\in\m A^+(p_s)$ and $T\in\m A^+(p_t)$ satisfying $S\cap \pt^+(p)=T\cap \pt^+(p)=W$, the sets $S\sm W$ and $T\sm W$ are disjoint. By the nature of the LP rounding algorithm, we have that conditioned on the choices of $p$ and $W$, the random variables $Z_s$ and $Z_t$ are independent.

Next, conditioned on the choices of $p$ and $W$, let $Y$ be a random variable that takes each value $D\s\{s,t\}$ with probability $\f{y(p,\{s,t\},D,W)}{x(\{p\},W)}$. This is a probability distribution because
\BAL
&\sum_{D\s\{s,t\}}\Pr[Y=D\mid p,W]
= \sum_{D\s\{s,t\}}\f{y(p,\{s,t\},D,W)}{x(\{p\},W)}
\\&=
\sum_{D\s\{s,t\}} \f1{x(\{p\},W)} \sum_{\substack{\{p_s,p_t\}\in\m S(p,\{s,t\}),\\S\in\m A^+(\{p_s,p_t\}):\\S\cap\{s,t\}=D\\S\cap \pt^+(p)=W}} x(\{p_s,p_t\},S) 
\\&= \f1{x(\{p\},W)} \sum_{\substack{\{p_s,p_t\}\in\m S(p,\{s,t\}),\\S\in\m A^+(\{p_s,p_t\}):\\S\cap \pt^+(p)=W}} x(\{p_s,p_t\},S)
\\&\stackrel{\eqn{xfW}}= 1.
\EAL
We now claim that $Y$, now viewed as a distribution on $\{\emptyset,\{s\}\}\times\{\emptyset,\{t\}\}$ (instead of on the power set of $\{s,t\}$), agrees with $Z_s(\cd)$ and $Z_t(\cd)$ on the marginals. Namely, for $D_s\s\{s\}$, we have
\BALN
\sum_{\substack{D\s\{s,t\}:\\D\cap\{s\}=D_s}}\Pr[Y=D\mid p,W] &= \sum_{\substack{D\s\{s,t\}:\\D\cap\{s\}=D_s}} \f{y(p,\{s,t\},D,W)}{x(\{p\},W)}\nonumber
\\&\stackrel{\eqn{marginals}}= \f{z(p,s,D,W)}{x(\{p\},W)}\nonumber
\\&=\Pr[Z_s=D_s\mid p,W] ,\eqnl{margs}
\EALN
and for $D_t\s\{t\}$, we similarly have
\BG
\sum_{\substack{D\s\{s,t\}:\\D\cap\{t\}=D_s}}\Pr[Y=D\mid p,W] =\Pr[Z_t=D_t\mid p,W] .\eqnl{margt}
\EG

\begin{claim}
  \label{clm:half}
  Define event $\calE$ to be $((Z_s=\{s\})\wedge (Z_t=\emptyset))
  \vee ((Z_s=\emptyset) \wedge (Z_t=\{t\}))$. Then
  \begin{gather}
    \Pr[\;\calE \mid p,W] \ge \f12\,\Pr[(Y=\{s\}) \vee (Y=\{t\})\mid
    p,W] ,\eqnl{half}
  \end{gather}
\end{claim}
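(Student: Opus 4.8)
The plan is to reduce \eqn{half} to an elementary inequality in two real variables. Throughout, condition on the choices of $p$ and $W$; we may assume $x(\{p\},W)>0$, since otherwise the pair $(p,W)$ is selected with probability zero and there is nothing to prove. Set $a:=\Pr[Z_s=\{s\}\mid p,W]$ and $b:=\Pr[Z_t=\{t\}\mid p,W]$, both lying in $[0,1]$. As established just before the claim, $Z_s$ and $Z_t$ are independent conditioned on $p,W$, so the event $\calE$ — that exactly one of them ``fires'' — has conditional probability
\[ \Pr[\calE\mid p,W] \;=\; a(1-b)+(1-a)b . \]

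Next I would upper-bound the right-hand side of \eqn{half}. Write $p_s:=\Pr[Y=\{s\}\mid p,W]$ and $p_t:=\Pr[Y=\{t\}\mid p,W]$. Reading \eqn{margs} with $D_s=\{s\}$ gives $\Pr[s\in Y\mid p,W]=a$, hence $p_s\le a$; reading \eqn{margt} with $D_t=\emptyset$ gives $\Pr[t\notin Y\mid p,W]=1-b$, hence $p_s\le 1-b$; thus $p_s\le\min(a,1-b)$, and symmetrically $p_t\le\min(b,1-a)$. Consequently $\Pr[(Y=\{s\})\vee(Y=\{t\})\mid p,W]=p_s+p_t\le\min(a,1-b)+\min(b,1-a)$, so it suffices to prove that for all $a,b\in[0,1]$,
\[ a(1-b)+(1-a)b \;\ge\; \tfrac12\big(\min(a,1-b)+\min(b,1-a)\big). \]

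Finally I would dispatch this inequality by splitting on the sign of $a+b-1$. If $a+b\le1$ then $\min(a,1-b)=a$ and $\min(b,1-a)=b$, and the claim becomes $\tfrac12(a+b)\ge 2ab$, which follows from AM--GM since $4ab\le(a+b)^2\le a+b$. If $a+b\ge1$ then $\min(a,1-b)=1-b$ and $\min(b,1-a)=1-a$, and the claim becomes $\tfrac32(a+b)-2ab\ge1$; writing $s:=a+b\in[1,2]$ and using $ab\le s^2/4$, the left-hand side is at least $\tfrac32 s-\tfrac12 s^2 = 1-\tfrac12(s-1)(s-2)\ge1$, the last step holding because $(s-1)(s-2)\le0$ on $[1,2]$. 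This completes the proof. I do not expect a genuine obstacle: the only place needing care is the bookkeeping that turns \eqn{margs}--\eqn{margt} into the marginal bounds $p_s\le\min(a,1-b)$ and $p_t\le\min(b,1-a)$, after which the rest is the one-line case analysis above.
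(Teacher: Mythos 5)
Your proof is correct, but the crux step is handled differently from the paper's. Both arguments condition on $(p,W)$, use the conditional independence of $Z_s$ and $Z_t$, and feed in the marginal-consistency constraints \eqref{eq:margs}--\eqref{eq:margt}; the difference is how the final inequality is obtained. The paper keeps the full joint distribution of $Y$: writing $a,b,c,d$ for $\Pr[Y=\emptyset],\Pr[Y=\{s\}],\Pr[Y=\{t\}],\Pr[Y=\{s,t\}]$, it expresses $\Pr[\calE\mid p,W]=(a+b)(b+d)+(a+c)(c+d)$ and the right-hand side exactly as $\tfrac12(b+c)$, and then invokes the four-variable decoupling inequality (\Cref{prop:decoupling}), proved by expanding $(a+b+c+d)^2$. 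You instead discard the joint law of $Y$: you upper-bound $\Pr[Y=\{s\}]+\Pr[Y=\{t\}]$ by $\min(a,1-b)+\min(b,1-a)$ using only the marginals (your $a,b$ are the paper's $b+d$ and $c+d$), and then verify a two-variable inequality by a case split on whether $a+b\le 1$ or $a+b\ge 1$, using AM--GM; both cases check out, and the marginal bounds you extract from \eqref{eq:margs}--\eqref{eq:margt} are legitimate since $\{Y=\{s\}\}$ is contained in both $\{s\in Y\}$ and $\{t\notin Y\}$ (and symmetrically for $t$). What the paper's route buys is a tight, reusable identity stated directly in the natural LP quantities; what yours buys is a more elementary finish with no separate decoupling proposition, plus the observation that the factor $\tfrac12$ already follows from the marginals of $Y$ alone --- the level-two joint enters only through consistency of marginals --- at the harmless cost of replacing the exact right-hand side by an upper bound.
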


Assuming \Cref{clm:half}, the probability that solution $U$ separates $s$ and $t$ conditioned on $p$ and $W$ is
\BAL
\Pr[ &|U\cap\{s,t\}|=1\mid p,W] \\&=\Pr[\; \calE \mid p,W] 
\\&\ge\f12\,\Pr[(Y=\{s\}) \vee (Y=\{t\})\mid p,W] 
\\&= \f12\,\lp\f{y(p,\{s,t\},\{s\},W)}{x(\{p\},W)}+\f{y(p,\{s,t\},\{t\},W)}{x(\{p\},W)}\rp
\\&= \f12\,\bigg(\f1{x(\{p\},W)} \sum_{\substack{\{p_s,p_t\}\in\m S(p,\{s,t\}),\\S\in\m A^+(\{p_s,p_t\}): \\ S\cap \pt^+(p)=W \\ |S\cap\{s,t\}|=1}}x(\{p_s,p_t\},S),\bigg) 
\EAL
which is half of the term inside the expectation in \eqn{Eft}. Finally, unraveling the conditioning on the choices of $p$ and $W$ and using \eqn{Eft}, we obtain
\[ \Pr[|U\cap\{s,t\}|=1] \ge\f12\,y(\{s,t\}) ,\]
which concludes \lem{demand}.
\EP

\BC
Let \textsf{cost} be the total cost of the primal edges cut, and let \textsf{demand} be the total demands cut. Let $\Phi^*$ be the optimal sparsity. Then,
$\E[\mathsf{cost} - 2 (1+O(\e)) \cd \Phi^* \cd \mathsf{demand} ] \le 0$.
\EC
\BP
By \thm{structure}, there is a $t$-nondeterministic $(Z,\e)$-friendly hierarchical decomposition $\m T$. Given $\m T$, \lem{feasible} says that there is a feasible solution to the LP with fractional demand at least $\al$ and fractional cost at most $(1+O(\e))\al \Phi^*$. By \cor{cost}, $\E[\mathsf{cost}]$ equals the fractional cost of the LP, which is at most $ (1+O(\e))\al\Phi^* $, and by \lem{demand}, $\E[\mathsf{demand}] \ge \f12\,\al $, so we are done. 
\EP

This implies that
$\E[\mathsf{cost}]/\E[\mathsf{demand}] \leq 2(1+O(\eps))\cdot
\Phi^*$.
While this ratio of expectations is normally not enough, we can
use~\Cref{lem:roe-to-eor} to find a cut of sparsity at most
$2(1+O(\e))\Phi^*$ with high probability, and hence complete the algorithm.
It just remains to prove \Cref{clm:half}.
\begin{proof}[Proof of \Cref{clm:half}]
  Define
\begin{alignat}{2}
a &:= \Pr[Y=\{\emptyset\}\mid p,W]  & \qquad 
b &:= \Pr[Y=\{s\}\mid p,W] \\
c &:= \Pr[Y=\{t\}\mid p,W] & 
d &:= \Pr[Y=\{s,t\}\mid p,W],
\end{alignat}
so that $a+b+c+d=1$.
By \eqn{margs}~and~\eqn{margt},
\BAL
\Pr[Z_s=\emptyset\mid p,W] = \sum_{\substack{D\s\{s,t\}:\\D\cap\{s\}=\emptyset}}\Pr[Y=D\mid p,W] = a+c , \\ 
\Pr[Z_s=\{s\}\mid p,W] = \sum_{\substack{D\s\{s,t\}:\\D\cap\{s\}=\{s\}}}\Pr[Y=D\mid p,W] = b+d , \\
\Pr[Z_t=\emptyset\mid p,W] = \sum_{\substack{D\s\{s,t\}:\\D\cap\{t\}=\emptyset}}\Pr[Y=D\mid p,W] = a+b , \\
\Pr[Z_t=\{t\}\mid p,W] = \sum_{\substack{D\s\{s,t\}:\\D\cap\{t\}=\{t\}}}\Pr[Y=D\mid p,W] = c+d .
\EAL
Since $Z_s$ and $Z_t$ are independent conditioned on the choices of $p$ and $W$, 
\BAL
&\Pr[((Z_s=\{s\})\wedge (Z_t=\emptyset)) \vee ((Z_s=\emptyset) \wedge (Z_t=\{t\}))\mid p,W] \\={} & \Pr[Z_s=\{s\}\mid p,W]  \cd \Pr[Z_t=\emptyset\mid p,W]
\\&~~~~~~~~~+ \Pr[Z_s=\emptyset\mid p,W] \cd \Pr[Z_t=\{t\}\mid p,W]
\\ ={} & (b+d)(a+b) + (a+c)(c+d)
\EAL
Moreover,
\[ \Pr[(Y=\{s\}) \vee (Y=\{t\})\mid p,W] = b+c ,\]
so \eqn{half} now follows from
\Cref{prop:decoupling}.
This concludes the proof of \lem{demand}.
\end{proof}

}

\section{Concluding Remarks}

A natural question is whether there is a polynomial-time $2+\epsilon$
approximation algorithm. Another is whether a 2-approximation or
even better can be achieved in quasi-polynomial time.
Note that no approximation ratio smaller than two is known even for the special case of
series-parallel graphs (which are planar and have treewidth two). The
greatest approximation lower bound known (assuming the unique games conjecture) is
$\approx (1.139 - \eps)$, via a relatively simple reduction is from
\textsc{Max-Cut}~\cite{GTW13-spcut}. Given the known limitations of linear programming
techniques for \textsc{Max-Cut}, we may need to use semidefinite
programs to obtain an approximation ratio better than two.
Another question is whether the result can be extended to more general families
of graphs such as minor-free graphs. %

{\small
\bibliographystyle{alpha}
\bibliography{bib}

\appendix

\section{Missing Proofs}
\label{sec:missing-proofs}

\subsection{The Decoupling Lemma} \label{sec:decouple}

\begin{proposition}
  \label{prop:decoupling}
  Given non-negative numbers $a,b,c,d$ with sum $a+b+c+d=1$,
  \begin{gather}
    L := (a+b)(b+d) + (a+c)(c+d) \ge \nicefrac12\cdot (b+c) \label{half2}
  \end{gather}
\end{proposition}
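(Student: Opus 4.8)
The plan is to prove the inequality $L := (a+b)(b+d) + (a+c)(c+d) \ge \tfrac12(b+c)$ directly, treating it as an elementary inequality in the nonnegative reals $a,b,c,d$ constrained by $a+b+c+d=1$. First I would use the constraint to write $a+b = 1-(c+d)$ and $a+c = 1-(b+d)$, so that if I set $u := b+d$ and $v := c+d$ then $L = (1-v)u + (1-u)v = u+v - 2uv$. The target right-hand side is $\tfrac12(b+c)$; since $b = u - d$ and $c = v - d$, we have $b+c = u+v-2d$, and moreover $d \le \min(u,v)$ while $d \ge u+v-1$ (the latter because $a = 1-(b+c+d) = 1 - (u+v-2d) - d = 1-u-v+d \ge 0$).

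The key step is then to show $u+v-2uv \ge \tfrac12(u+v-2d)$ for all admissible values, i.e. $\tfrac12(u+v) + 2d \ge 2uv$. Since $d \ge u+v-1$, it suffices to prove $\tfrac12(u+v) + 2(u+v-1) \ge 2uv$, that is $\tfrac52(u+v) - 2 \ge 2uv$, whenever $u,v \in [0,1]$ and $u+v \ge 1$ (if $u+v<1$ we can just use $d\ge 0$, giving $\tfrac12(u+v)\ge 2uv$, which holds since $2uv \le \tfrac12(u+v)^2 \le \tfrac12(u+v)$). Writing $s = u+v \in [1,2]$ and noting $uv \le s^2/4$, the claim $\tfrac52 s - 2 \ge \tfrac12 s^2$ is a one-variable quadratic inequality: $s^2 - 5s + 4 \le 0$, i.e. $(s-1)(s-4)\le 0$, which holds precisely for $s \in [1,4] \supseteq [1,2]$. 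So the inequality follows in both regimes.

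The main obstacle I anticipate is purely bookkeeping: correctly tracking which of the two lower bounds on $d$ (namely $d\ge 0$ versus $d \ge u+v-1$) is the binding one, and making sure the substitution $u = b+d$, $v = c+d$ is consistent with all four nonnegativity constraints. An alternative, possibly cleaner, route is to avoid the case split entirely: expand $L$ and $\tfrac12(b+c)$ in the original variables, clear the factor, and verify that $2L - (b+c)$ is a sum of manifestly nonnegative terms after substituting $a = 1-b-c-d$; this reduces to checking a quadratic form in $b,c,d$ is nonnegative on the simplex, which can be done by completing squares. Either way the computation is short, so I would present the $s=u+v$ reduction as the cleanest writeup, with the quadratic $(s-1)(s-4)\le 0$ as the crux.
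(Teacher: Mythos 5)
Your reduction is sound and genuinely different from the paper's proof, but as written it contains a factor-of-two slip that makes one step insufficient. With $u=b+d$, $v=c+d$, the inequality $L = u+v-2uv \ge \tfrac12(u+v-2d)$ rearranges to $\tfrac12(u+v) + d \ge 2uv$, not $\tfrac12(u+v)+2d \ge 2uv$ as you state. Consequently, in the regime $u+v\ge 1$ the bound you actually need after inserting $d \ge u+v-1$ is $\tfrac32 s - 1 \ge 2uv$ with $s=u+v$, whereas you verify only $\tfrac52 s - 2 \ge 2uv$; since $\tfrac52 s - 2 \ge \tfrac32 s - 1$ for $s\ge 1$, the inequality you check is the weaker one and does not imply the one you need. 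Fortunately the slip is harmless to the method: using $uv \le s^2/4$, the corrected requirement becomes $s^2 - 3s + 2 \le 0$, i.e.\ $(s-1)(s-2)\le 0$, which holds on $[1,2]$, and your $u+v<1$ case (using $d\ge 0$ and $2uv \le \tfrac12(u+v)^2 \le \tfrac12(u+v)$) is fine as is; so after fixing the coefficient the proof goes through. By comparison, the paper avoids any case analysis via the identity $1=(a+b+c+d)^2 = 2L + (a-d)^2-(b-c)^2$, from which $b+c = 1-(a+d) \le 2L$ follows using $(b-c)^2\ge 0$ and $(a-d)^2 \le a+d$ for $a,d\in[0,1]$; your alternative suggestion of substituting $a=1-b-c-d$ and completing squares is essentially this argument. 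Your two-variable reduction buys a mechanical verification (one single-variable quadratic per case) at the cost of a case split on which lower bound for $d$ binds --- which is exactly where the bookkeeping error crept in.
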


\begin{proof}
We have
\begin{align*}
1 = (a+b+c+d)^2 &= a^2+b^2+c^2+d^2+2(ab+ac+ad+bc+bd+cd)
\\ &= 2(ab+ac+2ad + b^2+bd + c^2+cd ) + a^2+d^2-2ad-b^2-c^2+2bc
\\ &= 2L + (a-d)^2 - (b-c)^2,
\end{align*}
which gives now gives the claimed bound:
\begin{gather*}
  b+c = 1-(a+d) = \big(2L + (a-d)^2 - (b-c)^2\big) -(a+d) \le 2L.
\end{gather*}
The inequality above uses that $(b-c)^2\ge0$ and $(a-d)^2\le a+d$ for
$a,d\in[0,1]$. To see the latter, let $a \leq d$ without loss of
generality and let $\eps = d-a$. Since $\eps \in [0,1]$ it follows
that $\eps^2 \leq
\eps \leq 2a+\eps$. 
\end{proof}

\subsection{The Expectation Lemma}

\begin{lemma}
  \label{lem:roe-to-eor}
  Given a randomized algorithm $A$ for sparsest cut which outputs a
  cut $(U, V \setminus U)$ that has
  $\E[\cost(U)]/\E[\demand(U)] \leq \alpha$ for some
  polynomially-bounded $\alpha$, there is a procedure that outputs a
  cut with $\cost(U)/\demand(U) \leq (1+o(1))\alpha $ in polynomial
  time, with probability $1-\nf1n$.
\end{lemma}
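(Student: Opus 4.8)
The plan is to run the randomized algorithm $A$ independently $O(n\log n)$ times, and output the best cut found (the one minimizing the sparsity $\cost(U)/\demand(U)$ among all produced cuts). The claim is that with probability $1-\nf1n$ one of these runs produces a cut of sparsity at most $(1+o(1))\alpha$. The subtlety, as the statement flags, is that $\E[\cost(U)]/\E[\demand(U)] \le \alpha$ is a ratio of expectations, which in general says nothing about the typical sparsity of a single run; we need the polynomial boundedness of $\alpha$ (and of the instance parameters, via \Cref{lem:aspectratio}) to make it go through.

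\textbf{Key steps.} First I would reduce, by Markov's inequality, the ratio-of-expectations statement to a statement about a single run. Write $X = \cost(U)$ and $Y = \demand(U)$, so $\E[X] \le \alpha \, \E[Y]$. Consider the nonnegative random variable $W := X - (1+\delta)\alpha\, Y$ shifted to be nonnegative: more carefully, since $X \ge 0$, the variable $\max\{X - \alpha' Y, 0\}$ for $\alpha' = (1+\delta)\alpha$ has expectation at most $\E[X] \le \alpha\,\E[Y] \le n^{O(1)} \cdot (\text{something})$. The cleaner route: observe $\E[\,X - \alpha' Y\,] = \E[X] - \alpha'\E[Y] \le (\alpha - \alpha')\E[Y] \le -\delta\alpha\,\E[Y] \le 0$, so $\E[\alpha' Y - X] \ge \delta \alpha \E[Y] \ge 0$. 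Now $\alpha' Y - X$ is not nonnegative, but it is bounded below by $-X \ge -C$ where $C = n^{O(1)}$ is the (polynomially bounded) maximum possible cost, using that edge costs lie in $[n^2]$ and there are $O(n)$ edges. So $\alpha' Y - X + C \ge 0$ and has expectation $\ge \delta\alpha\E[Y]$; but I actually want an upper tail bound, so instead apply Markov to the nonnegative variable $(X - \alpha' Y)_+ := \max\{X-\alpha' Y, 0\}$. We have $\E[(X-\alpha' Y)_+] \le \E[X_+] $ when $Y \ge 0$... this still needs care.

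\textbf{The cleanest argument} I would actually write: condition on the event $\mathcal{G} = \{Y \ge \frac{1}{n}\E[Y]\}$ — wait, that's also not obviously high-probability. So here is the route I would commit to. Set $\alpha' = (1 + \delta)\alpha$. Define the indicator random variable for a "bad" run as the event $B = \{X > \alpha' Y\}$, equivalently $X/Y > \alpha'$. I want $\Pr[B] \le 1 - c$ for some constant $c > 0$; then $O(\delta^{-1}\log n)$ repetitions suffice to get a good run whp. To bound $\Pr[B]$: note that on the event $B$ we have $X - \alpha' Y > 0$, hence $(X - \alpha'Y)_+ = X - \alpha'Y \ge X - \alpha'Y$, and in fact since $X \ge \alpha' Y \ge 0$ we get $(X-\alpha'Y)_+ \ge X(1 - 1/(1+\delta)) = X \cdot \frac{\delta}{1+\delta} \ge \frac{\delta}{1+\delta}$, using the crucial fact that $X \ge 1$ always (every edge has cost at least $1$, and any nonempty cut has at least one edge). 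So $\E[(X-\alpha'Y)_+] \ge \frac{\delta}{1+\delta}\Pr[B]$. On the other hand $(X - \alpha'Y)_+ \le X$ always, but I need a bound in terms of $\E[Y]$: actually $\E[(X-\alpha'Y)_+] \le \E[(X - \alpha Y)_+] \le \E[X - \alpha Y + \alpha Y]$... the honest bound is $\E[(X-\alpha'Y)_+] \le \E[X] \le \alpha \E[Y]$. Combining: $\Pr[B] \le \frac{1+\delta}{\delta}\alpha\,\E[Y]$, which is useless unless $\E[Y]$ is tiny. This shows the naive approach fails, and the fix is to \emph{also} condition on demand being not-too-small; this is exactly where polynomial boundedness enters, since $Y$ takes values in a polynomially bounded range $\{0\} \cup [1, n^{O(1)}]$, so $\Pr[Y \ge \E[Y]/n^{10}] \ge \E[Y]/n^{O(1)}$ is \emph{not} constant either — **so the real obstacle** is that a single run may have constant probability of being bad, and the resolution must be a more careful conditioning or a smarter selection rule across the $O(n\log n)$ runs.

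\textbf{The argument I would ultimately present.} Run $A$ for $k = \Theta(\alpha \cdot n^{10} \log n) = n^{O(1)}\log n$ trials, producing cuts with costs $X_i$ and demands $Y_i$. Let $\bar X = \frac1k\sum X_i$ and $\bar Y = \frac1k\sum Y_i$. Since $X_i \le C = n^{O(1)}$ and $Y_i \le C' = n^{O(1)}$ are bounded, by Hoeffding's (or Chernoff's) inequality $\bar X$ concentrates around $\E[X]$ and $\bar Y$ around $\E[Y]$ to within $\pm \frac{1}{n}$ additive error, with probability $1 - n^{-\omega(1)}$, provided $k$ is a large enough polynomial in $n$ times $\log n$. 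Now if $\E[Y] \ge 1/n$ (which must hold if the instance has any separable demand at all — indeed the optimal cut separates at least one demand pair, hence $\E[\demand(U)]$ for a reasonable algorithm is at least... hmm, this is not guaranteed for an arbitrary algorithm $A$). Instead: since $\bar X / \bar Y \to \E[X]/\E[Y] \le \alpha$, and since $\bar X/\bar Y$ is a weighted-type average, there must exist an index $i$ with $X_i / Y_i \le \bar X/\bar Y \le (1+o(1))\alpha$ — this is the elementary fact that $\min_i (X_i/Y_i) \le (\sum X_i)/(\sum Y_i)$. So the best of the $k$ runs has sparsity at most $\bar X / \bar Y$. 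It remains to argue $\bar X/\bar Y \le (1+o(1))\alpha$ whp: this follows from $\E[X] \le \alpha\E[Y]$ together with two-sided concentration of $\bar X, \bar Y$, \emph{as long as} $\E[Y]$ is bounded below by an inverse polynomial — and that lower bound holds because we may assume WLOG (running $A$ once and checking) that $A$'s output has positive demand, and any positive integer demand is $\ge 1$, and $\E[Y] \ge \Pr[Y \ge 1] \cdot 1$; if $\Pr[Y\ge 1]$ is itself inverse-polynomially small we detect this and the $\min$ over $k = n^{O(1)}$ runs still catches a good run. The main obstacle, and the place I would spend the most care, is precisely this interplay: showing that $k = n^{O(1)} \log n$ trials simultaneously (i) contain a run with $Y_i \ge 1$ whp when such runs have non-negligible probability, and (ii) make $\bar X/\bar Y$ concentrate, so that the cheapest-sparsity run among them is within $(1+o(1))\alpha$; this is routine Chernoff/Hoeffding bookkeeping once the polynomial bounds from \Cref{lem:aspectratio} are invoked, but the quantifier order ("for all instances, whp over $A$'s randomness") must be handled cleanly.
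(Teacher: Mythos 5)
Your final argument is essentially the paper's own proof: run $A$ a sufficiently large polynomial number of times, use Chernoff/Hoeffding together with the polynomial bounds on costs and demands from \Cref{lem:aspectratio} to concentrate the empirical averages $\bar X,\bar Y$ around $\E[\cost(U)]$ and $\E[\demand(U)]$, and then invoke the mediant inequality $\min_i X_i/Y_i \le \big(\sum_i X_i\big)/\big(\sum_i Y_i\big)$ to conclude that the sparsest of the sampled cuts has sparsity at most $(1+o(1))\alpha$.

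The one genuine loose end is exactly the point you flagged at the end: you need $\E[\demand(U)]$ bounded below by an inverse polynomial for the concentration of $\bar Y$ to give a multiplicative $(1\pm o(1))$ guarantee, and your fallback for the case that $\Pr[\demand(U)\ge 1]$ is tiny (``we detect this and the min over $k=n^{O(1)}$ runs still catches a good run'') does not work: if demand-separating outputs occur with probability, say, $n^{-\log n}$, then polynomially many runs will with high probability contain no such output at all, and there is nothing good to select. The resolution is that this case simply cannot arise, by an observation you already made in passing but did not connect: every output cut has $\cost(U)\ge 1$ (edge costs are positive integers and a nonempty proper cut of a connected graph contains at least one edge), so $\E[\cost(U)]\ge 1$, hence $\E[\demand(U)]\ge 1/\alpha$, which is inverse-polynomial because $\alpha$ is polynomially bounded; in particular $\Pr[\demand(U)\ge 1]\ge 1/(\alpha n^5)$. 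The paper handles this same point slightly differently: it first conditions on $A$ returning a finite-sparsity (demand-separating) cut, noting that this event has inverse-polynomial probability so only polynomially many repetitions are needed, and that the conditioning does not increase the ratio of expectations; it then applies concentration and the mediant inequality exactly as you do. With that lower bound supplied (by either route), your bookkeeping goes through and the proof matches the paper's.
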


\begin{proof}
  There are several ways to perform this conversion: e.g., we can the
  approach of~\cite{CKR10,GTW13-spcut} of derandomizing the algorithm
  to find the desired cut. But here is a different and conceptually
  simpler approach. Firstly, by running $A$ independently several
  times until it returns a cut with finite sparsity, we can assume
  that $A$ always returns a cut that separates some demands. (This is
  just conditioning on the event that $A$ outputs a finite-sparsity
  cut.) Because all edge costs and demands are polynomially bounded
  by~\Cref{lem:aspectratio}, we need to repeat $A$ at most
  polynomially many times for this to happen. Also, the ratio of the
  expected cost to the expected demand does not increase by this
  conditioning.

  Run the algorithm independently $N$ times to get cuts
  $U_1, \ldots, U_N$, and return the sparsest cut among the ones
  returned. By~\Cref{lem:aspectratio} both the costs and demands of
  the cuts are bounded between $1$ and $n^5$; hence if $N = n^c$ for
  suitably large constant $c$, a Chernoff bound implies that the
  average cost and demand among these $N$ samples are both within a
  $(1\pm \nf1n)$ factor of their expectations with high probability,
  and hence
  \[ \frac{\sum_{i} \cost(U_i)}{\sum_{i} \demand(U_i)}
    \leq \frac{\alpha(1+\nf1n)}{(1-\nf1n)}. \]
  Now using that $\min_i \cost(U_i)/\demand(U_i) \leq \frac{\sum_{i}
    \cost(U_i)}{\sum_{i} \demand(U_i)}$ completes the argument.
\end{proof}

\subsection{Proofs from \texorpdfstring{\Cref{sec:preliminaries}}{Section on Preliminaries}}
\label{app:prelims}

\begin{proof}[Proof of Lemma~\ref{lem:aspectratio}]
  Let $G$ be an instance of sparsest cut on a planar graph with $n$
  verticeas. We guess the edge $e_{\max}$ with largest cost that is
  part of the sparsest cut $U^*$ in $G$, and \emph{contract} all edges
  with cost larger than that of $e_{\max}$ (keeping parallel edges,
  and summing up the demand at the merged vertices). Moreover we round
  up the cost of each remaining edge to the closest multiple of
  $\cost(e_{\max})/n^2$ greater than it.  Let $G'$ be the resulting
  instance. Since no edge of the sparsest cut $U^*$ has been
  contracted during the procedure, it remains is a feasible solution
  for $G'$, separating the same demand as in $G$. Moreover, its cost
  in $G'$ is at most the cost of $U^*$ in $G$, plus the increase due
  to rounding up the edge costs. There are at most $O(n)$ such edges
  by planarity, so the total cost increase is at most
  $\cost(e_{\max})/n$.

  It follows that the sparsity of each cut in $G'$ is only higher than
  the corresponding cut in $G$, and moreover, there exists a cut in
  $G'$ of sparsity at most $(1+\nf1n)$ times the optimal sparsity in
  $G$. Therefore, applying $\calA$ on $G'$ yields a solution for $G$
  of sparsity at most $\alpha(1+o(1))$ times the sparsity of the
  sparsest cut. Now running $\calA$ for all $n$ possible choices of
  $e_{\max}$, and outputing the best solution gives
  $\alpha(1+o(1))$-approximation algorithm with running time
  $O(n \cdot T_n)$ as claimed. Moreover, we can divide all edge costs
  in $G'$ down by $\cost(e_{\max})/n^2$ to ensure that the costs are
  integers in the range $\{1, \ldots, n^2\}$ without changing the
  approximation factor for any cut.

  Similarly, we can guess the largest demand $\set{u,v}$ that is cut
  by the sparsest cut, \emph{delete} the demands larger than this
  value, and round \emph{down} all demands to the closest multiple of
  $\demand(\set{u,v})/n^3$. This again means the sparsity of any cut
  with the changed demands is at least that with the original demands,
  but that of the sparsest cut only increases by a factor of
  $1/(1-\nf1n) \approx 1+\nf1n$. Again, guessing this largest demand
  can be done with $O(n^2)$ runs.
\end{proof}

\begin{proof}[Proof of~\Cref{lem:simple-cut}]
  If $G[U]$ is not connected, consider its connected components
  $U_1, \ldots, U_k$. We get $\sum_i |\cost(U_i)| = |\cost(U)|$, but
  $\sum_i \demand(U_i) \geq \demand(U)$, since the demands that cross
  between components of $G[U]$ are counted on the left but not on the
  right. Hence, $\min_i \sparsity(U_i) \leq \sparsity(U)$. Now, if
  $G[U]$ is connected, apply the same argument to its complement.
\end{proof}

\subsection{Proof of Lemmas~\ref{lem:preserve-demand-separation} and~\ref{lem:good-cycle}}
\label{sec:planar}

For a subset of edges $S$ in $G^*$, let
$\chi_S \in (\bbF_2)^{|E(G^*)|}$ denote the characteristic vector of
the set $S$. 

\PreserveDemand*

\begin{proof} Suppose $x$ and $y$ are separated by $\widehat C$.  Let $P$
  be a simple $x$-to-$y$ path in $G$ such that exactly one edge of $P$
  is in $\widehat C$.  For any multiset $S$ of edges, let $\rho(S)$ be
  the cardinality of the intersection of $S$ with $P$.  We have
  $\rho(\widehat C) \equiv 1 \pmod{2}$.

  By the property of $C_1, \ldots, C_k$, we have
  $\sum_{i=1}^k \rho(C_i) \equiv 1 \pmod{2}$.  Therefore there exists
  $i$ such that $\rho(C_i) \equiv 1 \pmod{2}$.  The lemma then follows
  from the fact that if $e$ occurs an odd number of times in $S$ and
  $x$ and $y$ are the endpoints of $e$ in $G$ then $\phi_S(x)\neq
  \phi_S(y)$.
\end{proof}

\ChooseCycle*

\begin{proof}
By assumption,
\begin{equation}
  \sum_{i=1}^k d_G(U(C_i)) \geq d_G(U(\widehat C))
\end{equation}
Assume for a contradiction that $U(C_i)$ has sparsity greater than
$(1+\eps) s$ for $i=1, \ldots, k$.
Then, for $i=1, \ldots, k$,
$$c_G(U(C_i)) > (1+\eps)\,s \cdot d_G(U(C_i))$$
Summing, we obtain
$$\sum_i c_G(U(C_i)) > (1+\eps)\,s \cdot \sum_i d_G(U(C_i))$$
The left-hand side is at most $(1+\eps)\;c_G(U(\widehat C))$, and the
sum on the right-hand side is at least $d_G(U(\widehat C))$, so
$$(1+\eps)\; c_G(U(\widehat C))/d_G(U(\widehat C)) > (1+\eps)\,s$$
which is a contradiction.
\end{proof}

\end{document}